\newtheorem{theorem}{Theorem}[section]
\newaliascnt{lemma}{theorem} \newtheorem{lemma}[lemma]{Lemma}
\newaliascnt{corollary}{theorem}
\newtheorem{corollary}[corollary]{Corollary}
\newaliascnt{definition}{theorem}
\newaliascnt{observation}{theorem}
\newtheorem{observation}[observation]{Observation}
\title{The Graphs of Stably Matchable Pairs}
\author{David Eppstein\\
Computer Science Department, University of California, Irvine}
\date{ }
\begin{document}

\maketitle
\begin{abstract}
We study the graphs formed from instances of the stable matching problem by connecting pairs of elements with an edge when there exists a stable matching in which they are matched. Our results include the NP-completeness of recognizing these graphs, an exact recognition algorithm that is singly exponential in the number of edges of the given graph, and an algorithm whose time is linear in the number of vertices of the graph but exponential in a polynomial of its carving width. We also provide characterizations of graphs of stably matchable pairs that belong to certain classes of graphs, and of the lattices of stable matchings that can have graphs in these classes.
\end{abstract}

\section{Introduction}

A stable matching instance, in its most basic form, consists of two sets of elements to be matched (for instance, students and residencies), with each element in one set having preferences that can be described as a linear ordering of the elements of the other set. In a slightly more general form, the form we consider here, an element may 
prefer remaining unmatched over being matched to some undesired elements. This can be expressed by making each preference ordering be a linear ordering of a subset of elements, removing the undesired elements from this ordering.
A matching is \emph{stable} when no matched element would prefer being unmatched over its assigned match, and no pair of elements would both prefer being matched to each other over their assigned outcomes. The Gale--Shapley theorem, which applies equally well in this more general setting, states that a stable matching exists; one such matching can be found in time linear in the total length of the preference orderings by the Gale--Shapley algorithm~\cite{GalSha-AMM-62}.

A given instance of stable matching may have many stable matchings, forming a distributive lattice in which the matching found by the Gale--Shapley algorithm is the top element. Although this lattice may consist of exponentially many matchings~\cite{KarGhaWeb-STOC-18}, it has a concise description as a partially ordered set of alternating cycles, called \emph{rotations}, which can be constructed in polynomial time~\cite{IrvLea-SICOMP-86}. A given pair of elements participates in at least one stable matching if and only if it is either part of the top stable matching found by the Gale--Shapley algorithm, or part of one of these rotations. Therefore, from a given instance of stable matching, we may construct in polynomial time a bipartite graph, the graph of pairs that can be matched to each other in at least one stable matching.

In this work, we ask: can we reverse this process? Given a bipartite graph, can we determine whether it comes from an instance of stable matching in this way? Can we construct a stable matching instance that has a given graph as its graph of stably matchable pairs? What structural properties must such a graph have? For instance, by the rural hospitals theorem~\cite{McVWil-BIT-70,Rot-Econ-86} we know that every stable matching has the same set of matched elements; in graph-theoretic terms, this means that removing the isolated vertices from the graph of stably matchable pairs (the elements that cannot be stably matched) leaves a balanced bipartite graph (one with equal numbers of vertices on each side of the bipartition), and this graph is a \emph{matching-covered graph} meaning that each of its edges participates in a perfect matching. Matching-covered balanced bipartite graphs can be recognized efficiently, in the same asymptotic time as finding a single perfect matching~\cite{Reg-AAAI-94,Tas-TCS-12}. Are these necessary conditions sufficient? Is every matching-covered balanced bipartite graph the graph of stably matchable pairs of some stable matching instance?

\begin{figure}[t]
\centering\includegraphics[width=0.8\textwidth]{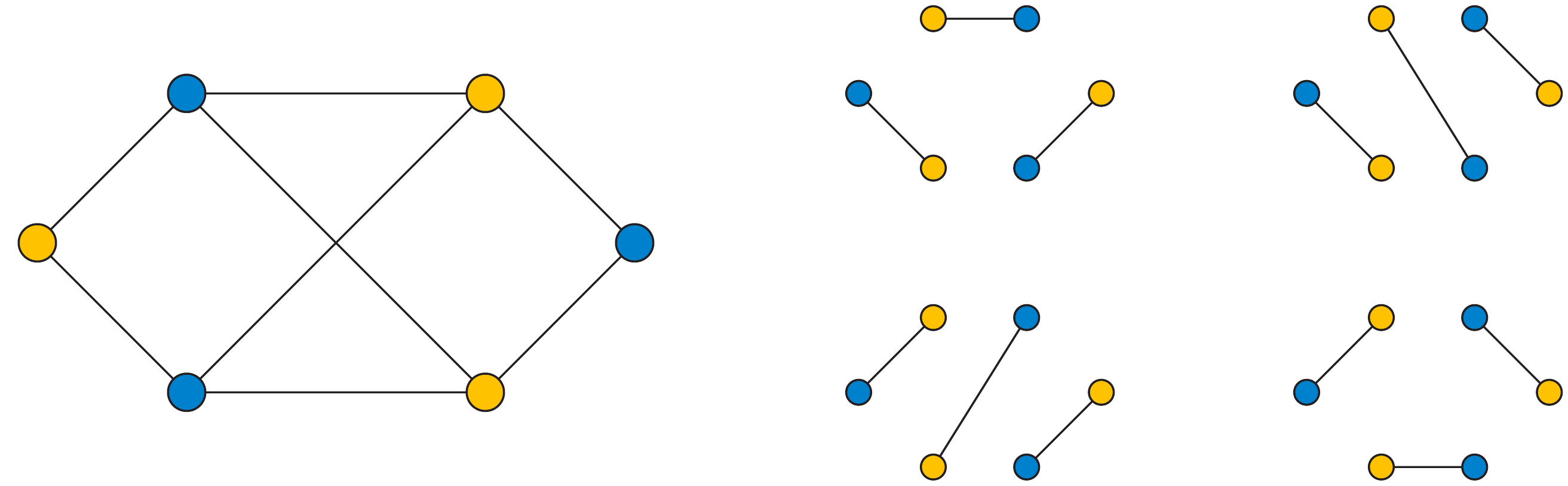}
\caption{$K_{3,3}-e$ (left) is matching-covered by its four perfect matchings (right) but it cannot be a graph of stably matchable pairs (\autoref{obs:k33-e}).}
\label{fig:k33-e}
\end{figure}

\subsection{New results}

The example of $K_{3,3}-e$ shown in \autoref{fig:k33-e} answers some of these questions: it is a balanced, matching-covered bipartite graph, but (as we will demonstrate later in \autoref{obs:k33-e}) cannot be generated as the graph of stably matchable pairs of any stable matching instance. More generally, in this paper we provide the following results:

\begin{itemize}
\item In \autoref{sec:representable} we search for simple criteria that can determine in some cases whether a given graph is a graph of stably matchable pairs. We observe in \autoref{sec:regular} that these graphs have no forbidden induced subgraphs: every bipartite graph is an induced subgraph of a larger regular graph, and every regular graph is a graph of stably matchable pairs. We prove in \autoref{thm:eq-deg-match} that, for a subcubic graph to be a graph of stably matchable pairs, it is necessary that its degree-one vertices and that its degree-three vertices induce subgraphs that have perfect matchings, and sufficient that in addition its degree-two vertices induce a subgraph that has a perfect matching. We prove in \autoref{thm:outerplanar} that an outerplanar graph is a graph of stably matchable pairs if and only if it has no articulation vertex, and we characterize in \autoref{thm:grid} the rectangular grid graphs that are graphs of stably matchable pairs. More generally, \autoref{thm:product} concerns realizability of Cartesian product graphs. 

\item In \autoref{sec:lattice} we study the distributive lattices of stable matchings associated with graphs of stably matchable pairs, and we examine how the structure of the graph affects the structure of the lattice. We prove in \autoref{thm:dl2sub3} that every finite distributive lattice is the lattice of stable matchings of an instance whose graph of stably matchable pairs is subcubic, and in \autoref{thm:height2} we characterize the lattices of stable matchings of graphs meeting the sufficient condition of \autoref{thm:eq-deg-match}. In \autoref{thm:outerplanar-lattice} we characterize the lattices of stable matchings of subcubic outerplanar and subcubic series-parallel graphs, and in \autoref{thm:planar-lattice} we characterize the lattices of stable matchings of planar graphs.

\item In \autoref{sec:hardness} we study the computational complexity of recognizing graphs of stably matchable pairs. We prove in \autoref{thm:npc} that it is NP-complete to determine whether a given graph can be realized as a graph of stably matchable pairs, even when the graph is planar and subcubic.

\item In \autoref{sec:algorithms} we study algorithms for recognizing the graphs of stably matchable pairs, more efficiently than a brute-force search through all preference systems on the given graph. We provide algorithms that can test whether an arbitrary graph is a graph of stably matchable pairs, either in time singly exponential in the number of edges (\autoref{thm:single-exponential}), or in time that is fixed-parameter tractable in the carving width of the given graph (\autoref{thm:cw-alg}).
\end{itemize}

\subsection{Related work}

The graphs of stably matchable pairs for random instances with uniformly-random preferences have been investigated by Knuth, Motwani, and Patel, who proved that with $n$ elements on each side the number of edges in these graphs is $\Theta(n\log n)$ with high probability~\cite{KnuMotPit-RSA-90}, and by Pittel, Shepp, and Veklerov, who bounded the numbers of vertices with degree~$d$ in these graphs, for constant~$d$, with high probability~\cite{PitSheVek-SIDMA-07}. In contrast to these results, random instances for which the elements on one side have preference lists of bounded size produce graphs in which almost all vertices have degree one~\cite{ImmMah-STOC-05}.

The part of our work on the lattices of stable matchings, and the connections between lattice structure and graph structure, is based on the work of Gusfield, Irving, Leather, and Saks on the distributive lattices derived from stable matching instances. As they proved, every finite distributive lattice can be derived from a stable matching instance in this way. There is a polynomial time algorithm for reversing the transformation from matching instances to lattices, and constructing a stable matching instance that has a given distributive lattice as its lattice of instances~\cite{GusIrvLea-JCTA-87}.

\section{Preliminaries}

We summarize below known and standard results on the structure of the system of stable matchings of a given stable matching instance. For convenience we will follow the convention that the elements being matched are students $s_i$ and residencies $r_j$. For book-length surveys of the stable matching problem see \cite{GusIrv-89,Knu-97,Man-13}.
\begin{itemize}
\item By the rural hospitals theorem, every stable matching has the same matched elements. Therefore, the elements of the given instance may be partitioned into isolated elements (those that cannot participate in any stable matching), uniquely matched elements (elements that are always matched to the same other element), and non-uniquely matched elements (always matched, but with more than one possible match).
\item The stable matchings may be partially ordered, with $M\le M'$ in this order when every student prefers $M$ to $M'$ (or has the same match in both and is indifferent), or equivalently when every residency prefers $M'$ to $M$ or is indifferent. This partial ordering forms a distributive lattice, in which the join (least upper bound) $M\vee M'$ of two matchings $M$ and $M'$ is obtained by giving every residency its more-preferred match from $M$ and $M'$ and giving every student their least-preferred match from $M$ and $M'$. Symmetrically, the meet (greatest lower bound) $M\wedge M'$ of two matchings is obtained by giving every residency its least-preferred match and giving every student their most-preferred match. The join and meet operations are associative and can be extended in the obvious way from pairs of stable matchings to arbitrary sets of stable matchings.
\item The top element $\top$ of the lattice (the join of all stable matchings) has the property that every residency prefers $\top$  to every other stable matching (or is indifferent to the comparison), and every student prefers any other stable matching to $\top$ (or is indifferent to the comparison). However, this matching is not necessarily obtained by giving every residency its first choice, as this might not result in a matching. If a residency is not assigned its first choice in $\top$, then that first choice is not available to it in any stable matching. Symmetric statements apply to the first choices of students and the bottom element $\bot$ of the lattice.
\item If a pair $r_i$ and $s_j$ cannot be matched in any stable matching of a given instance, then the system of all stable matchings is unchanged by changing the preferences so that both $r_i$ and  $s_j$ prefer being unmatched to being matched to each other. Therefore, any stable matching instance can be converted to an equivalent instance with partial preferences that only include pairs that can be matched. Conversely, if an instance with partial preferences has the property that its stable matchings include a match for every element, then it is equivalent to an instance with full preferences in which the additional pairings added to make the preference system full are ranked below all of the original pairings.
\item If two stable matchings $M$ and $M'$ are adjacent in the distributive lattice of matchings (meaning that they are comparable, say as $M\le M'$, and there is no third matching $M''$ between them in the partial order with $M\le M''\le M'$) then their symmetric difference is a \emph{rotation}, a single cycle of matched pairs that alternates between pairs of $M$ and of $M'$.
\item By Birkhoff's representation theorem for distributive lattices~\cite{Bir-DMJ-37}, the lattice of stable matchings can be recovered uniquely from the partial order of \emph{join-irreducible} stable matchings, the matchings $M$ that cannot be formed as the join of any subset of stable matchings that does not include $M$ itself. According to this theorem, the stable matchings correspond one-for-one with the \emph{lower sets} of this partial order, subsets of join-irreducible stable matchings for which, if $M'$ belongs to the subset and $M\le M'$, then $M$ also belongs to the subset. The stable matching corresponding to a lower set is its join, and the lower set corresponding to a stable matching $M$ is the lower set of join-irreducible stable matchings that are below $M$ in the lattice. The top stable matching $\top$ corresponds to the lower set of all join-irreducible stable matchings, and the bottom stable matching $\bot$ corresponds to the empty lower set. It is possible for $\top$ to be join-irreducible, but $\bot$ cannot be, as it is the join of the empty set, which does not contain~$\bot$.
\item For each join-irreducible stable matching $M'$, there is a unique stable matching $M\le M'$ that is adjacent to $M'$: $M$ is the join of all stable matchings below $M'$. We may associate $M'$ with the rotation formed by $M$ and $M'$. This association is one-to-one: every rotation comes in this way from a unique join-irreducible stable matching. Therefore, the rotations can be partially ordered, inheriting the partial order of the corresponding join-irreducible stable matchings. Within each rotation, the edges alternate between \emph{upper edges} (pairs in $M'$) and \emph{lower edges} (pairs in $M$); the same upper-lower relation on the edges remains valid for every pair of stable matchings that differ by the same rotation.
\item The collection of all rotations and their partial order can be constructed from a given system of preferences in polynomial time~\cite{Gus-SICOMP-87}.
\item Define an \emph{extended rotation} to be either a rotation or one of the two stable matchings $\top$ and $\bot$. Consider every pair in $\top$ to be a lower edge in $\top$, and consider every pair in $\bot$ to be an upper edge in $\bot$. Then for every pair that participates in at least one stable matching, there is a unique extended rotation in which it is a lower edge, and a unique extended rotation for which it is an upper edge.
\item The stable matching corresponding to a lower set $L$ in the partial order of rotations consists of the pairs $(s_i,r_j)$ that are upper edges in one of the extended rotations of $L\cup\{\bot\}$ but are not lower edges in any of these extended rotations.
\end{itemize}

Based on these results, we define a \emph{rotation system} for given sets of students and residencies to be a partially ordered collection $C$ of cycles that alternate between students and residencies, together with two matchings $\top$ and $\bot$ from students to residencies, such that each pair $(s_i,r_j)$ belongs to either zero or two members of $C\cup\{\top,\bot\}$, and such that each cycle alternates between upper edges (belonging to $\top$ or to an element above the cycle in the partial order) and lower edges (belonging to $\bot$ or to an element below the cycle in the partial order. We call the elements of $C$ \emph{rotations} and the elements of $C\cup\{\top,\bot\}$ \emph{extended rotations}. The results above imply that every stable matching instance can be described in polynomial time by a rotation system, and that the graph of stably matchable pairs in the instance equals the graph of pairs participating in extended rotations.

Conversely, we have:

\begin{lemma}
\label{lem:rot-is-stable}
Every rotation system describes the collection of stable matchings for a system of partial preferences. If in addition the rotation system has the property that there are either no unmatched students, or no unmatched residencies, then it describes the collection of stable matchings for a system of complete preferences.
\end{lemma}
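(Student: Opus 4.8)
The plan is to reverse the standard construction: from the purely combinatorial data of the rotation system I would build explicit preference lists, and then show that the matchings the system assigns to lower sets are exactly the stable matchings of those preferences. The starting point is the description, quoted above, of the matching $M(L)$ associated to a lower set $L$, namely the pairs that are upper edges in some extended rotation of $L\cup\{\bot\}$ but lower edges in none.

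First I would build the preference lists from the \emph{chain structure} of the extended rotations around each element. For a matched student $s_i$, consider the extended rotations containing a pair $(s_i,\cdot)$. Using the defining property that every used pair lies in exactly two extended rotations, once as an upper edge and once as a lower edge, and that an upper-edge occurrence always sits strictly above its lower-edge occurrence in the partial order, I would show that these extended rotations form a single chain $\top > \rho_1 > \cdots > \rho_m > \bot$: the shared pairs thread into a path whose endpoints are the $\top$-match and the $\bot$-match of $s_i$, and there can be no extra cycle component, since following shared pairs strictly descends in the partial order and a cycle would give $\rho > \rho$. (This argument also forces $\top$ and $\bot$ to match the same set, the rotation-system analogue of the rural hospitals theorem, so that every element appearing in a rotation is matched in both.) I then list $s_i$'s partners from most to least preferred as its $\bot$-match down to its $\top$-match, so that the student improves going down the lattice, rank being unmatched below them exactly when $s_i$ is unmatched in $\bot$, and define residency lists symmetrically with the opposite orientation.

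Given this, for a lower set $L$ I would verify that $M(L)$ assigns $s_i$ the upper edge of the highest rotation of its chain lying in $L$ (its $\bot$-match if none lies in $L$, its $\top$-match if all do); because $L$ is a lower set and the rotations around each element form a chain, exactly one partner is selected per element, and since each rotation is an alternating cycle the matched set is preserved, so $M(L)$ is a matching with the same matched set as $\bot$, with $M(\emptyset)=\bot$ and $M(C)=\top$. For stability I would show no pair $(s_i,r_j)$ blocks: if $s_i$ prefers $r_j$ to $M(L)(s_i)$, then the rotation whose lower edge is $(s_i,r_j)$ lies in $L$, whereas if $r_j$ also prefers $s_i$ to its own match, the symmetric residency analysis forces that same rotation to lie outside $L$, a contradiction; the absence of an individually blocking unmatched element is immediate from how unmatched is ranked. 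The converse, that every stable matching $M$ of the constructed instance equals some $M(L)$, is the step I expect to be the main obstacle: here I would take $L$ to be the set of rotations ``applied'' by $M$ (each incident student matched at least as well as that rotation's lower edge), and the real work is to prove, using stability, that this notion is consistent across all students and residencies of a given rotation and that the resulting $L$ is down-closed, after which $M=M(L)$ follows from the computation above.

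Finally, for the second statement I would append all missing pairs at the bottom of every preference list to obtain complete preferences. Under the hypothesis that one entire side is matched in every stable matching, no appended pair can block: a blocking pair needs both endpoints to prefer it, but every element of the fully-matched side already holds a genuine partner it prefers to any appended pair. Hence the set of stable matchings is unchanged, and the completed instance realizes the same rotation system.
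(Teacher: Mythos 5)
Your forward direction is correct, and in places cleaner than the paper's: the chain analysis at each element (the components of the edge-sharing structure are paths from $\bot$ to $\top$, since traversal through shared pairs ascends strictly in the partial order and a cycle component would be self-contradictory), the explicit computation that $M(L)$ gives each element the upper edge of the highest member of its chain lying in $L\cup\{\bot\}$, and the resulting direct refutation of blocking pairs (the rotation in which $(s_i,r_j)$ is a lower edge would have to lie both inside and outside the lower set $L$) prove stability of every $M(L)$ without the induction on the size of $L$ that the paper uses. The genuine gap is the half you explicitly defer: that \emph{every} stable matching of the constructed instance equals some $M(L)$. This is not a routine verification to be filled in later; it is the crux of the lemma, and it is where the paper spends most of its effort. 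The paper's argument runs as follows: given a stable matching $M$ not of this form, let $R$ be the set of rotations having an upper edge in $M$ together with all rotations below these; $R$ is a lower set and generates a stable matching $M'\neq M$ with the same number of edges, so some cycle $C$ maximal in $R$ has some but not all of its upper edges in $M$; for an upper edge $(s_i,r_j)$ of $C$ belonging to $M$, student $s_i$ prefers the residency $r_j'$ of the adjacent lower edge, so stability forces $r_j'$ to be matched in $M$ to someone it prefers to $s_i$, which by maximality of $C$ in $R$ can only be its other neighbor $s_i'$ on $C$; iterating around $C$ places all upper edges of $C$ in $M$, a contradiction. Your plan (define $L$ as the rotations ``applied'' by $M$, then prove consistency and down-closedness) points in a workable direction, but the ``consistency'' you would need --- that stability forces all elements on a rotation to agree about whether it has been applied --- is exactly this cycle-chasing argument; labelling it ``the real work'' without supplying it leaves the lemma unproved.

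The same omission weakens your final paragraph. Showing that no appended pair can block proves only that the old stable matchings remain stable under the completed preferences; it does not prove that the set of stable matchings is \emph{unchanged}, because a stable matching of the complete instance could in principle \emph{contain} appended pairs as matched edges --- the issue is membership, not blocking. (A stable matching of the complete instance using only genuine pairs is automatically stable for the partial preferences, since completing the lists only adds potential blockers; so matchings using appended pairs are the one case needing an argument.) The paper handles this by observing that any stable matching of the complete instance must use at least one pair of the rotation system (otherwise every pair of $\bot$ would block) and then re-running its converse argument; so this step, too, rests on the machinery you have not built.
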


\begin{proof}
A rotation system allows the edges incident with any one student or residency to be ordered, by the partial order of the rotations in which those edges are upper edges. Each incident edge that is not in $\top$ is the lower edge of at least one rotation, and has the upper edge of the same rotation above it; similarly, each incident edge that is not in $\bot$ is the upper edge of at least one rotation, and has the lower edge of the same rotation below it. Therefore, this partial ordering on incident edges forms a collection of ordered sequences in which there is a unique bottom element and a unique top element, which can only happen when there is a single chain forming a total order on these incident edges.
Based on this ordering of incident edges, define a system of partial preferences, in which the preference ordering for each residency is the same as its ordering of incident edges, and in which the preference ordering for each student is the reverse of its ordering of incident edges.

The sets of pairs coming from a lower set $L$ of rotations, as the pairs that are upper in an extended rotation of $L\cup\{\bot\}$ but not lower in any of these extended rotations, are all matchings, and we claim that they are all stable for these preferences. This follows by induction on the size of $L$. If $L$ is empty, the resulting set of pairs is $\bot$ itself, and it is stable because every student gets their first choice. Otherwise, some rotation $R$ in $L$ is maximal among the partially ordered elements of $L$, and by induction the matching obtained by removing $R$ from $L$ is stable. Then changing this matching by its symmetric difference with $R$ improves the preference of every residency whose matching is changed, but worsens the preference of every student whose matching is changed. This cannot introduce any new unstable pair $(s_i,r_j)$, because if $s_i$ prefers this pair, then it would already have the same preference in the stable matching prior to symmetric differencing with $R$, implying that $r_j$ did not prefer this pair and still does not prefer it.

Conversely, we claim that every stable matching for these preferences is generated in this way. If a matching $M$ is not generated in this way, but covers the same elements as $\top$ and $\bot$, then consider the set $R$ of rotations of the rotation system that contain an edge of $M$ as one of their upper edges, or are below one of these rotations. This is a lower set, and generates a stable matching $M'\ne M$. Because $M$ and $M'$ have the same number of edges, it follows that there is a cycle $C$ that is maximal in the partial order on $R$, in which at least one upper edge $(s_i,r_j)$  of $C$ is included in $M$ but not all are. However, for such an edge, $s_i$ prefers its other neighbor $r_j'$ (the adjacent lower edge in $C$). In order to prevent $(s_i,r_j')$ from being an unstable pair for $M$, $r_j'$ must be matched to an element it prefers to $s_i$, and this match can only be its other neighbor $s_i'$ in $C$. Any other match for $r_j'$ would either belong to an extended rotation lower than $C$ in the partial order, which $r_j'$ would not prefer to $s_i'$, or to an extended rotation higher than $C$, contradicting the choice of $C$ as being maximal in $R$.
It follows that the upper edge $(s_i',r_j')$ of $C$ also belongs to $M$. But repeating this argument step-by-step around $C$ shows that all upper edges of $C$ belong to $M$, contradicting the choice of $C$ as being a rotation whose upper edges are not all included in $M$. This contradiction establishes our claim that the given set of preferences does not generate any extra stable matchings beyond the ones coming from the given rotation system.

To obtain a full preference system from this partial preference system, when there are no unmatched students or no unmatched residencies, rank every added pairing below the pairings in the partial preference system. For the case that there are no unmatched students, $\bot$ is stable because it gives every student their first choice, establishing the base case of the induction. Every stable matching must have at least one matched pair belonging to the rotation system, because otherwise every pair in $\bot$ would be an unstable pair. The remaining steps of the proof, showing by induction that every matching from the rotation system is stable and showing by contradiction that no other stable matching can use a matched pair from the rotation system,  go through as before. The case that there are no unmatched residencies is symmetric and follows by reversing the roles of students and residencies (also reversing the comparisons in the partial order of the rotation system).
\end{proof}

For completeness we also include:

\begin{observation}
\label{obs:necessary}
Every graph of stably matchable pairs has the property that every connected component of more than two vertices is matching-covered and has a 2-factor (a 2-regular subgraph covering all of its vertices).
\end{observation}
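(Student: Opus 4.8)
The plan is to verify the two asserted properties separately, in each case exploiting the fact that the matchable pairs are exactly the pairs appearing in stable matchings, together with the rural hospitals theorem, which guarantees that every stable matching covers the same set of elements. Fix a connected component $K$ with more than two vertices.

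For the matching-covered property, I would argue as follows. Let $(s_i,r_j)$ be an arbitrary edge of $K$. By the definition of the graph of stably matchable pairs there is a stable matching $M$ in which $s_i$ and $r_j$ are matched. By the rural hospitals theorem $M$ covers every non-isolated element, so each matched pair of $M$ is an edge of the graph and $M$ restricts to a perfect matching of every component; in particular its restriction to $K$ is a perfect matching of $K$ containing $(s_i,r_j)$. Since the edge was arbitrary, $K$ is matching-covered.

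For the 2-factor, the idea is to take the union of the two extreme matchings $\top$ and $\bot$, restricted to $K$, and show it is a spanning $2$-regular subgraph. First I would observe that $\top$ and $\bot$ each restrict to a perfect matching of $K$: both are stable matchings, hence (again by the rural hospitals theorem) cover every vertex of $K$, and the partner assigned to a vertex of $K$ lies in $K$ because the matched pair is an edge of the graph. The crucial point is then that these two perfect matchings are \emph{edge-disjoint} on $K$. An edge lying in both $\top$ and $\bot$ would be a uniquely matched pair $(s_i,r_j)$: a student's partner ranges monotonically from its $\bot$-match (its best) to its $\top$-match (its worst) across the lattice, so since preferences are linear, coincidence of the two matches forces the same partner in every stable matching. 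But a uniquely matched student has no other stably matchable partner and therefore has degree one in the graph, as does its unique partner, so the two would form a component with exactly two vertices, contradicting $|K|>2$. Hence $K$ contains no uniquely matched pair and the restrictions of $\top$ and $\bot$ share no edge.

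Given edge-disjointness, each vertex of $K$ is incident to exactly one edge of $\top|_K$ and one distinct edge of $\bot|_K$, so $\top|_K\cup\bot|_K$ is a spanning $2$-regular subgraph of $K$, that is, a $2$-factor (necessarily a disjoint union of even cycles). I expect the main obstacle to be the edge-disjointness argument: one must be sure that sharing an edge between the top and bottom matchings really does force unique matching, and that unique matching in turn forces both endpoints to have degree one, so that they are isolated in a two-vertex component rather than attached to the rest of $K$. Both steps follow from the lattice ordering of matchings and from the fact that the edge set records precisely the stably matchable pairs, and they are exactly the places where the hypothesis $|K|>2$ is used.
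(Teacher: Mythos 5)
Your proof is correct and takes essentially the same approach as the paper: matching-coveredness follows from the definition together with the rural hospitals theorem, and the 2-factor is the union of $\top$ and $\bot$ restricted to the component, which are edge-disjoint precisely because a shared edge would be a uniquely matched pair whose two endpoints form a two-vertex component. You merely spell out in full the edge-disjointness argument that the paper's proof leaves implicit in the phrase ``(for a component of more than two vertices) must be disjoint.''
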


\begin{proof}
Being matching-covered (within any connected component) follows from the definition, that the graph is covered by the stable matchings of some stable matching instance, and from the rural hospitals theorem, according to which all stable matchings cover the same set of vertices. The 2-factor within any component can be found as the disjoint union of the top and bottom matchings within the component, which (for a component of more than two vertices) must be disjoint.
\end{proof}

\begin{figure}[t]
\centering\includegraphics[scale=0.45]{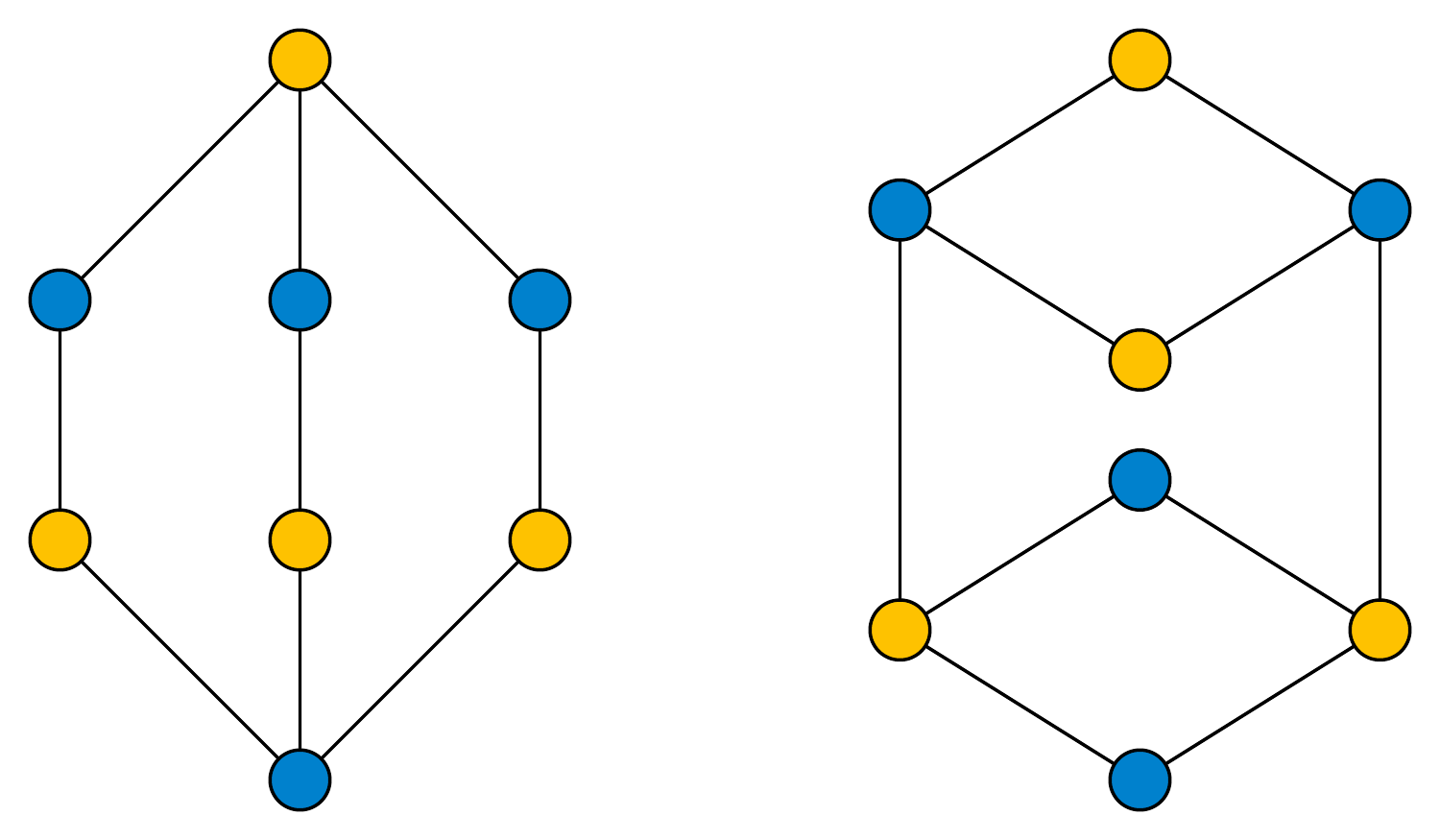}
\caption{Two balanced bipartite graphs that are not graphs of stably matchable pairs. The left graph is matching-covered but has no 2-factor; the right graph has a 2-factor (the union of the two rhombi through its top and bottom subsets of four vertices) but is not matching-covered.}
\label{fig:unmatchable}
\end{figure}

\autoref{fig:unmatchable} shows that the two conditions in \autoref{obs:necessary} are independent: both are biconnected subcubic balanced bipartite graphs, one is matching-covered but has no 2-factor, and the other has a 2-factor but is not matching-covered. Both conditions can be tested in polynomial time. With this foundational material established, we can turn to the example from \autoref{fig:k33-e}, which is matching-covered and has a 2-factor, but is still not a graph of stably matchable pairs.

\begin{observation}
\label{obs:k33-e}
The graph $K_{3,3}-e$ depicted in \autoref{fig:k33-e} is not the graph of stably matchable pairs of any stable matching instance.
\end{observation}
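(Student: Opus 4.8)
The plan is to assume that $K_{3,3}-e$ is the graph of stably matchable pairs of some instance, reconstruct its rotation system almost completely from the structural constraints in the preliminaries, and then derive a contradiction from the upper/lower labeling of its rotations. Write $s_1,s_2,s_3$ and $r_1,r_2,r_3$ for the two sides with the missing edge $(s_3,r_3)$, so that $s_3,r_3$ have degree two and $s_1,s_2,r_1,r_2$ have degree three. As a first reduction, since the graph is balanced, matching-covered, and has minimum degree two, no vertex is uniquely matched; so by \autoref{obs:necessary} the matchings $\top$ and $\bot$ are disjoint and their union is a $2$-factor. In a $3+3$ bipartite graph the only $2$-factor is a single $6$-cycle, so $\top\cup\bot$ uses six of the eight edges, leaving exactly two edges outside $\top\cup\bot$.

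Next I would count the rotations. Using the chain structure from the preliminaries (the edges at a degree-$d$ vertex are totally ordered with the $\top$-edge on top and the $\bot$-edge on the bottom, consecutive edges linked by distinct rotations), exactly $d-1$ rotations pass through a degree-$d$ vertex. Summing $d-1$ over all vertices counts each rotation once per vertex it visits, giving a total cycle length of $4\cdot 2+2\cdot 1=10$. Since rotations are alternating cycles of even length between four and six (there are only six vertices), the only way to total ten is two rotations: one $6$-cycle $A$ and one $4$-cycle $B$. The cycle $A$ has length six and hence uses all three students and all three residencies; because each degree-two vertex lies on only one rotation, that rotation must be $A$, so $B$ avoids $s_3,r_3$ and is the unique $4$-cycle on $\{s_1,s_2,r_1,r_2\}$, using all four edges of that $K_{2,2}$.

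Then I would locate the edges of $A$. It must use both edges at $s_3$ and both edges at $r_3$ (those vertices have no other incident edges), which are two length-two paths, and it is completed into a $6$-cycle by two further ``connector'' edges of the $K_{2,2}$. These two connectors join $\{r_1,r_2\}$ to $\{s_1,s_2\}$, hence form a perfect matching of $\{s_1,s_2,r_1,r_2\}$, i.e.\ two \emph{opposite} edges of the $4$-cycle $B$; they are precisely the two edges lying in $A\cap B$, and (being in two rotations) they are exactly the two edges outside $\top\cup\bot$. Now the contradiction: each connector belongs to exactly two extended rotations, namely $A$ and $B$, so it is an upper edge of one and a lower edge of the other, meaning its alternation class flips between $A$ and $B$. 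But as opposite edges of the $4$-cycle $B$ the two connectors lie in the \emph{same} alternation class of $B$, while in $A$ they are separated by an odd number of steps (two adjacent edges through $s_3$, a connector, two adjacent edges through $r_3$, a connector) and so lie in \emph{different} alternation classes of $A$. ``Same class in $B$,'' ``different class in $A$,'' and ``each edge flips between $A$ and $B$'' are jointly inconsistent, establishing the claim.

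The main obstacle I expect is the rigidity step: pinning down that there are exactly two rotations, of lengths six and four, sitting on the stated vertex sets and edges, since everything downstream depends on it. I would justify the per-vertex rotation count carefully from the chain structure in the preliminaries, and verify that both ways of closing $A$ into a $6$-cycle produce opposite edges of $B$, so that the final parity argument needs no genuine case analysis and remains short.
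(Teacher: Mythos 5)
Your proof is correct, but it takes a genuinely different route from the paper's. The paper argues via the stable matchings themselves: the four perfect matchings of $K_{3,3}-e$ each use exactly one of the four central edges, so all four must be stable; hence each rotation contains at most two central edges and must pass through a degree-two vertex, giving at most two rotations; to generate four matchings these two rotations must be incomparable (a two-element chain has only three lower sets), and incomparable rotations must be edge-disjoint, which is impossible because two cycles of length at least four in this graph would use all eight edges and force every degree to be even. You never count matchings or lower sets; instead you reconstruct the rotation system itself: disjointness of $\top$ and $\bot$ forces their union to be the unique $6$-cycle $2$-factor, the incidence count of exactly $d-1$ rotations through a degree-$d$ vertex forces total rotation length $10$ and hence exactly one $6$-cycle $A$ and one $4$-cycle $B$ in the positions you describe, and the contradiction is a parity clash between the upper/lower alternation classes of the two edges shared by $A$ and $B$ (same class as opposite edges of $B$, different classes in $A$, yet each must flip between upper and lower across the two rotations). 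The step you flagged is indeed the one needing care, but it does follow from the preliminaries as you suggest: at a degree-$d$ vertex, one incident edge lies in $\top$ and one in $\bot$ (each belonging to exactly one rotation) and the remaining $d-2$ lie in neither (each belonging to exactly two rotations, being upper in one extended rotation and lower in another), giving $2d-2$ edge--rotation incidences and hence $d-1$ rotations through the vertex, each using exactly two of its edges. Comparing the two: the paper's argument is shorter and exposes the lattice-theoretic obstruction (too few lower sets to produce four matchings), while yours is more local and self-contained, requiring no enumeration of the perfect matchings and no comparability-implies-shared-edge lemma; your per-vertex count also subsumes the paper's observation that a degree-two vertex lies on at most one rotation as the special case $d=2$, and is a reusable tool in its own right.
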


\begin{proof}
The four perfect matchings of this graph (shown on the right side of the figure) each use only one of the four central edges of the graph, so all four are needed to cover the whole graph. Because a single rotation can include only two of these edges (one as an upper edge and the other as a lower edge of the rotation), each rotation must pass through one of the degree-two vertices. Each degree-two vertex can be in only one rotation, so there are at most two rotations. For two rotations to generate four perfect matchings, they must be incomparable in the partial order of rotations, because a partial order with two comparable elements has only three lower sets. Rotations that are incomparable must be edge-disjoint, but this graph does not have two edge-disjoint cycles.
\end{proof}

\section{Which graphs are representable?}
\label{sec:representable}

Our later NP-completeness result (\autoref{thm:npc}) strongly suggests that a simple and exact characterization of the graphs of stably matchable pairs, in general, does not exist. Therefore, in this section, we seek special classes of graphs for which a characterization can be found, as well as partial characterizations in terms of conditions that are necessary or sufficient but not both.

\subsection{Regular graphs and induced subgraphs}
\label{sec:regular}

\begin{figure}[t]
\centering\includegraphics[width=0.5\textwidth]{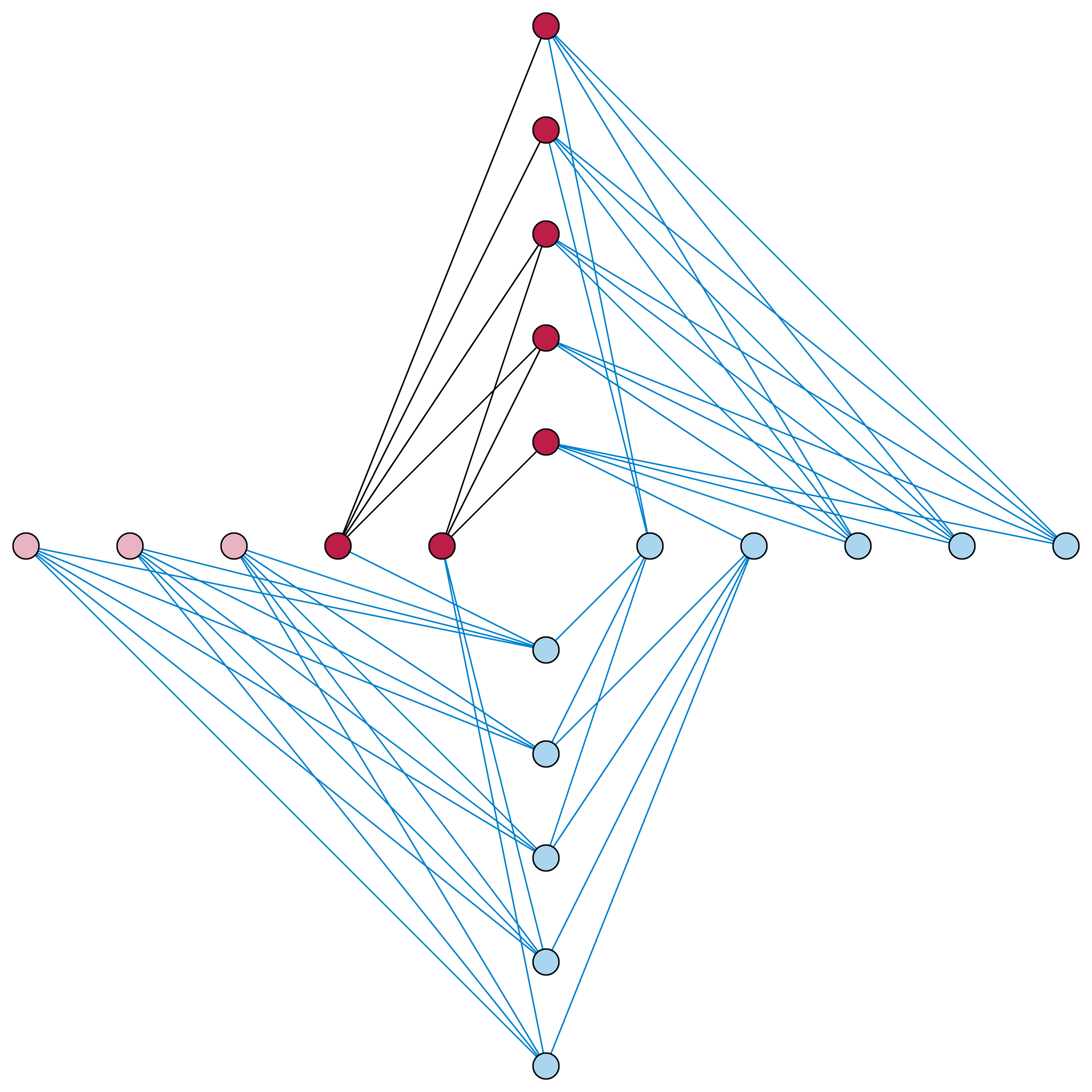}
\caption{Embedding an arbitrary bipartite graph $G$ (dark red vertices and black edges) into a regular bipartite graph (\autoref{lem:regularize}). The light pink vertices are the isolated vertices added to $G$ to make it balanced, and the light blue vertices are $s'_i$ and $r'_i$ from the construction of \autoref{lem:regularize}.}
\label{fig:regularize}
\end{figure}

In this section we show that the graphs of stably matchable pairs have no hidden structure, in the sense that there are no forbidden induced subgraphs for these graphs. We begin with the following observation:

\begin{observation}
\label{obs:regular}
Every regular bipartite graph is a graph of stably matchable pairs for some stable matching instance.
\end{observation}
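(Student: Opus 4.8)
The plan is to decompose the given $k$-regular bipartite graph into perfect matchings and then to read preference lists off an edge-coloring. First I would invoke the standard fact that every $k$-regular bipartite graph $G$ on students $S$ and residencies $R$ decomposes into $k$ perfect matchings $M_1,\dots,M_k$: repeatedly extract a perfect matching using Hall's theorem and delete it, noting that deleting a perfect matching from a $j$-regular bipartite graph leaves a $(j-1)$-regular bipartite graph. This is a proper $k$-edge-coloring of $G$, assigning each edge a color $c\in\{1,\dots,k\}$ according to the matching it lies in. I would also observe that a $k$-regular bipartite graph with $k\ge 1$ is automatically balanced, consistent with the rural hospitals theorem, and that the cases $k=0$ and $k=1$ are trivial.

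Next I would use the color classes to define a partial preference system supported exactly on the edges of $G$. Each student $s_i$ lists only its $k$ neighbors, one of each color, ranked in order of increasing color (so its color-$1$ neighbor is best and its color-$k$ neighbor worst); symmetrically, each residency $r_j$ ranks its $k$ neighbors in order of decreasing color (color-$k$ best, color-$1$ worst). Because the only pairs appearing in any list are edges of $G$, no non-edge of $G$ can be matched in any stable matching, so the graph of stably matchable pairs of this instance is automatically a subgraph of $G$; it remains to show every edge of $G$ is stably matchable.

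The crux is to show that each color class $M_c$ is itself a stable matching, since then every edge of $G$, lying in some $M_c$, is stably matchable and the graph of stably matchable pairs equals $G$. Since $M_c$ is a perfect matching, every element is matched to a listed neighbor, so no element prefers being unmatched and only blocking pairs need to be ruled out. I would argue by contradiction: a blocking pair $(s_i,r_j)$ for $M_c$ is an edge of some color $c'$, and for $s_i$ to prefer $r_j$ to its color-$c$ partner we need $c' < c$, while for $r_j$ to prefer $s_i$ to its color-$c$ partner we need $c' > c$, which is impossible. The reversal of the student and residency orderings is precisely what collapses the blocking-pair condition to this contradictory inequality, so the only point requiring care is the consistency of that opposite orientation; once the coloring is fixed, the stability check is immediate and no appeal to the rotation machinery of \autoref{lem:rot-is-stable} is needed.
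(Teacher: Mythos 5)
Your proof is correct and takes essentially the same approach as the paper: decompose the regular bipartite graph into perfect matchings (K\H{o}nig's theorem), order them, and give students and residencies opposite preference orders over the colors, so that each color class is a stable matching and every edge of $G$ is stably matchable. The only differences are presentational---you derive the decomposition from Hall's theorem rather than citing K\H{o}nig, and you spell out the blocking-pair contradiction ($c'<c$ and $c'>c$) that the paper leaves implicit.
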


\begin{proof}
By a theorem of D\'enes K\H{o}nig, every regular bipartite graph can be decomposed into disjoint perfect matchings~\cite{Kon-MA-16}.
Arbitrarily order these matchings, and construct a partial preference system in which one side of the bipartition (the students) prioritizes each potential pairing of the graph according to this ordering, and the other side of the bipartition (the residencies) prioritizes the pairings according to the reverse ordering. With these priorities, every matching of the decomposition (as well as, potentially, some other matchings using the same edges) is stable.
\end{proof}

For instance, this applies to the graph of the cube, to every balanced complete bipartite graph, to every Levi graph of a finite projective plane (such as the Heawood graph, the Levi graph of the Fano plane), and to every Levi graph of a projective configuration of equally many points and lines with each point on the same number of lines and each line containing the same number of points (such as the M\"obius--Kantor graph, Pappus graph, and Desargues graph, Levi graphs of configurations with 8, 9, and 10 points and lines respectively).

\begin{lemma}
\label{lem:regularize}
Every bipartite graph is an induced subgraph of a regular bipartite graph.
\end{lemma}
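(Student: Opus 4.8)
The plan is to reduce to the balanced case and then to \emph{double} the graph, filling in all the missing adjacencies with a bipartite complement so that every vertex ends up with the same degree. First I would add isolated vertices to the smaller side of the bipartition of $G$ until the two sides have equal size $n$ (the light pink vertices of \autoref{fig:regularize}); this changes neither the degrees of the original vertices nor the subgraph induced on them, and since the induced-subgraph relation is transitive it suffices to embed this balanced version of $G$ into a regular bipartite graph. Write the two sides as students $s_1,\dots,s_n$ and residencies $r_1,\dots,r_n$, and let $A$ be the $n\times n$ bipartite adjacency matrix of $G$.

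Next I would introduce new students $s'_1,\dots,s'_n$ and new residencies $r'_1,\dots,r'_n$, one primed vertex for each original vertex (the light blue vertices), and build a bipartite graph $H$ on the students $\{s_i\}\cup\{s'_i\}$ and the residencies $\{r_j\}\cup\{r'_j\}$ as follows: keep all edges of $G$ among the original vertices; place an isomorphic copy of $G$ on the primed vertices, joining $s'_i$ to $r'_j$ exactly when $s_i$ is adjacent to $r_j$; and add the \emph{complementary} cross edges, joining $s_i$ to $r'_j$ and $s'_i$ to $r_j$ exactly when $s_i$ is not adjacent to $r_j$ in $G$. In matrix terms, $H$ has bipartite adjacency matrix $\left(\begin{smallmatrix} A & J-A \\ J-A & A \end{smallmatrix}\right)$, where $J$ is the all-ones matrix.

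The verification has two parts. For regularity, observe that each original student $s_i$ is adjacent to the residencies $r_j$ with $s_i\sim r_j$ and to the residencies $r'_j$ with $s_i\not\sim r_j$, so its two neighborhoods are indexed by complementary subsets of $\{1,\dots,n\}$ and its degree is exactly $n$; the identical bookkeeping applies to $s'_i$, to $r_j$, and to $r'_j$, so $H$ is $n$-regular. For the induced-subgraph property, note that every edge added after $G$ is incident to a primed vertex---the copied edges lie among the primed vertices, and each complementary cross edge has exactly one primed endpoint---so no new edge joins two original vertices, and the subgraph of $H$ induced on $\{s_i\}\cup\{r_j\}$ is exactly $G$.

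The one point that genuinely needs care is that the statement asks for an \emph{induced} subgraph rather than merely a subgraph. This is precisely why I raise every degree to the common value $n$ using only the complement of $G$, whose edges are guaranteed to avoid all original student--residency pairs, instead of padding with arbitrary edges that might accidentally connect two original vertices. Once the complement is used in this way the degree count is forced and immediate, so I expect no remaining difficulty beyond stating the construction carefully.
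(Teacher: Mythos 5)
Your proof is correct and uses exactly the same construction as the paper: balance $G$ with isolated vertices, then double it so that the primed copies carry a copy of $G$ and the cross edges carry the bipartite complement, i.e., the adjacency matrix $\left(\begin{smallmatrix} A & J-A \\ J-A & A \end{smallmatrix}\right)$. Your verification of $n$-regularity and of the induced-subgraph property just makes explicit what the paper leaves implicit.
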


\begin{proof}
Let $G$ be an arbitrary bipartite graph. Add isolated vertices to one side of the bipartition of $G$, if necessary, to make the resulting graph become balanced. Let $s_i$ and $r_j$ be the vertices on the two sides of the bipartition of the resulting balanced bipartite graph. Construct a graph $G'$ with twice as many vertices, by making a copy $s'_i$ or $r'_j$ of each vertex $s_i$ and $r_j$. For each edge $s_ir_j$ in $G$, make two edges in $G'$, $s_ir_j$ and $s'_ir'_j$, and for each pair $(s_i,r_j)$ that is not an edge in $G$, make two edges in $G'$, $s_ir'_j$ and $s'_ir_j$, as shown in \autoref{fig:regularize}. The result is a regular balanced bipartite graph in which the vertices $s_i$ and $r_j$ induce a copy of $G$, and in which every vertex on one side of the bipartition is adjacent to exactly half of the vertices on the other side of the bipartition.
\end{proof}

\begin{corollary}
Every bipartite graph is an induced subgraph of a graph of stably matchable pairs.
\end{corollary}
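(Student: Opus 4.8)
The plan is to chain together the two results immediately preceding this corollary, since together they already bridge arbitrary bipartite graphs to graphs of stably matchable pairs. Specifically, I would start from an arbitrary bipartite graph $G$ and apply \autoref{lem:regularize} to embed it as an induced subgraph of some regular bipartite graph $H$. Then I would invoke \autoref{obs:regular} to conclude that $H$ itself is a graph of stably matchable pairs for some stable matching instance. Combining these two facts, $G$ is an induced subgraph of $H$, which is a graph of stably matchable pairs, and transitivity of the induced-subgraph relation finishes the argument.

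The only point that needs a moment's care is that the composition is legitimate: the vertex set of $G$ sits inside the vertex set of $H$ in such a way that $G$ is the subgraph of $H$ \emph{induced} on those vertices (not merely a subgraph), and \autoref{lem:regularize} is stated to give exactly this — the construction there explicitly notes that the original vertices $s_i$ and $r_j$ induce a copy of $G$ within the larger regular graph $G'$. Since \autoref{obs:regular} applies to \emph{every} regular bipartite graph, it applies in particular to the one produced by \autoref{lem:regularize}, so there is no compatibility condition to verify between the two lemmas.

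I do not expect any genuine obstacle here: the corollary is an immediate consequence of the two preceding statements, and the proof is essentially a single sentence. The substantive content — that one can always regularize a bipartite graph while preserving it as an induced subgraph, and that regularity suffices for realizability as stably matchable pairs — has already been discharged in \autoref{lem:regularize} and \autoref{obs:regular} respectively. The corollary's role is simply to record the combined conclusion, emphasizing that there are no forbidden induced subgraphs for graphs of stably matchable pairs.
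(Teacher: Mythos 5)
Your proof is correct and is exactly the paper's argument: embed the given bipartite graph as an induced subgraph of a regular bipartite graph via \autoref{lem:regularize}, then apply \autoref{obs:regular} to conclude that this regular graph is a graph of stably matchable pairs. Nothing further is needed.
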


\begin{proof}
It is an induced subgraph of a regular bipartite graph by \autoref{lem:regularize}, and this regular bipartite graph is a graph of stably matchable pairs by \autoref{obs:regular}.
\end{proof}

\subsection{Subcubic graphs}

A \emph{subcubic graph} is a graph such as our example $K_{3,3}-e$ in which every vertex has degree at most three. It all degrees are exactly three, it is a \emph{cubic graph}. In this section we examine more closely the structure of subcubic graphs of stably matchable pairs, finding both necessary conditions and (different) sufficient conditions for a subcubic graph to be representable in this way. Our next result provides necessary and (different) sufficient conditions for being a graph of stably matchable pairs, based on matchings of equal-degree vertices. It is convenient to introduce the notation $G[i]$, where $G$ is a graph and $i$ is a natural number, meaning the subgraph of $G$ induced by its degree-$i$ vertices.

\begin{theorem}
\label{thm:eq-deg-match}
If $G$ is a bipartite subcubic graph, then a necessary condition for $G$ to be a graph of stably matchable pairs is that $G[1]$ and $G[3]$ both have perfect matchings. It is sufficient that, as well, $G[2]$ has a perfect matching.
\end{theorem}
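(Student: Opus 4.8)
My plan is to prove the two directions separately, using the rotation-system description of \autoref{lem:rot-is-stable} and the rural hospitals theorem throughout. For the necessary conditions, fix any realization of $G$. If $v$ is a degree-one vertex with unique neighbor $w$, then $v$ is matched in every stable matching and its only possible partner is $w$, so $v$ is always matched to $w$; this forces $w$ to be always matched to $v$, so $w$ has no other stably matchable partner and is itself a degree-one vertex. Hence every degree-one vertex is adjacent only to a degree-one vertex, $G[1]$ is a disjoint union of edges, and so it trivially has a perfect matching. For $G[3]$ I would use the total order on the edges at each vertex supplied by the rotation system, in which the $\bot$-edge and the $\top$-edge are the minimum and the maximum and are distinct at every non-uniquely-matched vertex. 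The edges lying in neither $\top$ nor $\bot$ are exactly those strictly between these extremes, so a degree-$d$ vertex is incident to exactly $d-2$ of them. Thus these ``middle'' edges are incident only to degree-three vertices, each degree-three vertex has exactly one of them, and (being a middle edge at both ends) each joins two degree-three vertices. The middle edges therefore form a perfect matching of $G[3]$, as required.

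For sufficiency I would construct an explicit rotation system from the given perfect matchings $M_1,M_2,M_3$ and apply \autoref{lem:rot-is-stable}. The pairs of $M_1$ become uniquely matched pairs, lying in both $\top$ and $\bot$ and in no rotation (note that $M_1$ forces every degree-one vertex to have a degree-one neighbor, so there are no edges between degree-one and higher-degree vertices). On the subgraph $G^\ast$ induced by the degree-two and degree-three vertices I declare the middle edges to be precisely $M_3$; then $F=G^\ast\setminus M_3$ is two-regular, hence a disjoint union of even cycles, and alternately two-coloring the edges of these cycles splits $F$ into two perfect matchings $\top$ and $\bot$. This fixes the order of the incident edges at each vertex ($\bot$-edge, then $M_3$-edge, then $\top$-edge at a degree-three vertex; $\bot$-edge then $\top$-edge at a degree-two vertex), and the rotations are then the closed alternating walks obtained by repeatedly passing, at each vertex, from an edge to its neighbor in this order. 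A direct check of the axioms shows that within $G^\ast$ each edge lies in exactly two extended rotations: every $\top$-edge is the upper edge of one rotation, every $\bot$-edge the lower edge of one rotation, and every $M_3$-edge the upper edge of one rotation and the lower edge of another. Consequently the graph of stably matchable pairs produced consists of the $M_1$-pairs together with $\top\cup\bot\cup M_3$, which is exactly $G$.

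The main obstacle is to choose the two-coloring of $F$ so that the induced order on rotations is acyclic (and the alternating walks are genuine simple cycles), and this is where the hypothesis on $M_2$ enters. The only rotation-to-rotation relations come from the $M_3$-edges, each giving (the rotation in which it is an upper edge) $<$ (the rotation in which it is a lower edge); a directed cycle can therefore arise only through a ``mixed'' rotation that uses one $M_3$-edge as an upper edge and another as a lower edge. I expect such mixing to be forced precisely at the degree-two vertices, since the unique rotation through such a vertex threads its $\bot$-side (which runs toward $M_3$-edges used as upper edges) into its $\top$-side (which runs toward $M_3$-edges used as lower edges). Pairing the degree-two vertices by $M_2$ lets me route each matched pair onto a common rotation through its shared edge and orient that edge so the pair faces consistently toward the $\top$-side or the $\bot$-side, rather than bridging the two. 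The payoff is a rotation poset of height two, in which every rotation is either minimal (all of its lower edges lie in $\bot$) or maximal (all of its upper edges lie in $\top$), and such a poset is automatically acyclic. The technical heart of the argument is to show that, given that $G[2]$ has a perfect matching, a coloring of $F$ and an orientation of the $M_2$-pairs can always be installed simultaneously so that no mixed rotation survives; I would expect this to require a careful, possibly component-by-component, analysis of how the matched degree-two pairs interleave with the even cycles of $F$.
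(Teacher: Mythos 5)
Your necessity direction is complete and is essentially the paper's own argument: degree-one vertices must pair with degree-one vertices, and the unique edge at each degree-three vertex lying in neither $\top$ nor $\bot$ gives a perfect matching of $G[3]$. Your sufficiency construction is also the paper's construction in different clothing: label edges bottom/middle/top with $M_3$ as the middle edges, split $F=G^\ast\setminus M_3$ into $\top$ and $\bot$ by alternation, and take as rotations the closed alternating walks through consecutive edges of the local orders (these are exactly the paper's ``upper cycles'' and ``lower cycles,'' obtained by splitting each degree-three vertex and each middle edge into an upper and a lower copy). You also correctly identify the one thing that must then be verified: that these walks are simple cycles and that none is \emph{mixed}, i.e.\ uses one middle edge as an upper edge and another as a lower edge, since purely-upper and purely-lower rotations yield a height-two order that is automatically a valid partial order. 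But that verification is precisely where your proposal stops: you call it ``the technical heart,'' predict that it needs a careful, component-by-component, simultaneous choice of the two-coloring of $F$ and an orientation of the $M_2$-pairs, and do not carry it out. As written, the sufficiency direction is therefore not proved.

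The missing step is in fact short, purely local, and requires no coordinated choices at all: \emph{any} alternating coloring of each cycle of $F$ works, which is why the paper can choose the alternation arbitrarily. Since $M_2$-edges join two degree-two vertices, no edge incident to a degree-three vertex lies in $M_2$, and each degree-two vertex lies on exactly one $M_2$-edge. Now trace a walk: from a middle edge used as a lower edge, the walk steps (at a degree-three vertex) onto a top edge that is necessarily unmatched; from an unmatched top edge it steps either, at a degree-three endpoint, back onto a middle edge as a lower edge, or, at a degree-two endpoint $z$, onto the bottom edge of $z$, which must be $z$'s $M_2$-edge because $z$'s other edge is the unmatched top edge just used, and thence onto the unmatched top edge of $z$'s $M_2$-partner. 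So the class of steps ``unmatched top edge as upper, middle or matched bottom edge as lower'' is closed under the walk, and symmetrically for the other class; every walk is purely upper or purely lower, cannot revisit a degree-three vertex (a revisit would use both the top and bottom edges there, i.e.\ would be mixed), hence is a simple cycle, and the induced order on rotations has height two. This local analysis is also exactly where the hypothesis on $G[2]$ matters: $G[2]$ has a perfect matching precisely when every path component of $G[2]$ (a maximal chain of degree-two vertices between degree-three vertices) has an even number of vertices, and an odd chain forces a mixed walk under \emph{every} coloring, because its two attaching edges at degree-three vertices then receive opposite labels. Filling in this case analysis would complete your proof and recovers the paper's argument.
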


\begin{proof}
If $G$ is to be a graph of stably matchable pairs, it is necessary that every non-isolated vertex of $G$ be matched in every stable matching. If $v$ is a degree-one vertex, its neighbor can have no other matches, so its neighbor must also be degree-one. Therefore, it is necessary for $G[1]$ to have a perfect matching.

If $G$ is a subcubic graph of stably matchable pairs of some stable-matching instance, and $v$ is a vertex of degree two or three in $G$, then one of the edges incident to $v$ must belong to the top matching $\top$ of the lattice of stable matchings, and a different one of the edges incident to $v$ must belong to the bottom matching $\bot$. Each degree-three vertex is incident to a unique edge that belongs neither to $\top$ nor $\bot$, so these edges form a perfect matching of $G[3]$.
 
If $G[1]$, $G[2]$, and $G[3]$ all have perfect matchings (\autoref{fig:eq-deg-match}, far and middle left), then the subgraph formed by removing from $G$ the matched edges in $G[3]$ consists of isolated vertices and edges, and disjoint cycles of degree-two and degree-three vertices (\autoref{fig:eq-deg-match}, middle right). Form two matchings $\top$ and $\bot$ consisting of the matched edges in $G[1]$ and alternating subsets of these disjoint cycles, choosing arbitrarily within each cycle how to alternate between edges of $\top$ (which we call \emph{top edges}) and edges of $\bot$ (which we call \emph{bottom edges}). Call the edges in the matching of $G[3]$ \emph{middle edges}, as they are neither in $\top$ nor in $\bot$. Additionally, for a top edge or a bottom edge, we say that it is \emph{matched} if it is part of the perfect matching of $G[2]$, and unmatched otherwise.
Once this choice has been made, form a system of cycles in $G$ of two types (\autoref{fig:eq-deg-match}, far right):
\begin{itemize}
\item \emph{Upper cycles} alternate between unmatched top edges and either middle edges (at degree-3 vertices of $G$) or matched bottom edges (at degree-2 vertices of $G$).
\item \emph{Lower cycles} alternate between unmatched bottom edges and either middle edges or matched top edges.
\end{itemize}
These cycles are simply the connected components of a 2-regular graph formed by splitting each degree-three vertex and each middle edge into two copies, an upper copy and a lower copy, with the incident top edge going to the upper copy and the incident bottom edge going to the lower copy. In particular, they are simple cycles, with each top or bottom edge appearing in exactly one of these cycles and each middle edge appearing in one upper cycle and one lower cycle.

\begin{figure}[t]
\centering\includegraphics[width=0.8\textwidth]{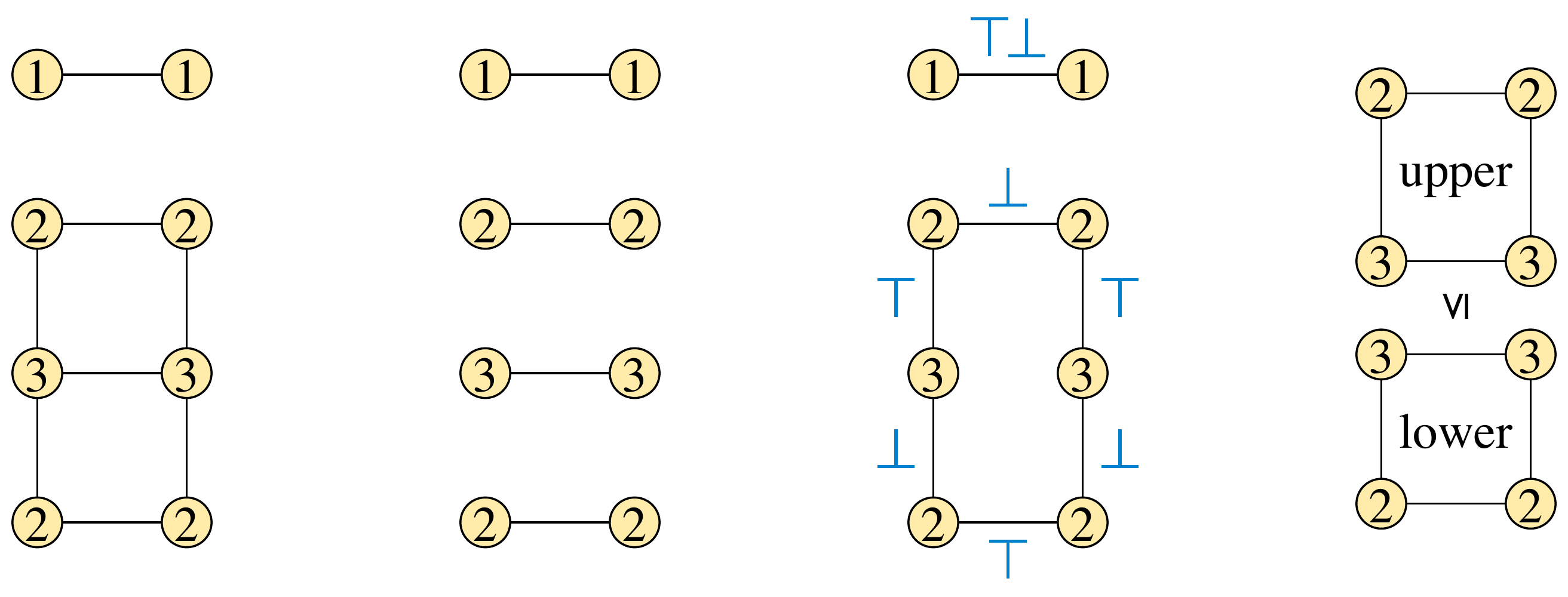}
\caption{Illustration for the sufficient case of \autoref{thm:eq-deg-match}. Far left: A subcubic graph $G$, with vertices labeled by their degrees. Middle left: simultaneous perfect matching of the subgraphs $G[1]$, $G[2]$, and $G[3]$ of equal-degree vertices. Middle right: assignment of the edges that are not matched in $G[3]$ to the matchings $\top$ and $\bot$, alternating around the remaining cycle of degree-2 and degree-3 vertices. Far right: the upper cycle and lower cycle derived from the choice of $\top$ and $\bot$, and the rotation structure formed by these cycles.}
\label{fig:eq-deg-match}
\end{figure}

Partially order these cycles by making two cycles $C$ and $C'$ comparable when they share one or more middle edges, with $C\le C'$ when $C$ is the lower cycle and $C'$ is the upper cycle of the two, and by making all other pairs of cycles incomparable. Then this system of partially ordered cycles, together with the matchings $\top$ and $\bot$, form a rotation system for $G$, and the result that $G$ is a graph of stably matchable pairs follows from \autoref{lem:rot-is-stable}.
\end{proof}

\subsection{Outerplanar graphs}

Outerplanar graphs can be tested for being a graph of stably-matching pairs very easily, by checking a simple necessary and sufficient condition.

\begin{lemma}
\label{lem:articulation}
No graph of stably-matching pairs can have an articulation vertex.
\end{lemma}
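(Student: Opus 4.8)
The plan is to reduce the statement to a purely structural fact about matching-covered bipartite graphs and then establish that fact by a balance count on perfect matchings, so that the lattice machinery is not even needed.

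First I would dispose of the trivial cases. An articulation vertex $v$ has degree at least two, so it lies in a connected component $H$ of more than two vertices; by \autoref{obs:necessary} this $H$ is matching-covered, and a connected matching-covered graph has a perfect matching and is therefore a balanced bipartite graph. It thus suffices to prove: a connected, balanced, matching-covered bipartite graph $H$ with more than two vertices has no cut vertex. Suppose for contradiction that $v$ is a cut vertex; by the symmetry between the two sides we may take $v$ to be a student $s$. Let $C_1,\dots,C_k$ with $k\ge 2$ be the components of $H-v$. Since $H$ is connected and each $C_i$ is joined to the rest of $H$ only through $v$, the vertex $v$ has at least one neighbor, necessarily a residency, in every $C_i$.

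The heart of the argument is the following counting step, which I would carry out next. In any perfect matching $M$ of $H$, the student $v$ is matched to a residency lying in exactly one component, say $C_j$. Every vertex of any other component $C_i$ with $i\ne j$ is then forced to be matched inside $C_i$: each residency of $C_i$ is adjacent only to students of $C_i$ and to $v$ (and $v$ is already used), while each student of $C_i$ is adjacent only to residencies of $C_i$. Writing $s_i$ and $r_i$ for the numbers of students and residencies in $C_i$, this shows $s_i=r_i$ for every $i\ne j$, whereas the component $C_j$ that absorbs $v$ must satisfy $r_j=s_j+1$, since $v$ together with the $s_j$ students of $C_j$ is matched onto all $r_j$ of its residencies.

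Finally I would invoke the matching-covered hypothesis to produce the contradiction. Choosing a neighbor of $v$ in $C_1$ and a neighbor of $v$ in $C_2$, each of these incident edges extends to a perfect matching. The first matching sends $v$ into $C_1$, forcing $r_1=s_1+1$; the second sends $v$ into $C_2$, making $C_1$ one of the internally matched components and forcing $r_1=s_1$. These two equalities are incompatible, which completes the proof. I expect the only points requiring care to be the observation that a cut vertex of a connected graph has a neighbor in each of the separated components, and the claim that the components not absorbing $v$ are forced to match internally; both follow at once from bipartiteness together with $v$ being the unique separating vertex, after which the balance bookkeeping is routine.
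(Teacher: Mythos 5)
Your proof is correct, and it follows the same overall reduction as the paper: both arguments invoke \autoref{obs:necessary} to conclude that the component containing the articulation vertex is matching-covered, and then show that matching-coveredness is incompatible with a cut vertex. Where the two diverge is in the final counting. The paper's proof is a parity argument on the components of $G-v$: if some component has an even number of vertices, then no edge from $v$ into it can lie in any perfect matching, while if all components are odd, no perfect matching exists at all; either way matching-coveredness fails, and bipartiteness is never used. Your proof instead exploits bipartiteness: you record the student/residency balance that any perfect matching forces on each component ($r_j=s_j+1$ for the component absorbing $v$, and $r_i=s_i$ for every other component), and then apply matching-coveredness twice---once to an edge from $v$ into $C_1$ and once to an edge into $C_2$---to impose the incompatible conditions $r_1=s_1+1$ and $r_1=s_1$. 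Both are sound and of comparable length; the paper's parity version is slightly more general, since it shows that \emph{no} graph with a cut vertex, bipartite or not, can be matching-covered, whereas your balance bookkeeping needs the bipartition but yields the sharper quantitative picture of how each component must split. Your care in checking that $v$ has a neighbor in every component of $H-v$ is exactly what licenses the two applications of matching-coveredness, so no gap remains.
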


\begin{proof}
A graph with an articulation vertex $v$ cannot be matching-covered, and so by \autoref{obs:necessary} cannot be a graph of stably-matching pairs. For, if any component formed from the graph by the removal of $G$ has an even number of vertices, the edge or edges connecting that component to $v$ cannot be included in any perfect matching. And if all of these components have an odd number of vertices, only one of these components can have an edge to $v$ in a perfect matching, and the rest of these components cannot be perfectly matched.
\end{proof}

\begin{theorem}
\label{thm:outerplanar}
A bipartite outerplanar graph $G$ is a graph of stably-matching pairs if and only if $G$ has no articulation vertex.
\end{theorem}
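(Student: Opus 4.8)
The plan is to treat the two directions separately. The ``only if'' direction is immediate from \autoref{lem:articulation}: every graph of stably matchable pairs is free of articulation vertices, so in particular an outerplanar one must be. All the work lies in the converse, where I am given a bipartite outerplanar graph $G$ with no articulation vertex and must exhibit a rotation system realizing $G$, after which \autoref{lem:rot-is-stable} finishes the job.

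First I would reduce to the biconnected case. A graph with no articulation vertex is a disjoint union of biconnected blocks, single edges, and isolated vertices, and the graph of stably matchable pairs of a disjoint union of instances is the disjoint union of their graphs; so it suffices to realize each piece, and isolated vertices (unmatchable elements) and single edges are trivial. Thus assume $G$ is $2$-connected. Here I would invoke the standard structure of $2$-connected outerplanar graphs: the outer boundary is a Hamiltonian cycle $H$, every remaining edge is a chord, each chord lies on exactly two interior faces while each boundary edge lies on exactly one, and the interior faces form a tree under the weak-dual adjacency of sharing a chord. Since $G$ is bipartite, $H$ and each interior face is an even cycle and every chord joins the two sides of the bipartition.

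The core idea is to take the interior faces as the rotations. I would orient every edge from the student side $S$ to the residency side $R$, traverse each interior face consistently so that each chord is traversed in opposite directions by its two incident faces, and call an edge an \emph{upper} edge of a face when it is traversed along its $S\to R$ orientation and a \emph{lower} edge otherwise. Bipartiteness forces these labels to alternate around each face, exactly as a rotation requires, and the opposite-direction property makes any shared chord upper in one of its two faces and lower in the other; orienting each tree edge from the face where the chord is upper to the face where it is lower and taking the transitive closure yields a partial order for free, since a tree contains no cycles. A boundary edge, whose single interior face gives it one label, I would place in $\top$ when it is upper and in $\bot$ when it is lower. A short computation at each vertex $v$, using that its two boundary edges are consecutive in the rotation at $v$ (with the outer face between them) and point consistently into or out of $v$, shows that exactly one of the two lands in $\top$ and the other in $\bot$; hence $\top$ and $\bot$ are edge-disjoint perfect matchings whose union is $H$.

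It then remains to check that this data satisfies the definition of a rotation system: each edge belongs to exactly two extended rotations (two faces for a chord, one face together with $\top$ or $\bot$ for a boundary edge), each face alternates, and the upper/lower incidences agree with the partial order. With this verified, \autoref{lem:rot-is-stable} produces an instance of partial preferences whose stable matchings are precisely those arising from the system, and a final check that every edge participates in some stable matching (for an edge upper in an extended rotation $A$, a suitable down-set of $A$ generates a matching using it) shows the graph of stably matchable pairs is all of $G$. I expect the main obstacle to be the \emph{global} consistency of the upper/lower labeling — simultaneously securing alternation around every face, opposite labels on every shared chord, and the perfect-matching property of $\top$ and $\bot$. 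The uniform $S\to R$ orientation is the device that enforces all three at once, collapsing what would otherwise be a delicate case analysis into a single observation.
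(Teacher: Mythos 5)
Your proof is correct, and it builds exactly the same rotation system as the paper does---the interior faces serve as the rotations, the outer Hamiltonian cycle splits into the two perfect matchings $\top$ and $\bot$, and the weak-dual tree supplies the partial order---but you verify its validity by a genuinely different route. The paper proceeds by induction on the number of bounded faces: it peels a leaf face $f$ off the weak-dual tree, realizes the smaller graph under the invariant that its outer face is $\top\cup\bot$ and its bounded faces are rotations, and re-attaches $f$ as a rotation placed above or below its neighboring rotation according to whether the shared edge was a top or a bottom edge. Your argument is non-inductive: the single device of orienting every edge from $S$ to $R$ and traversing all interior faces with a consistent (say counterclockwise) orientation simultaneously yields alternation around every face, opposite labels on every chord, and---since the two outer-cycle edges at each vertex are traversed consistently by their interior faces, one into and one out of the vertex---the perfect-matching property of $\top$ and $\bot$. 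This buys a closed-form description of the labeling in place of the paper's inductive bookkeeping, and you also make explicit two points the paper leaves implicit: the reduction to $2$-connected components (plus isolated vertices and single edges), and the final check that every edge of $G$ occurs in some stable matching. The one detail to pin down is the sign convention in your partial order: the rotation-system definition requires a shared chord to be an upper edge of the face \emph{below} and a lower edge of the face \emph{above}, so your dual-tree orientation ``from the face where the chord is upper to the face where it is lower'' must be read as pointing from the smaller element to the larger; with that convention fixed, every condition of the definition holds and \autoref{lem:rot-is-stable} completes the proof.
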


\begin{proof}
One direction is \autoref{lem:articulation}. In the other direction, let $G$ be a bipartite outerplanar graph, drawn in an outerplanar way (with all vertices belonging to the unbounded face). We prove by induction on the number of bounded faces that $G$ is a graph of stably matchable pairs for a rotation system in which the outer face of $G$ is the union of the top and bottom matchings, and (when $G$ has more than one bounded face) all bounded faces are rotations.

 As a base case, if $G$ is a cycle graph, it has a rotation structure consisting of two disjoint perfect matchings (the top and bottom matchings of the rotation structure) and an empty set of rotations. As a second base case, if $G$ consists of two cycles sharing an edge, it has a rotation structure where these two cycles are both rotations, alternating between top edges and bottom edges except at their shared edge. One of these two rotations (the upper one in the partial order of the two rotations) has top edges adjacent to the shared edge, and the other one has bottom edges adjacent to the shared edge.
 
In the non-base case, the weak dual of $G$ (a graph with a vertex for each bounded face and an edge between bounded faces that share an edge in $G$) is a tree with at least one leaf $f$, a bounded face of $G$ that shares only one edge $e$ with the rest of the graph. By induction, the outerplanar graph $G'$ formed from $G$ by removing $f$ (but keeping the shared edge $e$ in place) has a rotation structure in which the outer face of $G'$ (including $e$) alternates between top and bottom edges, and each face is a rotation. We construct a rotation structure for $G$ by adding $f$ as a rotation to this structure, alternating between top and bottom edges. If $e$ was a top edge in $G'$, we place $f$ above the neighboring rotation in the partial order, and choose the alternation of top and bottom edges of $f$ to have top edges adjacent to $e$. Otherwise, we place $f$ below the neighboring rotation, and choose the alternation of its edges with bottom edges adjacent to $e$. Both cases preserve all the defining properties of a rotation system.
\end{proof}

Although more general bipartite outerplanar graphs may be unbalanced, the bipartite outerplanar graphs with no articulation vertex are automatically balanced. This follows as a corollary from \autoref{thm:outerplanar} but can also be proven more directly by a similar induction.

\subsection{Rectangular grids}

We define an $a \times b$ \emph{grid graph} to be the graph of unit distances in the integer points $(i,j)$ with $0\le i<a$ and $0\le j<b$. That is, it consists of a rectangular array of vertices, with $a$ vertices along one side of the rectangle and $b$ along the other, and with vertices adjacent to their nearest neighbors in the directions parallel to the rectangle sides. When $ab$ is even, the resulting graph is automatically balanced.
\begin{figure}[t]
\includegraphics[width=\textwidth]{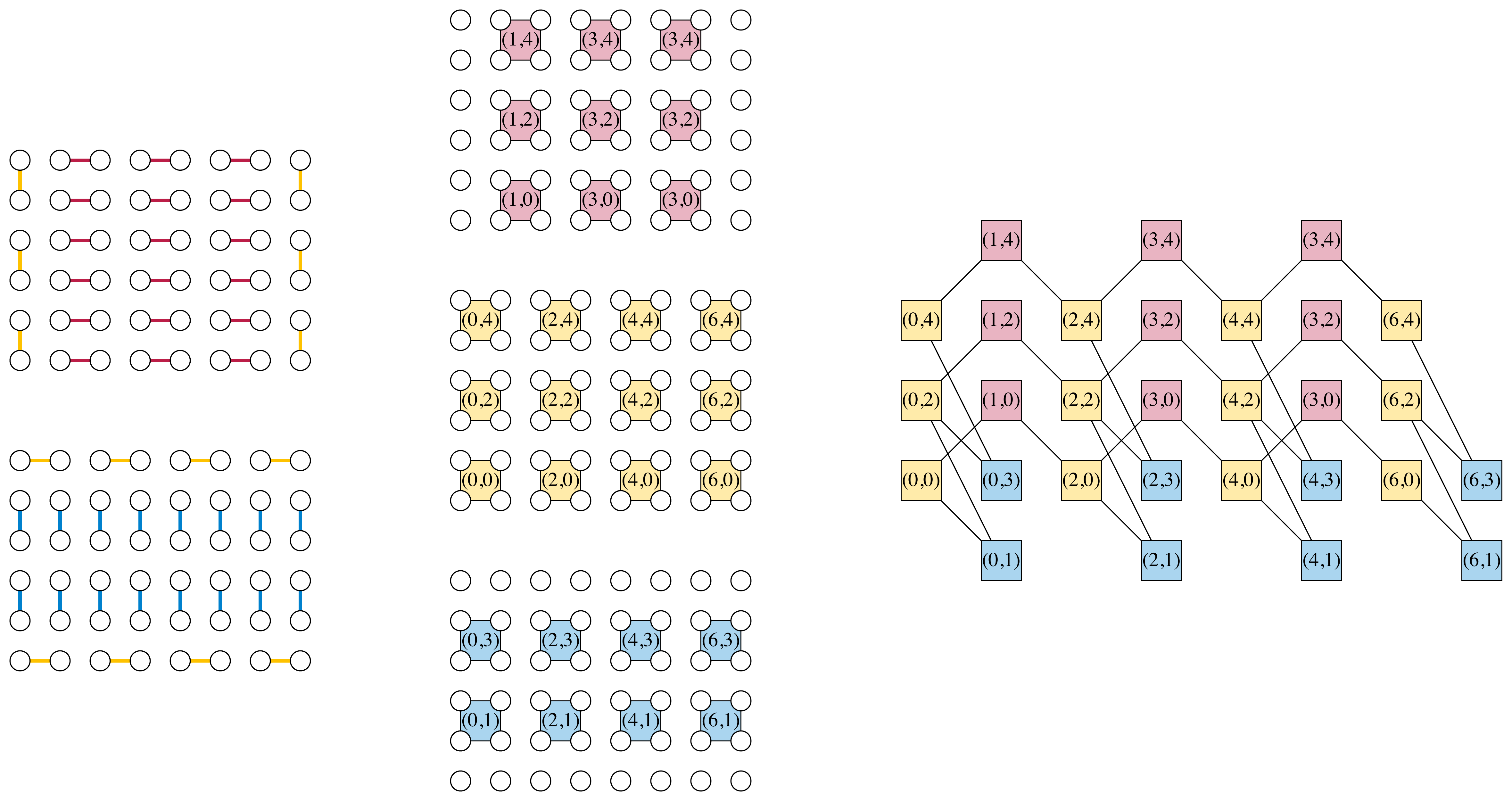}
\caption{A rotation system realizing an $8\times 6$ grid as a graph of stably matchable pairs, following the construction of \autoref{thm:grid}. Left: the top matching (upper left, red and yellow edges) and the bottom matching (lower left, blue and yellow edges) of the rotation system. Center: the squares used as rotations in the rotation system, and their grid coordinates.
Right: the partial order on rotations.}
\label{fig:6x8}
\end{figure}

\begin{theorem}
\label{thm:grid}
The $a \times b$ grid graph $(a,b>1)$ can be realized as a graph of stably matchable pairs if and only if $ab$ is even and $\min\{a,b\}\ne 3$.
\end{theorem}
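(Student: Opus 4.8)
The plan is to prove both directions by reducing to the combinatorics of rotation systems established in \autoref{lem:rot-is-stable}, treating the parity condition and the condition $\min\{a,b\}\neq 3$ separately. For the necessity of $ab$ being even, I observe that a grid graph is connected and bipartite with colour classes of sizes $\lceil ab/2\rceil$ and $\lfloor ab/2\rfloor$; when $ab$ is odd these differ, so the graph has no perfect matching and in particular is not matching-covered, and \autoref{obs:necessary} rules it out. Thus I may assume $ab$ is even throughout, and since $\min\{a,b\}=3$ together with $ab$ even forces the other side to be even and at least $4$, the two conditions to be proved necessary are genuinely independent.

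For the harder necessity direction --- that no $3\times b$ grid (with $b$ even and $b\ge 4$) is a graph of stably matchable pairs --- I would argue by contradiction in the spirit of \autoref{obs:k33-e}, assuming a rotation system and showing it cannot close up. First I pin down the forced part of the top and bottom matchings: the four degree-two corner vertices put both of their incident edges into the $2$-factor $\top\cup\bot$, which saturates the two middle-row end vertices $(1,0)$ and $(1,b-1)$ with vertical edges and so forces the extreme middle horizontals $(1,0)(1,1)$ and $(1,b-2)(1,b-1)$ to lie outside $\top\cup\bot$. I then invoke the structural fact proved inside \autoref{lem:rot-is-stable}: in any rotation system the edges incident to a fixed vertex are totally ordered, with the $\bot$-edge lowest and the $\top$-edge highest, and each consecutive pair is the lower/upper edge of a common rotation. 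At the two degree-four middle vertices there are exactly two non-$2$-factor edges, whose relative order is a single binary choice; I would check that, whatever these choices are, propagating the forced pairings through the degree-three boundary vertices (whose orders are rigid) produces a putative rotation that is compelled to return to a vertex it has already used, contradicting that a rotation is a simple alternating cycle. The parity of $b$ enters because the trace must run the full length of the width-three strip before it can close, so that the two ends interact.

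For sufficiency I would exhibit an explicit rotation system whenever $ab$ is even and $\min\{a,b\}\neq 3$, and then appeal to \autoref{lem:rot-is-stable}. When $\min\{a,b\}=2$ the grid is a $2\times b$ ladder, which is bipartite, outerplanar and $2$-connected, so \autoref{thm:outerplanar} already applies. When $\min\{a,b\}\ge 4$ (so that, as noted, at least one side is even) I would give the construction indicated in \autoref{fig:6x8}: take $\top$ and $\bot$ to be two perfect matchings whose union is a spanning $2$-factor, use a suitable set of unit squares of the grid as the rotations, and order them by a height-type partial order. The two verifications are that every non-$2$-factor edge lies in exactly two chosen squares (so it is an upper edge of one and a lower edge of the other) while every $\top$- or $\bot$-edge lies in exactly one chosen square (so it is used once, as an upper resp.\ lower edge), and that the resulting cycles alternate correctly; here $\min\{a,b\}\ge 4$ is exactly what gives enough interior room to place $\top\cup\bot$ so that these incidence counts hold, which is impossible in width three.

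The main obstacle is the necessity of $\min\{a,b\}\neq 3$: unlike $K_{3,3}-e$, every such $3\times b$ grid is both matching-covered and has a $2$-factor, so the obstruction is invisible to \autoref{obs:necessary} and must be extracted from the global consistency of the edge-orderings. Making the case analysis at the degree-four vertices uniform in $b$, and isolating the parity feature that makes the forced rotation fail to close, is where the real work lies; the sufficiency construction, by contrast, is routine to verify once the squares and their order are written down.
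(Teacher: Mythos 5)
Your overall case structure (odd $ab$ via \autoref{obs:necessary}, $\min\{a,b\}=2$ via \autoref{thm:outerplanar}, an impossibility argument for width three, explicit rotation systems otherwise) matches the paper's, and the two easy cases are fine. But both hard cases have genuine gaps. For the necessity of $\min\{a,b\}\ne 3$, what you give is a plan, not a proof: the step you yourself flag as ``where the real work lies'' --- making the propagation at the degree-four vertices uniform in $b$ and showing the forced rotation cannot close --- is exactly the content of the theorem, and it is left undone. Your framing also only fits $b=4$: a $3\times b$ grid has $b-2$ degree-four vertices, not two, so there is no ``single binary choice'' to case on in general. The paper avoids vertex-by-vertex propagation entirely; it localizes the obstruction to the block $B$ of six squares at one end of the grid, separated from the rest by three edges, and argues about how rotations change the set of separating edges in use: parity of the distances between the separating edges' endpoints forces any rotation through the middle separating edge to add or remove two of them, yet some rotation must introduce exactly one, and such a rotation is then pinned to the five-vertex path in $B$, forcing it to traverse a degree-three vertex using only $\top$/$\bot$ edges --- a contradiction. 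Note also that the relevant parity there is of distances between separating-edge endpoints, not of $b$; your suggestion that ``the parity of $b$'' is the operative feature points the argument in the wrong direction, since every $3\times b$ grid with $b$ even and $b\ge 4$ must be excluded.

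The sufficiency direction has a second gap: you treat all grids with $\min\{a,b\}\ge 4$ uniformly via the unit-square construction of \autoref{fig:6x8}, but that construction (classifying squares by the parities of their corner coordinates) only tiles correctly when \emph{both} dimensions are even. When one dimension is odd (necessarily $\ge 5$) and the other even --- e.g.\ the $4\times 5$ grid --- the parity classification of squares does not yield the required incidence counts, and the paper is forced to give genuinely different explicit rotation systems (\autoref{fig:5x4}, \autoref{fig:5x6}) together with an extension argument that adds an even number of rows by exploiting the fact that the top matching restricts to a perfect matching on the bottom rows. Declaring this case ``routine to verify once the squares and their order are written down'' does not discharge it, because for odd-by-even grids you never write down which squares to use, and the set you implicitly point to does not work.
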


\begin{proof}
We consider the following cases:
\begin{itemize}
\item If $ab$ is odd, the number of vertices is odd and no perfect matching can exist.

\item If $\min\{a,b\}=2$, the graph is outerplanar and realizability follows from \autoref{thm:outerplanar}.

\item If $\min\{a,b\}=3$, suppose for a contradiction that the grid has a rotation system. Consider the block $B$ of six squares at one end of the grid, separated by three edges from the rest of the grid. In any rotation system, the bottom matching includes an even number of these separating edges. A rotation that does not include these edges does not change the number of separating edges that are used. A rotation that uses the middle of these three edges either adds two to the number of separating edges that are used, or removes two of these edges from use, because (by the parity of the distance between the endpoints of the two separating edges that it uses) either both separating edges must be upper in the rotation or both must be lower. However, in any sequence of stable matchings progressing upward through the lattice of stable matchings from its bottom to its top, each of the three separating edges must be introduced exactly once, so there must be a rotation that introduces only one new edge. Such a rotation $R$ must necessarily use the outer two of the three edges separating $B$ from the rest of the graph; again, by the parity of the distance between the endpoints of these outer separating edges, one of them would be an upper edge of any such rotation and the other of them would be a lower edge. Within $B$, the path of $R$ uses an odd number of vertices, so in order for all vertices of $B$ to be matched, the two perfect matchings that are connected by $R$ must both include the middle separating edge of $B$ as one of their matched edge. The remaining five vertices of $B$ induce a path, which must be the path followed by $R$. Because the two corner vertices of this path have degree two, the four edges of this path must all belong to $\top$ or $\bot$. But this means that the rotation $R$ passes through the degree-three vertex at the center of the path using only edges of $\bot$ or $\top$, making it impossible for any other rotation to incorporate the third edge at that vertex into the rotation system. This contradicts our assumption that a rotation system exists, proving that no realization as a graph of stably matchable pairs is possible in this case.

\begin{figure}[t]
\centering\includegraphics[width=0.9\textwidth]{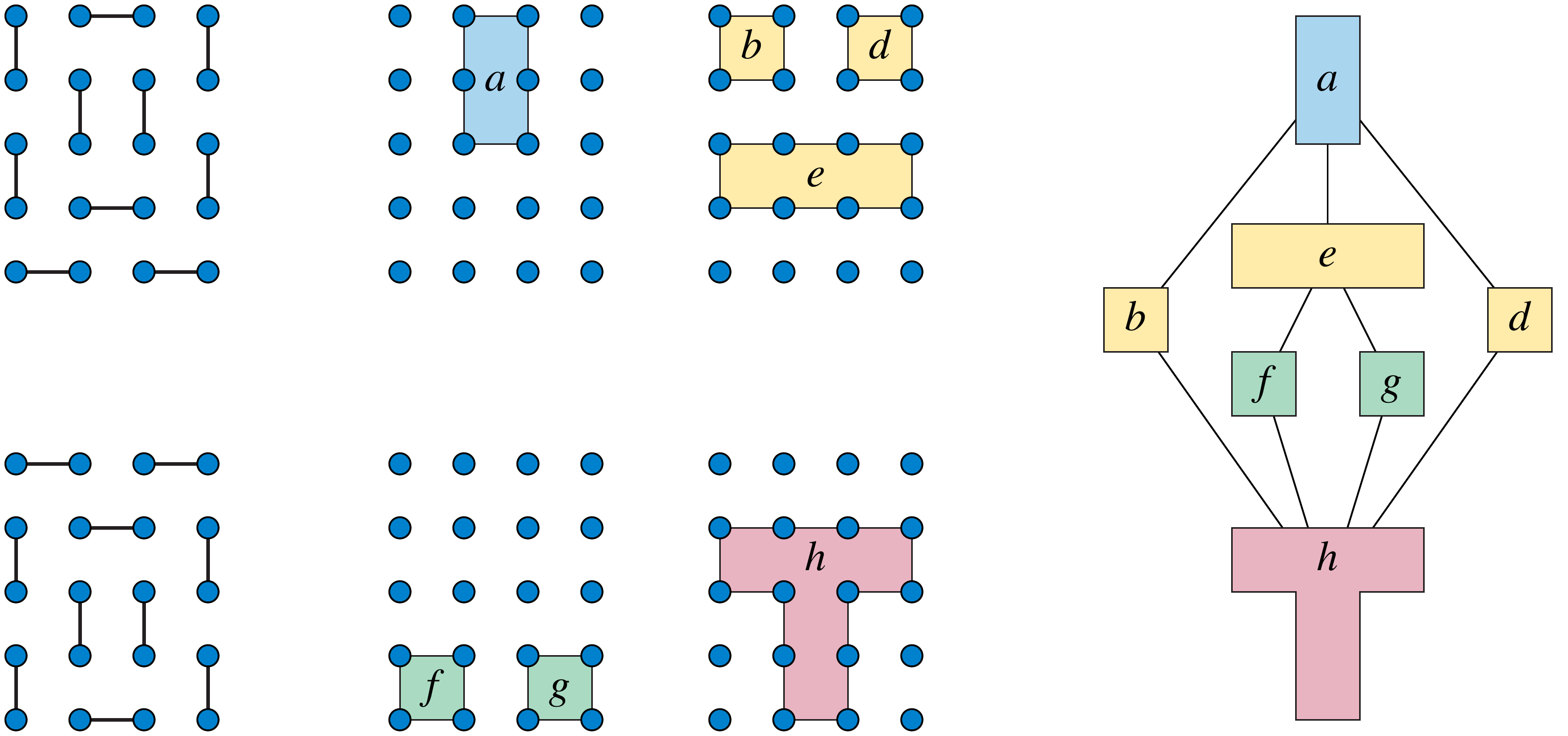}
\caption{A rotation system realizing a $4\times 5$ grid as a graph of stably matchable pairs.}
\label{fig:5x4}
\end{figure}

\begin{figure}[t]
\centering\includegraphics[width=0.9\textwidth]{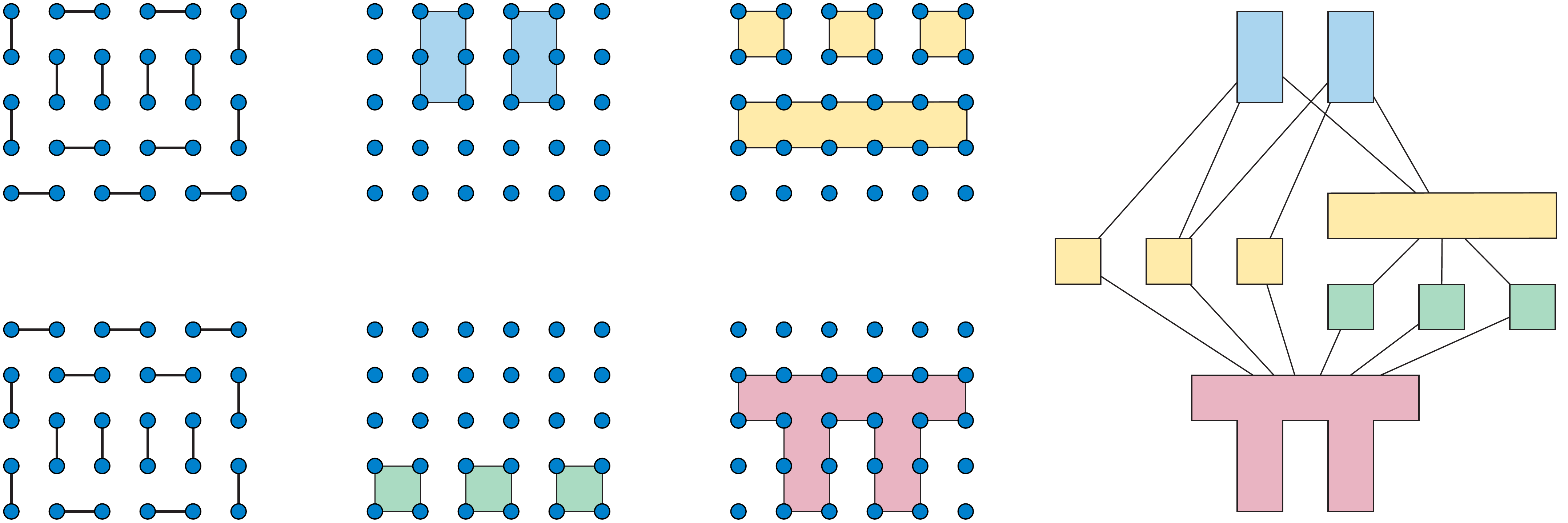}
\caption{A rotation system realizing a $6\times 5$ grid as a graph of stably matchable pairs. Repeating the same pattern of top and bottom matchings and rotations produces a rotation system for any $(2i)\times 5$ grid with $i\ge 2$.}
\label{fig:5x6}
\end{figure}

\item If $a$ and $b$ are both even, we form a rotation system using a subset of the unit squares of the grid as rotations, as illustrated in \autoref{fig:6x8}. If we associate a grid square with its lower left corner (the point of the square whose coordinates are smallest), then we choose the squares whose first coordinate is even and whose second coordinate as odd as the bottom elements of the partial order of rotations. We choose the squares with both coordinates even as intermediate elements in this partial order, and we choose the squares whose first coordinate is odd and whose second coordinate is even as the top elements of the partial order. We do not use squares with both coordinates odd as rotations. The top matching consists of horizontal edges of top squares, and vertical edges of intermediate squares that are not shared with top squares. The bottom matching consists of vertical edges of bottom squares, and horizontal edges of intermediate squares that are not shared with bottom squares.

\item In the remaining cases, the smallest odd dimension of the grid is at least five and the smallest even dimension is at least six. \autoref{fig:5x4} depicts a rotation system realizing the $4\times 5$ grid and \autoref{fig:5x6} shows how to extend this to all $(2i)\times 5$ grids for $i\ge 2$. Examination of each grid edge in the figures shows that they are all covered by two elements of the rotation system and top and bottom matching, one upper and one lower, and examination of each rotation shows that they all alternate between upper and lower edges, as required for a rotation system.

Note that, in each of the resulting rotation systems, the top matching includes a perfect matching of the vertices on the bottom rows of the rotation system. This allows these rotation systems to be extended by any even number $2i$ of rows below these bottom rows, by using the square rotations from the bottom $2i+1$ squares of the pattern used in the previous case for even-by-even grids, maintaining the partial order among these rotations and placing them in the partial order above all of the rotations used for the top five rows of the grid.
\end{itemize}
In all cases with $ab$ even, the grid is realizable when $\min\{a,b\}\ne 3$ and unrealizable otherwise.
\end{proof}

\subsection{Product graphs}

Given two graphs $G$ and $H$, their Cartesian product $G\mathop{\square} H$ is a graph whose vertices belong to the Cartesian product of sets $V(G)\times V(H)$ and whose edges connect two elements of $V(G)\times V(H)$ (two pairs of a vertex in $G$ and a vertex in $H$) when the vertices in $G$ are equal and the vertices in $H$ are adjacent, or vice versa. We may identify the edge set of the product graph with the disjoint union $\bigl(V(G)\times E(H)\bigr)\cup\bigl(E(G)\times V(H)\bigr)$. For instance, the grid graphs of the previous section are Cartesian products of path graphs. If both $G$ and $H$ are bipartite, then so is their product: if we two-color both $G$ and $H$, then one side of the bipartition of $G\mathop{\square} H$ consists of monochromatic pairs of a vertex from $G$ and a vertex from $H$, and the other side consists of bichromatic pairs.

\begin{theorem}
\label{thm:product}
If both $G$ and $H$ are realizable as graphs of stably matching pairs, then so is $G\mathop{\square} H$.
\end{theorem}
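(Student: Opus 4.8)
The plan is to build an explicit rotation system for $G\mathop{\square} H$ out of the rotation systems that realize $G$ and $H$, and then invoke \autoref{lem:rot-is-stable}. Since $G$ and $H$ are graphs of stably matchable pairs, each is described by a rotation system: call them $\mathcal{R}_G$ (with top and bottom matchings $\top_G,\bot_G$ and a rotation poset $P_G$) and $\mathcal{R}_H$. The product is bipartite (as noted just above the statement), so it suffices to exhibit top and bottom matchings, a partially ordered family of alternating cycles, and to check the two defining axioms of a rotation system: that every edge lies in exactly two extended rotations, once as an upper edge and once as a lower edge, and that every cycle alternates between upper and lower edges.

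I would view $G\mathop{\square} H$ as $|V(H)|$ horizontal copies of $G$ (one for each vertex of $H$) glued along $|V(G)|$ vertical copies of $H$, and correspondingly assemble the rotations from three families. First, for each vertex $h$ of $H$ and each rotation of $P_G$, a horizontal copy of that rotation living in the $h$-th copy of $G$; these carry the ``pure'' $G$-rotation edges. Second, symmetrically, vertical copies of each $H$-rotation, one per vertex of $G$. Third, and crucially, \emph{mixed} square rotations of the form $e_G\times e_H$, where $e_G$ and $e_H$ range over the edges of the 2-factors $\top_G\cup\bot_G$ and $\top_H\cup\bot_H$ (whose existence is guaranteed by \autoref{obs:necessary}); such a square is a $4$-cycle using two parallel copies of $e_G$ and two parallel copies of $e_H$. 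These squares are exactly the device used to realize rectangular grids in \autoref{thm:grid}, and they are what lets the 2-factor edges of one factor interact with those of the other. Alongside the rotations I would define \emph{blended} top and bottom matchings that mix $G$-edges and $H$-edges, again following the grid pattern of \autoref{thm:grid}, chosen precisely so that each 2-factor edge receives the correct number of occurrences.

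For the partial order I would inherit $P_G$ within each horizontal layer and $P_H$ within each vertical layer, order the mixed squares by a lexicographic combination of the two factor orders (first flipping along one factor, then along the other, as the torus of two even cycles already illustrates), and glue the three families together by the general rule that $C\le C'$ whenever the two cycles share an edge that is a lower edge of $C$ and an upper edge of $C'$. The blended top and bottom matchings then sit, as required, at the extremes of this order.

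The main obstacle is the bookkeeping at the 2-factor edges, and I expect essentially all of the real work to lie there. In the factor systems these edges sit at the very top or bottom and occur in only one rotation each, but in the product a single such edge must be shared among a layered copy of a factor rotation, several mixed squares, and possibly the blended top or bottom matching, and the total must come to exactly two occurrences with matching upper/lower parities. The tempting shortcut of placing $\top_G$ and $\bot_G$ in every horizontal layer fails for exactly this reason: it reproduces $\mathcal{R}_G$ correctly layer by layer but leaves every 2-factor edge of $H$ covered only once. Resolving this requires choosing the blended matchings and the assignment of shared edges carefully, generalizing the explicit square-and-matching construction of \autoref{thm:grid}. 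Once the two axioms are verified, \autoref{lem:rot-is-stable} immediately yields that $G\mathop{\square} H$ is a graph of stably matchable pairs, with balance following automatically.
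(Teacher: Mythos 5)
Your high-level frame agrees with the paper's---layered copies of each $G$-rotation (one per vertex of $H$), layered copies of each $H$-rotation (one per vertex of $G$), some mixed rotations to handle the 2-factor edges, and an appeal to \autoref{lem:rot-is-stable}---but the proof has a genuine gap exactly where you say it does: the ``bookkeeping at the 2-factor edges'' is the entire content of the theorem, and you defer it rather than resolve it. Worse, the specific plan you sketch does not work as stated. If you take \emph{all} squares $e_G\times e_H$ with $e_G\in\top_G\cup\bot_G$ and $e_H\in\top_H\cup\bot_H$, then an edge $e_G\times h$ with $e_G\in\top_G$ lies in one square for each 2-factor edge of $H$ incident to $h$ (generically two: one from $\top_H$ and one from $\bot_H$), and it must also lie in the layer copy of the $G$-rotation in which $e_G$ is an upper edge; that already exceeds the required total of exactly two occurrences. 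So you would need to select a proper subset of the squares, coordinated with unspecified ``blended'' matchings, so that every edge comes out covered exactly twice with opposite parities. The selection in \autoref{thm:grid} relies on integer coordinate parities of the grid squares, which have no analogue for the 2-factors of arbitrary realizable graphs, and you give no argument that a consistent global selection always exists.

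The paper dissolves this difficulty with an asymmetric choice that your symmetric, square-based plan misses. It takes the product's top matching to be $\top=V(G)\times\top_H$ (pure $H$-edges) and its bottom matching to be $\bot=\bot_G\times V(H)$ (pure $G$-edges)---not blended matchings---and for the mixed rotations it uses the cycles of the 2-regular graph $(\top_G\times V(H))\cup(V(G)\times\bot_H)$. This graph is 2-regular because every vertex $(g,h)$ has exactly one incident copy of a $\top_G$-edge and one incident copy of a $\bot_H$-edge (no isolated vertices), so its cycles are forced and no selection problem arises; each cycle automatically alternates between $G$-type and $H$-type edges. Placing each such cycle above every $G$-rotation copy it meets and below every $H$-rotation copy it meets, one checks case by case that every edge of $G\mathop{\square} H$ lies in exactly two extended rotations with the correct upper/lower parities (for instance, $e_G\times h$ with $e_G\in\top_G$ is upper in the layer copy of the rotation where $e_G$ is upper, and lower in its mixed cycle), and the partial order is acyclic because it is stratified: $G$-copies below mixed cycles below $H$-copies. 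If you want to salvage your write-up, replacing the squares and blended matchings with this asymmetric construction is the missing idea; as written, the proposal is a plan whose critical step is acknowledged but not carried out.
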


\begin{proof}
We assume without loss of generality that neither $G$ nor $H$ has isolated vertices, as any such vertices lead in $G\mathop{\square} H$ to disconnected copies of $G$ or $H$ that can be handled separately.

A preference system realizing $G\mathop{\square} H$ from separate realizations of $G$ and $H$ can be derived by giving the students preferences that prioritize edges coming from $G$ over edges coming from $H$, and giving the residencies preferences that prioritize edges coming from $H$ over edges coming from $G$. However it is easier to verify that this does not lead to additional undesired matched edges if we describe more explicitly a rotation system realizing $G\mathop{\square} H$, derived from rotation systems for $G$ and for $H$.

Let $\top_G$, $\top_H$, $\bot_G$, and $\bot_H$ denote the top and bottom matchings of the rotation systems for $G$ and $H$. We form a rotation system whose top matching is $\top=V(G)\times \top_H$ (using the top matching in each copy of $H$, as preferred by the residencies) and whose bottom matching is $\bot=\bot_G\times V(H)$. For each rotation $R$ in $G$ we form $|V(H)|$ copies of $R$ in the rotation structure for $G\mathop{\square} H$, one for each vertex in $H$, and similarly for each rotation in $H$ we form $|V(G)|$ copies in the rotation structure for $G\mathop{\square} H$, one for each vertex in $G$. These copies are partially ordered in the same way they are in the rotation systems of $G$ or $H$, with two copies of rotations being incomparable when they correspond to different vertices in $G$ or $H$.

Finally, we add to these rotations the cycles of the 2-regular graph $(\top_G\times V(H))\cup (V(G)\times\bot_H)$.
Each such cycle is partially ordered above the copies of rotations from $G$ with which it shares an edge, and below the copies of rotations from $H$ with which it shares an edge, causing it to alternate between upper and lower edges (as required in a rotation system) and providing each of the edges of this 2-regular graph with a second rotation that it belongs to (as also required in a rotation system).
\end{proof}

The examples of the grid graphs show that, although the realizability of both $G$ and $H$ is sufficient for the realizability of $G\mathop{\square} H$, it is not always necessary. The example of $K_2\mathop{\square}K_{1,3}$ in \autoref{fig:unmatchable} (left) shows that realizability of a single factor and bipartiteness of the other factor is not always sufficient.

\section{How does the graph constrain the lattice?}
\label{sec:lattice}

In this section we revisit the result of  Gusfield, Irving, Leather, and Saks that every finite distributive lattice is the lattice of stable matchings of a stable matching instance~\cite{GusIrvLea-JCTA-87}. We ask whether this remains true for natural subclasses of the graphs of stably matchable pairs, or whether restricting the class of graphs that we consider also restricts the classes of lattices that they can realize.

\subsection{Background and definitions}

Several of our characterizations will relate the lattices of stable matchings to the lattices of closures of a directed graph. In this context, a \emph{closure} is a subset of the vertices of a given directed graph that has no outgoing edges: there is no edge from a vertex in the closure to a vertex outside the closure~\cite{Pic-MS-76,Epp-TALG-18}. The closures of a graph, ordered by subsets, form a distributive lattice, which is the lattice of lower sets in a partial order on strongly connected components of the given graph, defined by setting $X\le Y$ whenever a vertex of component $X$ can be reached from a vertex of $Y$. A maximum-weight closure can be found in polynomial time, and maximum-weight closures on directed acyclic graphs of rotations can be used to find stable matchings that meet additional optimization criteria~\cite{IrvLeaGus-JACM-87}.

A directed graph may be obtained from an undirected graph by choosing a direction for each undirected edge, and a directed graph constructed in this way is called an \emph{orientation} of the undirected graph; for instance, \emph{oriented planar graphs} are the orientations of undirected planar graphs. A \emph{minor} of a graph is any graph that can be obtained from it by a (possibly empty) sequence of edge contractions, edge deletions, and vertex deletions.

A \emph{comparability} graph is a graph whose vertices are the elements of a partially ordered set and whose edges connect pairs of elements that are related in the partial order. An \emph{intersection graph} is a graph whose vertices are members of a family of sets and whose edges connect pairs of vertices whose sets have a nonempty intersection. A \emph{string graph} is an intersection graph of a family of curves in the plane~\cite{EhrEveTar-JCT-76}. (Sometimes a restriction is added that at most two curves intersect at any point; this does not affect the resulting class of graphs, as any family of curves can be perturbed to meet this restriction without adding or removing any intersection pairs.)

\subsection{Subcubic graphs}
Although the condition that every element have only three choices of matches may seem restrictive, subcubic graphs still have much of the general structure of arbitrary stable matching instances, as the following result demonstrates.

\begin{figure}[t]
\includegraphics[width=\textwidth]{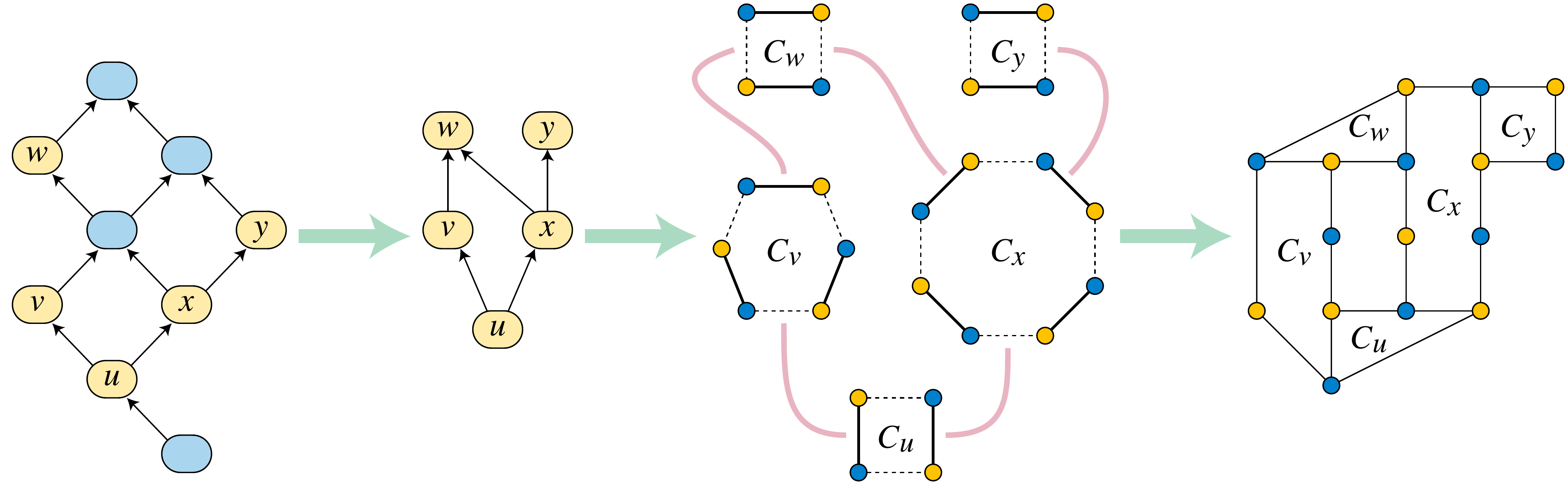}
\caption{Construction of \autoref{thm:dl2sub3} from an arbitrary finite distributive lattice (far left) through the Hasse diagram of its partial order of join-irreducibles (middle left) to an alternating cycle for each join-irreducible (middle right, with upper edges shown solid and lower edges dashed, and with light red curves showing pairs of cycle edges both associated with the same Hasse diagram edge), merged to form a subcubic graph of stably matchable pairs (far right).}
\label{fig:dl2sub3}
\end{figure}

\begin{theorem}
\label{thm:dl2sub3}
Every finite distributive lattice is isomorphic to the lattice of stable matchings of a stable matching instance whose graph of stably matchable pairs is subcubic.
\end{theorem}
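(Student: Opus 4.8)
The plan is to construct directly from $L$ a rotation system (in the sense defined just before \autoref{lem:rot-is-stable}) whose partial order of rotations is order-isomorphic to the poset $P$ of join-irreducibles of $L$, and whose associated graph of stably matchable pairs is subcubic. Given such a system, \autoref{lem:rot-is-stable} produces an actual stable matching instance realizing it, while combining the correspondence between stable matchings and lower sets of the rotation poset (recorded in the Preliminaries) with Birkhoff's representation theorem~\cite{Bir-DMJ-37} identifies its lattice of stable matchings with the lattice of lower sets of $P$, which is $L$. So it suffices to build a subcubic rotation system whose partial order of rotations is isomorphic to $P$.

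I would introduce one rotation $R_p$ for each $p\in P$ and set $R_p\le R_q$ exactly when $p\le q$. Each $R_p$ is a single alternating cycle assembled from three kinds of edges: one lower edge for every covering relation $q\lessdot p$ in the Hasse diagram of $P$, one upper edge for every covering relation $p\lessdot r$, and additional \emph{filler} edges (upper fillers and lower fillers) used to pad the cycle, chosen in equal excess on the two sides so that the cycle alternates and closes up. For each covering relation $q\lessdot p$ I identify the designated upper edge of $R_q$ with the designated lower edge of $R_p$, so this single shared edge lies in exactly the two extended rotations $R_q$ and $R_p$, being upper in the lower rotation and lower in the upper one. Each upper filler is placed in $\top$ and each lower filler in $\bot$, so it too lies in exactly two extended rotations (its own cycle and $\top$, or its own cycle and $\bot$). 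With the rotation order equal to $P$, every non-filler upper edge of $R_p$ is shared with some $R_r>R_p$ and every non-filler lower edge with some $R_q<R_p$, so every cycle meets the alternation condition of a rotation system.

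The one delicate point, and the main obstacle, is keeping the graph subcubic. A vertex receives degree three exactly when two rotations meet at it, which occurs at the two endpoints of each shared edge; it would receive degree four only if it were an endpoint of two distinct shared edges, that is, if some $R_p$ had two non-filler edges meeting at a common vertex. I would avoid this by making each cycle $R_p$ long enough: since the cycle alternates between upper and lower edges, its non-filler edges can be made pairwise non-adjacent and each flanked on both sides by filler edges, simply by inserting a short alternating block of fillers between consecutive non-filler edges around the cycle. With this arrangement every vertex lies on at most two rotations and so has degree at most three. Moreover $\top$, taken as the set of all upper fillers, then covers each vertex exactly once through its unique incident upper-filler edge, and likewise $\bot$ through the lower fillers, so $\top$ and $\bot$ are genuine perfect matchings as required.

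It then remains only to check that no unintended identifications occur, so that distinct rotations share exactly the edges forced by covering relations and the rotation poset is precisely $P$; this is routine once distinct covering relations are assigned distinct designated edges. The degenerate case where $P$ is empty (so $L$ is a one-element lattice) is handled by taking no rotations and $\top=\bot$ a single matching. By \autoref{lem:rot-is-stable} and Birkhoff's theorem, the resulting instance then has subcubic graph of stably matchable pairs and lattice of stable matchings isomorphic to $L$.
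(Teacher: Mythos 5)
Your proposal is correct and takes essentially the same approach as the paper's proof: the paper likewise builds one alternating cycle per join-irreducible, designates pairwise non-adjacent edges of each cycle for the covering relations of the Hasse diagram, merges the designated upper edge of the lower rotation with the designated lower edge of the upper rotation (your shared edges), lets the unassociated upper and lower edges (your fillers) form $\top$ and $\bot$, and concludes via \autoref{lem:rot-is-stable} and Birkhoff's theorem. The only differences are cosmetic, e.g.\ the paper bounds the cycle lengths explicitly ($4$ or at most $2d_v+4$) rather than describing filler blocks.
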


\begin{proof}
From an arbitrary given finite distributive lattice, construct the partial order of its subset of join-irreducible elements, and let $H$ be the Hasse diagram of this partial order. That is, $H$ has a vertex for each join-irreducible element of the lattice, and an edge from the lower to the upper of each two comparable elements that have no other join-irreducible element between them.

For each vertex $v$ of $H$, construct a cycle graph $C_v$, of even length, with its edges labeled alternatingly as upper and lower and with its vertices labeled alternatingly as students and residencies. Each edge $u\rightarrow v$ in $H$ should be associated with one of the lower edges of $C_v$, and each edge $v\rightarrow w$ in $H$ should be associated with one of the upper edges of $C_v$, in such a way that the edges of $C_v$ that are associated with edges of $H$ in this way are not adjacent to each other in $C_v$. If $v$ has total degree $d_v$ in $H$, then $C_v$ can be constructed with length $4$ if $d_v=1$, or with length at most $2d_v+4$ if $d_v>1$.

Form a graph $G$ from the union of the cycles $C_v$ by, for each edge $u\rightarrow v$ in $H$, merging together the edges of $C_u$ and $C_v$ that are associated with $u\rightarrow v$ (at the same time merging the two student endpoints of these edges and the two residency endpoints of these edges). Then $G$ is subcubic, because each merged pair of endpoints produces a vertex of $G$ with degree three (one merged edge and its two neighbors in the two cycles from which it was merged) and each unmerged vertex of one of the cycles $C_v$ continues to have degree two.
Form also a rotation system in $G$, in which the rotations are the cycles $C_v$, partially ordered by the partial order for which $H$ is the Hasse diagram, in which $\top$ is the matching formed by the unassociated upper edges of cycles $C_v$, and in which $\bot$ is the matching formed by the unassociated lower edges of cycles $C_v$. Both $\top$ and $\bot$ are perfect matchings, because each merged or unmerged vertex in $G$ has exactly one neighbor in $\top$ and one in~$\bot$.

Because the subcubic graph $G$ has a rotation system, it follows from \autoref{lem:rot-is-stable} that there is a system of preferences for which this rotation system describes all of the stable matchings. Because both the lattice of stable matchings for these preferences and the initially given finite distributive lattice can be described as the lower sets of isomorphic partial orders, they are isomorphic lattices.
\end{proof}

The construction of \autoref{thm:dl2sub3} is illustrated in \autoref{fig:dl2sub3}.

Recall that in \autoref{thm:eq-deg-match} we proved that if a subcubic graph has a perfect matching where all matched edges connect equal-degree vertices, then it is a graph of stably matchable pairs. Call a graph with this property \emph{degree-matchable}. The \emph{height} of a partial order is the maximum number of elements in any chain, a totally-ordered subset of elements. A \emph{bipartite orientation} of an undirected bipartite graph is an orientation in which all edges are directed in the same way, from one side of the bipartition to the other. The closures of a bipartite orientation are the same as the lower sets of a partial order of height at most two, on the vertices of the bipartite graph, with $u\le v$ if there is an edge from $v$ to $u$.

\begin{theorem}
\label{thm:height2}
If a stable matching instance has a subcubic degree-matchable graph as its graph of stably matchable pairs,
then its lattice of stable matchings is isomorphic to the lattice of closures of a bipartite orientation, or equivalently the lattice of lower sets of a partial order of height at most two. Every lattice of closures of a bipartite orientation, or lattice of lower sets of a partial order of height at most two, can be realized as the lattice of stable matchings
of an instance whose graph of stably matchable pairs is subcubic and degree-matchable.
\end{theorem}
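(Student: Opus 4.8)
The plan is to reduce both implications to a single statement about the \emph{rotation poset} (the partial order on the rotations of an instance). Since the rotations are exactly the join-irreducibles of the lattice of stable matchings, Birkhoff's theorem identifies the lattice with the lattice of lower sets of the rotation poset, and the equivalence recorded just before the theorem (closures of bipartite orientations $=$ lower sets of height-$\le 2$ posets) lets me restate the claim as: (forward) every instance whose graph of stably matchable pairs is subcubic and degree-matchable has a rotation poset of height at most two; and (reverse) every poset of height at most two is the rotation poset of some instance realizing a subcubic degree-matchable graph.

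For the forward direction I would fix the instance and recover the canonical data of \autoref{thm:eq-deg-match}: $\top$ and $\bot$ are determined, the \emph{middle edges} (those in neither) form a perfect matching of $G[3]$, and deleting them leaves the degree-one edges together with disjoint \emph{base cycles} through the degree-two and degree-three vertices, each alternating between top and bottom edges. The heart of the argument is a local colouring of a single rotation $R$: at each degree-three vertex it passes through, $R$ performs either that vertex's first partner change (using the middle edge as an \emph{upper} edge; call this \emph{ascending}) or its second change (using the middle edge as a \emph{lower} edge; \emph{descending}). One checks that bottom edges and upper middle edges force ascending endpoints, while top edges and lower middle edges force descending endpoints, so the type is constant along every run of consecutive degree-three vertices and can flip only at a degree-two vertex. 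Because a degree-two vertex has no edges off its base cycle, $R$ must traverse each maximal block of consecutive degree-two vertices (a \emph{run}) in full, and the type is preserved across a run precisely when the run has even length.

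It remains to connect this to degree-matchability, and this is the step I expect to be the main obstacle: bridging the global perfect-matching condition on $G[2]$ to the local alternation of each rotation. The point is that $G[2]$ is exactly the disjoint union of the run-paths (no chord can join two degree-two vertices, since such an edge would lie on a base cycle), so $G[2]$ has a perfect matching if and only if every run has even length. Thus degree-matchability forces every run even, whence all degree-three vertices on any rotation share one type: every rotation is \emph{pure}, entirely ascending or entirely descending. A purely ascending rotation has all of its lower edges in $\bot$, so it is exposed at $\bot$ and is minimal in the rotation poset; dually a purely descending rotation is maximal. Hence every rotation is minimal or maximal and the poset has height at most two; I note that this exposedness argument lets me avoid analysing the general precedence relations of the rotation poset.

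For the reverse direction I would build an explicit rotation system for a given height-$\le 2$ poset $P$ with minimal elements $A$, maximal elements $B$, and covering relation $R\subseteq A\times B$. For each pair $(a,b)\in R$ I create one middle edge (a student--residency pair of degree-three vertices); for each minimal $a$ I thread $a$'s middle edges as the upper edges of a \emph{lower cycle}, joined by fresh bottom edges, and for each maximal $b$ I thread $b$'s middle edges as the lower edges of an \emph{upper cycle}, joined by top edges, finally taking $\bot$ and $\top$ to be all bottom and all top edges. Elements covering, or covered by, exactly one element, and elements comparable to nothing, yield cycles with too few middle edges to close up simply, so I pad them with pairs of degree-two vertices (realizing a wholly isolated element as a standalone $4$-cycle). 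Padding in pairs keeps the graph simple, subcubic, and makes every degree-two run even, so $G$ is degree-matchable; one then verifies the alternation and two-membership conditions, so that this is a valid rotation system whose partial order is exactly $P$, and appeals to \autoref{lem:rot-is-stable} to conclude that the resulting instance has lattice isomorphic to the lower sets of $P$.
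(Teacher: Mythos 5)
Your overall route is the one the paper takes. In the forward direction your ascending/descending classification of rotations is exactly the paper's partition into rotations whose middle edges lie between two bottom edges and rotations whose middle edges lie between two top edges, your run-parity argument is the paper's use of the perfect matching of $G[2]$ to force each rotation to be pure, and both arguments finish by declaring pure rotations extremal. In the reverse direction your cycle-per-element construction with padding is the construction of \autoref{thm:dl2sub3} specialized to a height-two poset, which is precisely what the paper invokes, and your evenness-of-padding argument matches the paper's parity argument for degree-matchability; this direction of your proposal is sound.

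The problem is the step you explicitly flag as a shortcut: ``all lower edges in $\bot$, hence exposed at $\bot$, hence minimal in the rotation poset.'' That implication is not valid, and avoiding ``the general precedence relations of the rotation poset'' is exactly what cannot be done here. The rotation poset is the poset of join-irreducibles of the lattice, and its order relations need not be witnessed by shared edges or vertices: preference-list entries on pairs that are \emph{not} stably matchable create precedences between completely disjoint rotations (the second type of precedence in the Gusfield--Irving theory). Concretely, take three disjoint $4$-cycles on $\{s_{2i-1},s_{2i},r_{2i-1},r_{2i}\}$ for $i=1,2,3$, with $\bot=\{s_jr_j\}$ and $R_i$ the rotation that swaps partners inside the $i$th cycle, and add the entries $s_3\colon r_3>r_1>r_4$, $r_1\colon s_2>s_3>s_1$, $s_5\colon r_5>r_3>r_6$, $r_3\colon s_4>s_5>s_3$. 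Then $(s_3,r_1)$ blocks exactly the matchings in which cycle $2$ is rotated but cycle $1$ is not, $(s_5,r_3)$ blocks exactly those in which cycle $3$ is rotated but cycle $2$ is not, and neither pair occurs in any stable matching; the stable matchings are the four matchings whose set of rotated cycles is a lower set of the chain $1<2<3$. The graph of stably matchable pairs is the three $4$-cycles, which is subcubic and degree-matchable, every $R_i$ is purely ascending (there are no middle edges at all), yet the rotation poset is the chain $R_1<R_2<R_3$, so $R_2$ and $R_3$ are exposed at $\bot$ in your sense without being minimal. To make purity yield minimality/maximality you must first invoke the paper's preliminary reduction to preferences supported only on stably matchable pairs and then check that, for such preferences on a subcubic graph, every precedence in the rotation poset is witnessed by a shared middle edge; only then does the height-two conclusion follow. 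In fairness, the paper's own proof asserts the same extremality claim with no more justification than you give, so you have reproduced its approach faithfully---but taken blind, your argument is incomplete at precisely this point, and the example above shows the missing step is not a formality.
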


\begin{proof}
Let $G$ be a degree-matchable graph of stably matchable pairs. The degree-zero and degree-one vertices of $G$ are irrelevant for its lattices of stable matchings, so assume without loss of generality that there are none. Consider any rotation system for $G$; we wish to show that the lattice of stable matchings of this rotation system is isomorphic to the closures of a bipartite orientation, whose underlying undirected bipartite graph will be the comparability graph of the rotations. To see this, label the edges of $G$ as top, bottom, or middle, accordingly. Then each rotation must alternate between upper and middle edges, or between middle and lower edges, at the degree-3 vertices of $G$, and between upper and lower edges at the degree-2 vertices of $G$. Because $G$ is degree-matchable, every part of a rotation that passes through degree-2 vertices starts and ends on an edge with the same label, and does not allow the part of the rotation that passes through degree-3 vertices to change between upper--middle and middle--lower alternations. Therefore, the rotations can be partitioned into three classes:
\begin{itemize}
\item the rotations with no middle edges, which form isolated vertices in the comparability graph of rotations,
\item the rotations in which every middle edge lies between two top edges, and
\item the rotations in which every middle edge lies between two bottom edges.
\end{itemize}
Each middle edge belongs to one rotation in the second class and one in the third. The rotations in the first and second classes are maximal in the partial order of rotations, and the rotations in the first and third classes are minimal. The lattice of stable matchings can be described as the lattice of lower sets of rotations, and these are just the closures in a directed graph on rotations having an edge from each rotation in the second class to each rotation sharing a middle edge with it in the third class. This graph is a bipartite orientation of the comparability graph of rotations.

In the other direction, suppose we are given a lattice of closures of a bipartite orientation of a graph $G$ and wish to find a corresponding subcubic degree-matchable graph $G'$ of stably matchable pairs for an instance of stable matching with the same lattice. Because a bipartite orientation is always a transitively reduced directed acyclic graph, the join-irreducibles of its lattice of closures are in one-to-one correspondence with its vertices, and its underlying undirected bipartite graph is isomorphic under this correspondence to the comparability graph of its join-irreducibles. When we apply the construction of \autoref{thm:dl2sub3} to this lattice, we obtain a rotation for each vertex of $G$,
with its edges labeled as top, middle, or bottom. The endpoints of middle edges of these rotations are glued to other rotations, producing degree-3 vertices in the resulting graph $G'$ of stably matchable pairs, and these degree-3 vertices have the middle edges as a perfect matching. Because no vertex of $G$ has both incoming and outgoing neighbors in the bipartite orientation, each path of top or bottom edges in each rotation has an even number of degree-2 vertices, and these vertices can be perfectly matched using  the edges of the rotation. Therefore, $G'$ is degree-matchable.
\end{proof}

\subsection{Outerplanar and series-parallel graphs}

Next, we turn to the lattices associated with outerplanar (or more generally series-parallel) graphs of stably matchable pairs. Characterizing these lattices is simplified by an additional assumption, that these graphs are subcubic, because of the following observation, which shows that in the subcubic case the definition of an intersection graph of rotations does not depend on whether we consider a rotation to be a set of vertices or of edges.

\begin{observation}
In a subcubic graph, two rotations share a vertex if and only if they share an edge incident to that vertex. Each vertex of a subcubic graph can be shared by at most two rotations.
\end{observation}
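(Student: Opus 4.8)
The plan is to reduce both assertions to a single local count at the shared vertex, driven by the fact (recorded among the preliminaries) that each pair participating in a stable matching is the upper edge of \emph{exactly one} extended rotation. I would first dispose of the easy direction of the biconditional: if two rotations both contain an edge incident to a vertex $v$, then both contain $v$ as an endpoint of that edge and so share $v$. Everything of substance is in the forward direction and the counting bound, which I would prove together.

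For the counting bound (the ``at most two rotations'' claim), fix $v$ and send each rotation $R$ through $v$ to the edge $u_R$ incident to $v$ that is the upper edge of $R$ at $v$; this is well defined because an alternating cycle uses exactly one upper and one lower edge at each vertex it traverses. The map $R \mapsto u_R$ is injective, since $u_R = u_{R'}$ would exhibit that edge as an upper edge of two extended rotations, contradicting the uniqueness just cited. Moreover the edge of $\bot$ at $v$ is, by the convention that members of $\bot$ count as upper edges of the extended rotation $\bot$, the upper edge of $\bot$ and therefore of no genuine rotation, so it is excluded from the image. Hence the number of rotations through $v$ is at most $\deg(v)-1 \le 2$. (A vertex not matched in $\bot$ is isolated, with no rotation through it, so the bound holds trivially there.)

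For the forward direction, suppose distinct rotations $R$ and $R'$ both pass through $v$. The count just established forces $\deg(v)=3$ and shows these are the only two rotations at $v$; each uses two of the three edges incident to $v$, and two $2$-element subsets of a $3$-element set must intersect, so $R$ and $R'$ share an edge incident to $v$. The one point demanding care — and the main obstacle — is invoking the correct uniqueness property: I must use that an edge is an upper edge of a unique \emph{extended} rotation (which is exactly what excludes the $\bot$-edge and supplies injectivity), rather than the weaker fact that each edge lies in exactly two extended rotations, which by itself would not suffice.
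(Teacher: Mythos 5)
Your proof is correct. The paper states this observation with no proof at all---it is presented as immediate from the definitions---so there is no argument of the paper's to compare against; your write-up supplies the missing justification, and it is sound. It is worth separating what is genuinely needed where: the first sentence of the observation follows from pigeonhole alone (a rotation is a simple alternating cycle, so it uses exactly two of the at most three edges at any vertex it visits, and two $2$-element subsets of a set of size at most $3$ must intersect), whereas the second sentence really does need more than counting edge-slots, since the weaker fact that each edge lies in at most two extended rotations would only bound the number of rotations at a degree-three vertex by three. Your injection of the rotations through $v$ into their upper edges at $v$, combined with excluding the $\bot$-edge (which is upper only in the extended rotation $\bot$, by the paper's convention), is exactly the right mechanism for the sharp bound of two, and you correctly identified that the uniqueness of the extended rotation in which an edge is upper---not mere membership counting---is the property doing the work. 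The only point stated without justification is the parenthetical that a vertex unmatched in $\bot$ lies in no rotation, but this is covered by the preliminaries (rural hospitals theorem: all stable matchings, and hence all rotations, cover the same vertices), so it is an acceptable citation rather than a gap.
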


\begin{lemma}
\label{lem:prism-k33}
Suppose that for a given rotation system of a subcubic graph $G$, the intersection graph of rotations contains a cycle.
Then $G$ has at least one of $K_{3,3}$ or the graph of a triangular prism as a minor.
\end{lemma}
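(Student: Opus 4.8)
The plan is to locate a chordless cycle in the intersection graph of rotations, read off from each rotation a pair of disjoint arcs, and contract these arcs so as to exhibit a cyclic ladder as a minor of $G$. A circular ladder $C_k\mathop{\square} K_2$ contracts down to the prism $C_3\mathop{\square} K_2$, and a M\"obius ladder contracts down to $K_{3,3}$, so whichever of the two ladder types appears yields one of the two required minors.

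First I would record the local structure of the rotation system. By the necessary condition in \autoref{thm:eq-deg-match}, every degree-three vertex is incident to exactly one edge of $\top$, one edge of $\bot$, and one \emph{middle} edge lying in neither, while every degree-two vertex is incident to one edge of $\top$ and one of $\bot$ and has no middle edge. Since each edge of $G$ is an upper edge of exactly one extended rotation and a lower edge of exactly one extended rotation, a middle edge is upper in one rotation and lower in another, so it belongs to exactly two rotations, and both of its endpoints are degree-three vertices. By the preceding observation, two rotations share a vertex if and only if they share a middle edge, and because each degree-three vertex carries only one middle edge, no two distinct middle edges meet at a common vertex.

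Next I would extract the skeleton. As the intersection graph contains a cycle, it contains a chordless cycle $R_0,R_1,\dots,R_{k-1}$ with $k\ge 3$; for each consecutive pair choose a shared middle edge $m_i$ (a \emph{rung}), indices taken mod $k$. The rungs are pairwise distinct and, by the last remark above, have $2k$ distinct endpoints; moreover non-consecutive rotations are vertex-disjoint because the cycle is induced. Each $R_i$ is a simple cycle containing the two disjoint rung-edges $m_{i-1}$ and $m_i$, so deleting these two edges splits $R_i$ into two arcs that join the two endpoints of $m_{i-1}$ to the two endpoints of $m_i$ as a perfect matching. Contracting every arc to a single edge while keeping the rungs produces a cubic graph on the $2k$ rung endpoints: $k$ rungs in a cyclic sequence, with consecutive rungs joined by a two-edge matching. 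Up to swapping the two ends of each rung, such a graph is determined by the parity of the number of ``crossed'' matchings, hence is either a circular ladder or a M\"obius ladder. Since both ladders contract to their six-vertex versions, the prism and $K_{3,3}$ respectively, and a minor of a minor of $G$ is a minor of $G$, the statement follows.

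The main obstacle is ensuring that the arcs can be taken internally vertex-disjoint, so that the simultaneous contraction is well defined and really produces a cyclic ladder. Because the chosen cycle is chordless, two arcs can meet internally only inside a pair of consecutive rotations $R_i,R_{i+1}$, and any such extra shared vertex is, by the local analysis, the endpoint of a \emph{further} middle edge shared by $R_i$ and $R_{i+1}$. I would resolve this by showing that any shared vertex of $R_i$ and $R_{i+1}$ other than the endpoints of $m_i$ can be dealt with --- either by rerouting the two arcs through the additional shared middle edge, or by assigning each such vertex to a single rail when I define the branch sets --- so that the contracted graph remains a genuine cyclic ladder. The delicate point is verifying that this rerouting never collapses the cyclic structure, that is, that it never identifies two rungs or leaves fewer than $k$ rungs, which would destroy the ladder; confirming this is where the real work of the proof lies.
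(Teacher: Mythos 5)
Your reduction in the clean case---where consecutive rotations of the chordless cycle share exactly one middle edge---is correct, and it is a genuinely different strategy from the paper's, but the case you defer at the end is not a routine verification: it is precisely the hard part of the lemma, and neither of your two suggested fixes obviously works. When $R_i$ and $R_{i+1}$ share middle edges beyond the chosen rung $m_i$, the overlap between their arcs is worse than a few shared vertices. Each extra shared middle edge $e$ is an \emph{edge} of both cycles, and since $e\notin\{m_{i-1},m_i,m_{i+1}\}$, it lies on an arc of $R_i$ and on an arc of $R_{i+1}$ simultaneously; so the two ``rails'' share whole edges, not just isolated vertices. Assigning such a vertex ``to a single rail'' then disconnects the other rail's branch set, and ``rerouting through the additional shared middle edge'' changes which rung endpoints the arcs join, so one must re-prove that $k$ distinct rungs in a consistent cyclic order survive---exactly the verification you concede is ``where the real work of the proof lies.'' As written, the proposal proves the lemma only under the additional hypothesis that consecutive rotations of the chosen cycle intersect in a single edge, and there is no argument that this hypothesis can be arranged.

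The paper avoids this case entirely by using the one piece of structure your argument never touches: the partial order on rotations. It takes a \emph{shortest} cycle $C$ in the intersection graph and a rotation $R$ that is \emph{maximal} in the partial order among the rotations of $C$. Maximality forces every edge that $R$ shares with its two neighbors $S$ and $T$ to be a lower edge of $R$, so no two of these shared edges are adjacent, and shortestness ensures $R$ meets no other rotation of $C$. The rest of the cycle is then treated as an amorphous connected subgraph: it attaches to $R$ by four edges at four distinct vertices (two endpoints of $R$--$S$ shared edges, two endpoints of $R$--$T$ shared edges), and a spanning tree of this subgraph together with the four attaching edges is a tree, edge-disjoint from $R$, with four leaves on $R$. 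Subcubicity lets this tree be contracted to one with exactly two internal degree-three nodes, and contracting $R$ to a $4$-cycle through the four attachment points yields either the prism or $K_{3,3}$, depending on how the two pairs of leaves interleave around $R$. No ladder, and no control over how many edges the \emph{other} pairs of rotations share, is ever needed. If you want to salvage your approach, the most promising repair is to import this idea: rather than patching all $k$ rails, isolate a single rotation whose shared edges you can control via maximality, and collapse everything else by connectivity.
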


\begin{proof}
Let $C$ be a shortest cycle in the intersection graph of rotations and let $R$ be a rotation that is maximal among the rotations of $C$, according to the partial order of rotations. Let $S$ and $T$ be the two neighbors of $R$ in $C$. Then all of the edges shared by $R$ with $S$ and $T$ are lower edges in the alternation of upper and lower edges of $R$; in particular, no two of these shared edges can be adjacent. Because $C$ was taken to be shortest, by the observation above, the edges that $R$ shares with $S$ and $T$ are the only parts of $R$ shared with other rotations in $C$.

Let $H$ be the subgraph of $G$ induced by vertices in $C$, let $v$ be any vertex of an edge shared by two rotations of $C$, neither of which is $R$, and let $K$ be the connected component containing $v$ in $H\setminus C$. Then $v$ must have a path through $K$ to rotation $S$, and within $S$ to two endpoints of shared edges in $R$. Symmetrically, $v$ must have a path through $K$ to rotation $T$, and within $T$ to another two endpoints of shared edges in $R$. Therefore, $K$ is connected by $R$ via at least four edges to at least four distinct vertices of $R$. By combining a spanning tree within $K$ with these four connecting edges, we can find a tree in $G$ that is edge-disjoint from $R$ and has leaves at four vertices of $R$. Because $G$ is subcubic, this tree has no degree-four vertices, so we can contract it to a tree with exactly two degree-three internal nodes, still having four leaves at distinct vertices of $R$. Contracting the paths in $R$ between these four vertices produces a minor of $G$ consisting of a 4-cycle from $G$ and a tree with two internal nodes of degree three having these four vertices as its leaves. Depending on the order in which the tree leaves are attached to $R$, this minor can only be either $K_{3,3}$ or the graph of the prism.
\end{proof}

\begin{theorem}
\label{thm:outerplanar-lattice}
If a stable matching instance has a subcubic outerplanar or series-parallel graph as its graph of stably matchable pairs,
then its lattice of stable matchings is isomorphic to the lattice of closures of an oriented forest.
Every lattice of closures of an oriented forest can be realized as the lattice of stable matchings
of an instance whose graph of stably matchable pairs is subcubic and outerplanar.
\end{theorem}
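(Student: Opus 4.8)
The plan is to prove the two implications separately, bridging graph and lattice through the rotation poset and its Hasse diagram: I will show that a subcubic series-parallel (hence also outerplanar) graph forces this Hasse diagram to be a forest, and conversely that the construction of \autoref{thm:dl2sub3}, applied to a forest, produces an outerplanar realization. The engine for the first implication is \autoref{lem:prism-k33}, and the engine for the second is \autoref{thm:dl2sub3} together with \autoref{lem:rot-is-stable}.

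For the necessity direction, let $G$ be subcubic and series-parallel; this subsumes the outerplanar case, since every outerplanar graph is series-parallel. Series-parallel graphs are precisely the graphs with no $K_4$ minor, so the first step is to observe that both the triangular prism and $K_{3,3}$ contain $K_4$ as a minor (each collapses to the $4$-wheel after contracting a single edge, and the $4$-wheel contains $K_4$); consequently neither is a minor of $G$. The contrapositive of \autoref{lem:prism-k33} then shows that the intersection graph of rotations of the instance is acyclic, i.e.\ a forest. The crucial step is to transfer this to the rotation poset $P$: every covering relation of $P$ is witnessed by two rotations acting on a common element and hence sharing a vertex, and by the observation preceding \autoref{lem:prism-k33} a shared vertex in a subcubic graph forces a shared edge. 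Thus the Hasse diagram of $P$ is a subgraph of the intersection graph, and so it is a forest. Orienting each Hasse edge from the larger rotation to the smaller one yields an oriented forest whose reachability order is exactly $P$; its closures are precisely the lower sets of $P$, which correspond to the stable matchings. This exhibits the lattice of stable matchings as the lattice of closures of an oriented forest.

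For the realization direction, begin with an oriented forest $D$ and its lattice of closures $L$. Since a forest contains a unique path between any two vertices, $D$ is already transitively reduced, so the poset of join-irreducibles of $L$ is the reachability order of $D$, whose Hasse diagram is $D$ itself. Feeding $D$ into the construction of \autoref{thm:dl2sub3} as the Hasse diagram $H$ yields a subcubic graph $G'$ with a rotation system whose rotation poset has $D$ as its Hasse diagram; by \autoref{lem:rot-is-stable} the lattice of stable matchings of $G'$ is isomorphic to $L$. It then remains to check that $G'$ is outerplanar. Because $H=D$ is a forest, $G'$ is built by gluing the even cycles $C_v$ to one another along single, pairwise non-adjacent edges in the tree pattern of $D$; such a graph has an outerplanar embedding obtained by rooting each tree of $D$, drawing each $C_v$ as a convex polygon, and attaching every child cycle in the outer region just outside the edge it shares with its parent, so that all vertices lie on the outer face.

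I expect the main obstacle to be the transfer step in the necessity direction --- passing from ``the intersection graph is a forest'' to ``the Hasse diagram of the rotation poset is a forest.'' This hinges on showing that each covering relation of the rotation poset is realized by a shared edge, for which the decisive input is the observation that in a subcubic graph a shared vertex is always a shared edge, so that even the precedences arising from two rotations competing at a common residency or student land in the intersection graph. The two remaining ingredients --- that the prism and $K_{3,3}$ each contain a $K_4$ minor, and that a tree of cycles glued along edges embeds outerplanarly --- are routine by comparison.
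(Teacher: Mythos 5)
Your proof follows the same skeleton as the paper's: the forward direction runs \autoref{lem:prism-k33} against a minor-closed class, and the converse feeds the oriented forest (which, as you correctly note, is transitively reduced) into the construction of \autoref{thm:dl2sub3} and checks that gluing cycles along edges in a forest pattern yields an outerplanar graph. Your converse direction matches the paper's and is fine, and your observation that series-parallel graphs exclude the prism and $K_{3,3}$ because both contain $K_4$ minors correctly fills in a step the paper merely asserts.

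The problem is exactly the step you yourself flagged as the crux. The claim that every covering relation of the rotation poset ``is witnessed by two rotations acting on a common element'' is asserted, never proved; that phrase is a restatement of the desired conclusion, not an argument for it. Nothing in the paper supplies it: the definition of a rotation system constrains the partial order in one direction only (rotations sharing an edge must be comparable) and places no restriction whatsoever on comparabilities --- even covering relations --- between vertex-disjoint rotations. Moreover, for rotation posets of actual preference systems such comparabilities genuinely occur: a blocking pair that is not itself a stably matchable pair can force one rotation to precede another with which it shares no vertex (in the Gusfield--Irving theory this is the second type of rotation precedence). Concretely, one can write down preferences on eight students and eight residencies whose graph of stably matchable pairs is four disjoint $4$-cycles, yet whose rotation poset has two minimal and two maximal rotations with every minimal one below every maximal one; the Hasse diagram of that poset is a $4$-cycle, so your transfer step fails on it even though the graph is subcubic and outerplanar. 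Closing the gap therefore needs more than the shared-vertex-implies-shared-edge observation: one must first restrict to instances whose preference lists contain only stably matchable pairs (the reduction claimed in the paper's preliminaries) and then argue that under this restriction every rotation-poset precedence is generated by shared pairs. In fairness, the paper's own proof makes the same unproven identification in one sentence, simply equating the lower sets of the rotation order with the closures of the oriented intersection forest; but since you isolated this step as the main obstacle, it needed a real argument, and the one offered is circular.
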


\begin{proof}
An outerplanar or series-parallel graph cannot have $K_{3,3}$ or the graph of a prism as a minor. Therefore,
if a stable matching instance has a subcubic outerplanar or series-parallel graph as its graph of stably matchable pairs, by \autoref{lem:prism-k33} the intersection graph of its rotations is a forest. Each edge of this forest corresponds to a pair of rotations that share an edge in the graph of stably matchable pairs, and can be oriented from the upper of the two rotations to the lower one. With this orientation, the lower sets of the partial order of rotations are the same as the closures of the orientation of this forest.

In the other direction, every oriented forest is a transitively-reduced acyclic graph, whose  lattice of closures is the same as the lattice of lower sets of a partial order on its vertices in which $u\le v$ when there is a directed path from $v$ to $u$. Applying the construction of \autoref{thm:dl2sub3} to this lattice produces a graph in which each vertex of the forest is represented as a rotation, and in which the shared edges between these rotations correspond to edges of the forest. Gluing cycles together on shared edges in the pattern of a forest always produces an outerplanar graph, with glued edges interior and unglued edges on the outer face of its outerplanar embedding, as can be proved by induction on the number of gluing steps.
\end{proof}

\begin{figure}[t]
\centering\includegraphics[scale=0.5]{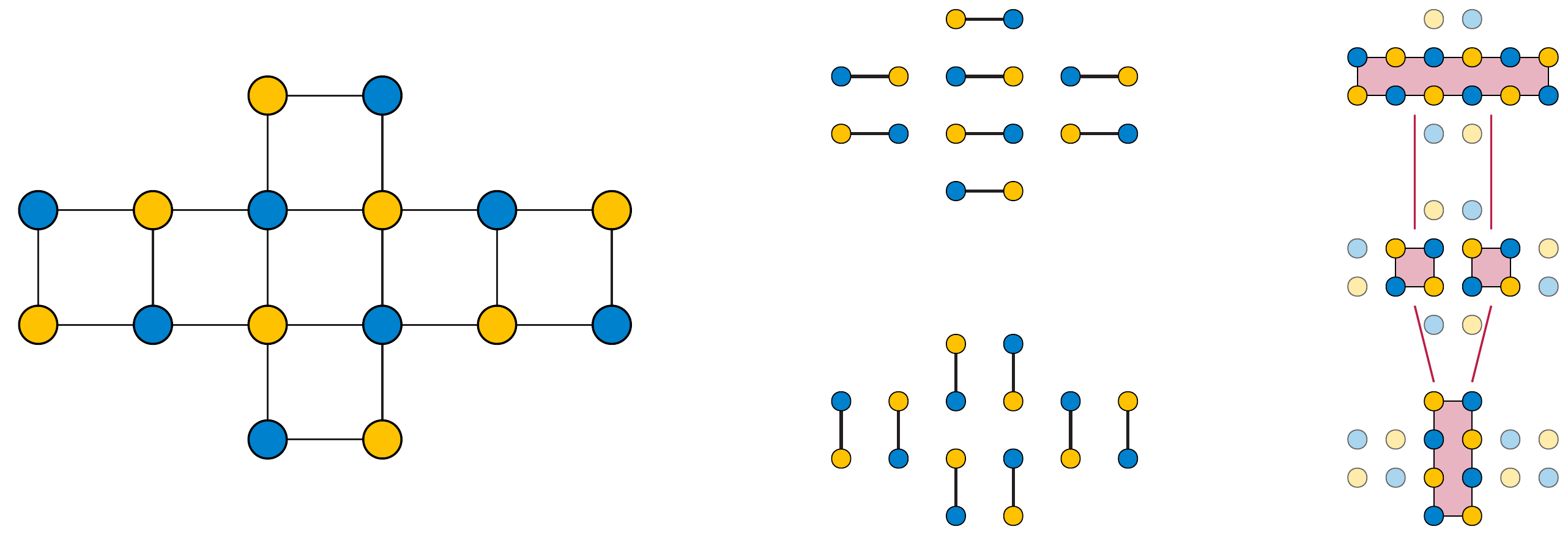}
\caption{An outerplanar graph and rotation system for which the transitive reduction of the partial order on rotations is not an oriented forest.}
\label{fig:cyclic-outerplanar}
\end{figure}

\autoref{fig:cyclic-outerplanar} shows that the assumption that the graph is subcubic is necessary for \autoref{thm:outerplanar-lattice}. Outerplanar graphs that are not subcubic may come from rotation systems whose transitive reduction contains a cycle, which would not be possible for orientations of forests.

\subsection{Planar graphs}

Finally, we turn to the lattices associated with planar graphs.

\begin{theorem}
\label{thm:planar-lattice}
If a stable matching instance has a planar graph as its graph of stably matchable pairs,
then its lattice of stable matchings is isomorphic to the lattice of closures of an oriented string graph.
Every lattice of closures of an oriented string graph can be realized as the lattice of stable matchings
of an instance whose graph of stably matchable pairs is planar and subcubic.
\end{theorem}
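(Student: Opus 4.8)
The plan is to prove both directions in parallel with \autoref{thm:outerplanar-lattice}, replacing its forest structure by a string graph and replacing the minor-exclusion step of \autoref{lem:prism-k33} by a direct topological argument in a planar embedding. For the forward direction, fix a planar embedding of the graph of stably matchable pairs and any rotation system, and represent each rotation by the simple closed curve traced by its alternating cycle in the embedding. Two of these curves meet precisely when the two rotations share a vertex: if they share a vertex the curves both run through it, while if they are vertex-disjoint the absence of edge crossings in the embedding keeps the curves apart. Hence the intersection graph $S$ of the curves is a string graph whose edges are exactly the vertex-sharing pairs of rotations. I then use the fact that the rotations passing through any single vertex $v$ are totally ordered: the edges at $v$ form the preference chain of \autoref{lem:rot-is-stable}, each such rotation uses two consecutive chain edges as its lower and upper edge at $v$, and advancing the partner of $v$ up this chain forces these rotations to be applied in chain order. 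Consequently every edge of $S$ joins a comparable pair; orienting each edge from the upper to the lower rotation gives an oriented string graph $\vec S$ whose reachability relation is contained in the rotation partial order $P$. Since $P$ is the transitive closure of exactly these per-vertex chains, all of which live in $S$, the reachability of $\vec S$ is precisely $P$, so the closures of $\vec S$ are the lower sets of $P$, which is the lattice of stable matchings.

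For the reverse direction, I start from a lattice presented as the closures of an oriented string graph $D$ and first replace $D$ by its condensation: contracting each strongly connected component preserves the partial order of components, hence the closure lattice, and string graphs are closed under edge contraction (the union of two crossing curves is again representable by a single curve), so the condensation is an acyclic oriented string graph with the same lattice. Fixing a representation of its underlying string graph by simple curves in general position, I build $G'$ by the gluing recipe of \autoref{thm:dl2sub3}: each curve becomes an even alternating cycle, a rotation, and at each crossing of two curves a local planar gadget merges one edge of each of the two cycles into a single shared edge, producing two vertices of degree three and leaving every other vertex of degree two. Because the gadgets are local modifications of a plane drawing, $G'$ is planar and subcubic, just as tree-pattern gluing produced outerplanarity in \autoref{thm:outerplanar-lattice}. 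I orient each shared edge according to the corresponding edge of $D$, making it a lower edge of the upper rotation and an upper edge of the lower rotation, impose on the rotations the reachability order of $D$, and let $\top$ and $\bot$ consist of the unshared upper and lower edges as in \autoref{thm:dl2sub3}. By \autoref{lem:rot-is-stable} the resulting rotation system realizes a stable matching instance whose lattice is the lattice of lower sets of the reachability order of $D$, namely the closures of $D$.

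The step I expect to demand the most care is checking that the crossing gadget always extends to a legitimate rotation system. The labels of the shared edges of a cycle are forced externally by the orientation of $D$, yet each rotation cycle must still alternate between upper and lower edges; I resolve this by subdividing each cycle with enough degree-two \emph{free} edges between consecutive shared edges, choosing in each gap the parity of the number of inserted edges so that the two prescribed bounding labels are compatible with a single global alternation, and adding one extra pair of edges in one gap to correct the overall length parity if needed. Inserting at least one free edge in every gap also guarantees that the two cycle edges meeting a shared edge are themselves free, which makes $\top$ and $\bot$ (the upper and lower free edges) genuine perfect matchings. A pair of curves crossing several times merely yields several shared edges between the same two rotations; these are harmless, since each is oriented the same way and contributes only a redundant covering pair, and in any case two cycles cannot cross in the plane without meeting, where subcubicity forces them to share an edge. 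Finally, keeping the construction bipartite, by gluing student endpoints only to student endpoints and residency endpoints only to residency endpoints, completes the verification.
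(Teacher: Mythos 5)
Your proposal follows the paper's proof of \autoref{thm:planar-lattice} essentially step for step. In the forward direction both arguments trace each rotation as a closed curve in a fixed planar embedding, observe that two curves can meet only at shared vertices and that the rotations through a single vertex are totally ordered, orient the resulting string graph from the upper rotation to the lower one, and identify closures with lower sets of the rotation order. In the reverse direction both arguments reduce to an acyclic oriented string graph with one simple closed curve per strongly connected component (the paper by explicit curve surgery, you by invoking closure of string graphs under contraction, which is the same fact), run an alternating cycle along each curve, merge one edge of each of the two cycles at every crossing into a single shared edge by a local planar gadget, and repair alternation, evenness, and bipartiteness by choosing the parity of the free paths between consecutive crossings---exactly the paper's choice between two-edge and three-edge paths.

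The one point where you commit to something the paper leaves implicit is your assertion that the rotation partial order $P$ ``is the transitive closure of exactly these per-vertex chains.'' That is precisely the fact the paper needs when it asserts that lower sets of $P$ coincide with closures of the oriented string graph, but neither you nor the paper proves it, and as stated it is false: two vertex-disjoint rotations can be comparable in $P$, via precedences created by pairs that are not stably matchable. Concretely, take students $s_1,\dots,s_4$ and residencies $r_1,\dots,r_4$ with preferences $s_1\colon r_1,r_2$; $s_2\colon r_2,r_1$; $s_3\colon r_3,r_1,r_4$; $s_4\colon r_4,r_3$; $r_1\colon s_2,s_3,s_1$; $r_2\colon s_1,s_2$; $r_3\colon s_4,s_3$; $r_4\colon s_3,s_4$. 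The graph of stably matchable pairs is two disjoint $4$-cycles (planar, subcubic, even outerplanar), the lattice of stable matchings is a chain with three elements, and the two rotations are vertex-disjoint yet comparable: rotating $s_3,s_4$ to $r_4,r_3$ before rotating the other cycle is blocked by the unmatchable pair $(s_3,r_1)$. The rotation curves are disjoint, so your oriented string graph $\vec S$ is edgeless and has four closures, strictly more than the three lower sets of $P$; reachability in $\vec S$ is strictly coarser than $P$. (The same example affects the forward direction of \autoref{thm:outerplanar-lattice}.) Your reverse direction is immune---there the order is generated by shared edges by construction---but the forward direction, in your write-up exactly as in the paper's, needs either an additional argument restricting attention to instances whose rotation order is generated by vertex-sharing comparabilities, or an enlarged oriented string graph that encodes the missing relations.
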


\begin{proof}
If a stable matching instance has a planar graph as its graph of stably matchable pairs,
find a planar embedding of it, and represent each rotation of the instance as the curve through its vertices and edges in the embedding. Every two rotations whose curves intersect are comparable in the partial order of rotations; if they intersect at an edge this is immediate while if they intersect at a vertex they are part of the totally ordered subset of rotations that pass through that vertex. Therefore, we may orient the string graph of the rotations by directing each edge from rotation $R$ to rotation $S$ whenever $S\le R$ in the partial order of rotations; with this orientation, the lower sets of the partial order of rotations coincide with the closures of the oriented string graph.

\begin{figure}[t]
\includegraphics[width=\textwidth]{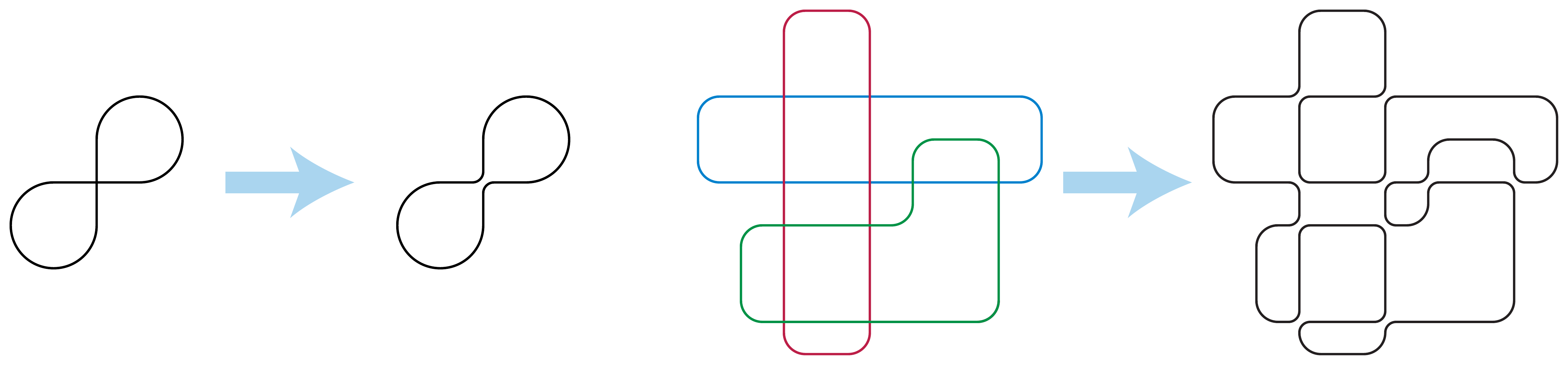}
\caption{Replacing crossings by non-crossing arcs to transform a self-crossing curve into a simple closed curve (left) and to transform a strongly connected component of an oriented string graph (orientation not shown) into a single curve (right).}
\label{fig:simplify}
\end{figure}

In the other direction, let $G$ be an oriented string graph. By doubling and perturbing each curve if necessary we may assume without loss of generality that each of the curves defining $G$ is closed (although possibly self-crossing), that each point of the plane belongs to at most two arcs of curves (either of two different curves or two arcs of the same curve), and that each point belonging to two arcs is a simple crossing point of these arcs. Whenever a curve crosses itself we may remove short lengths of the two crossing arcs, within a small disk surrounding the crossing and disjoint from the other curves, and replace the removed arcs by two non-crossing arcs while preserving the connectivity of the curve; one of the two ways of choosing how to connect these arcs will always produce a curve that remains connected.
Repeating this transformation produces a collection of curves with the same intersection graph but with no self-crossings (\autoref{fig:simplify}, left). Similarly, whenever two curves that belong to the same strongly connected component of $G$ cross, we may replace their crossing by two non-crossing arcs in a way that merges the two curves into a single curve; one of the two ways of choosing how to connect these arcs will merge them. Repeating this transformation produces a collection of curves with the same partial order on the strongly connected components of the oriented string graph but with at most one curve per component (\autoref{fig:simplify}, right). After these transformations we may assume without loss of generality that $G$ is acyclic (although not necessarily transitively reduced) and that each vertex of $G$ is represented by a simple closed curve in the plane. Because $G$ is acyclic, the join-irreducibles of its lattice of closures (which should correspond to the rotations of a rotation system representing this lattice) also correspond one-for-one with the curves in this arrangement. Whenever two curves cross, they correspond to two distinct comparable rotations.

\begin{figure}[t]
\includegraphics[width=\textwidth]{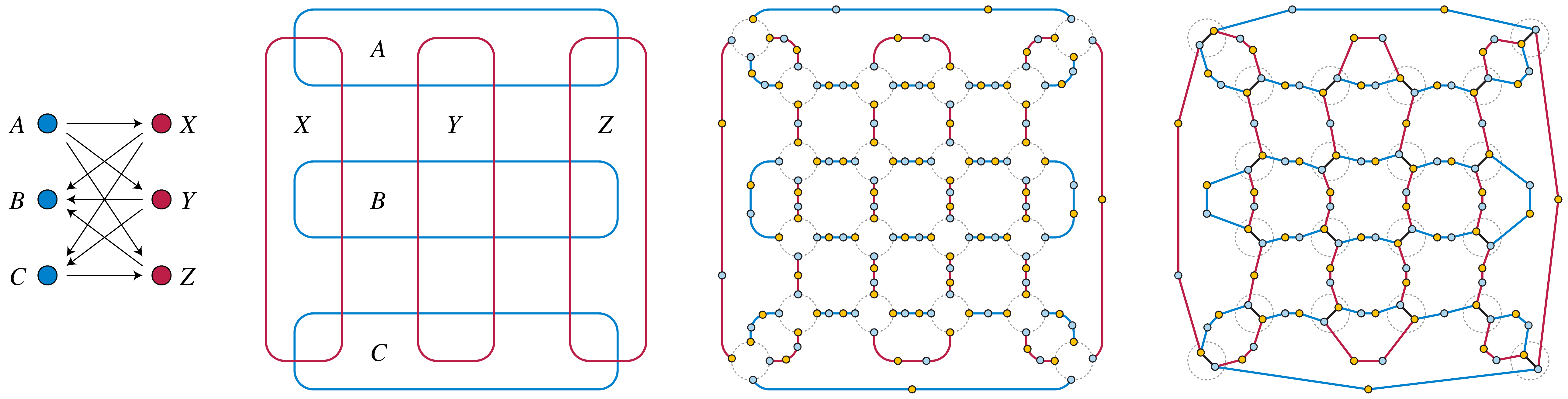}
\caption{Construction for the proof of \autoref{thm:planar-lattice}. Left: An orientation of the string graph $K_{3,3}$, and its representation as a string graph with each vertex represented as a simple closed curve. Right: adding paths of length 2 or 3 on the arcs between crossings (depending on the relative orientation of the two string graph edges at the crossings), and merging the four path endpoints near each crossing into two vertices of the final graph.}
\label{fig:string2planar}
\end{figure}

If any curve of this system of curves has no crossings, replace it by a cycle of four vertices, drawn with its edges along the curve. For the remaining curves, again remove a small disk surrounding each crossing point, partitioning each curve into a sequence of arcs connecting these disks. Along each arc of a curve $C$, draw a path of either three edges and four vertices (if the rotation corresponding to $C$ is above both of the rotations that it crosses at the ends of the arc, or below both of the rotations that it crosses) or two edges and three vertices (if the rotation is above one of the rotations that it crosses, and below the other one of these rotations, as illustrated in \autoref{fig:string2planar} (center right). By this construction, if we were to add an edge in $C$ across each of the small disks, the result would be a cycle of even length; choose arbitrarily for each curve a bipartition of the vertices of this cycle into students and residencies.

At each of the small disks where a pair of curves cross, this construction creates two students and two residencies, in the cyclic order student--student--residency--residency around the disk. Merge the two students into a single vertex within the disk, merge the two residencies into a single vertex within the disk, and add an edge between them (\autoref{fig:string2planar}, far right). The result is a subcubic bipartite graph, embedded  in the plane, and a rotation system for that graph whose rotations are the cycles following each curve of the string graph. The partial ordering on the rotations is the same as the ordering coming from the orientation of $G$. Therefore, it has a lattice of stable matchings that realizes the lattice of closures of the given oriented string graph.
\end{proof}

An example of a graph that is not a string graph can be obtained by subdividing every edge of $K_5$ into a two-edge path. If these paths are all oriented towards their central degree-2 vertex, the resulting oriented non-string graph is transitively reduced and acyclic. Its lattice of closures cannot correspond to a planar graph of stably matchable pairs,
because it is not itself an oriented string graph and it has no supergraph with the same closures.

\section{Hardness}
\label{sec:hardness}

In this section we prove that testing whether a given planar subcubic graph is a graph of stably matchable pairs is NP-complete. Our reduction is from monotone planar 2-in-4-SAT, an NP-complete problem in which Boolean variables and clauses form a planar bipartite graph, each clause is adjacent to exactly four variables, and the task is to determine whether there exists a truth assignment to the variables for which each clause is adjacent to two true and two false variables~\cite{KarKraWoo-DMTCS-07}. Because the proof is complicated, we warm up with a simpler reduction, for nonplanar subcubic graphs, from not-all-equal satisfiability with three variables per clause (NAE3SAT)~\cite{Sch-STOC-78}. This problem takes as its input a collection of Boolean variables, and clauses each of which involves three terms (variables or their negations). The output is true when there is some way to assign truth values to the variables so that each clause involves at least one true term and at least one false term.

\subsection{NAE3SAT reduction}
\label{sec:gadgets}

\begin{figure}[t]
\centering\includegraphics[scale=0.6]{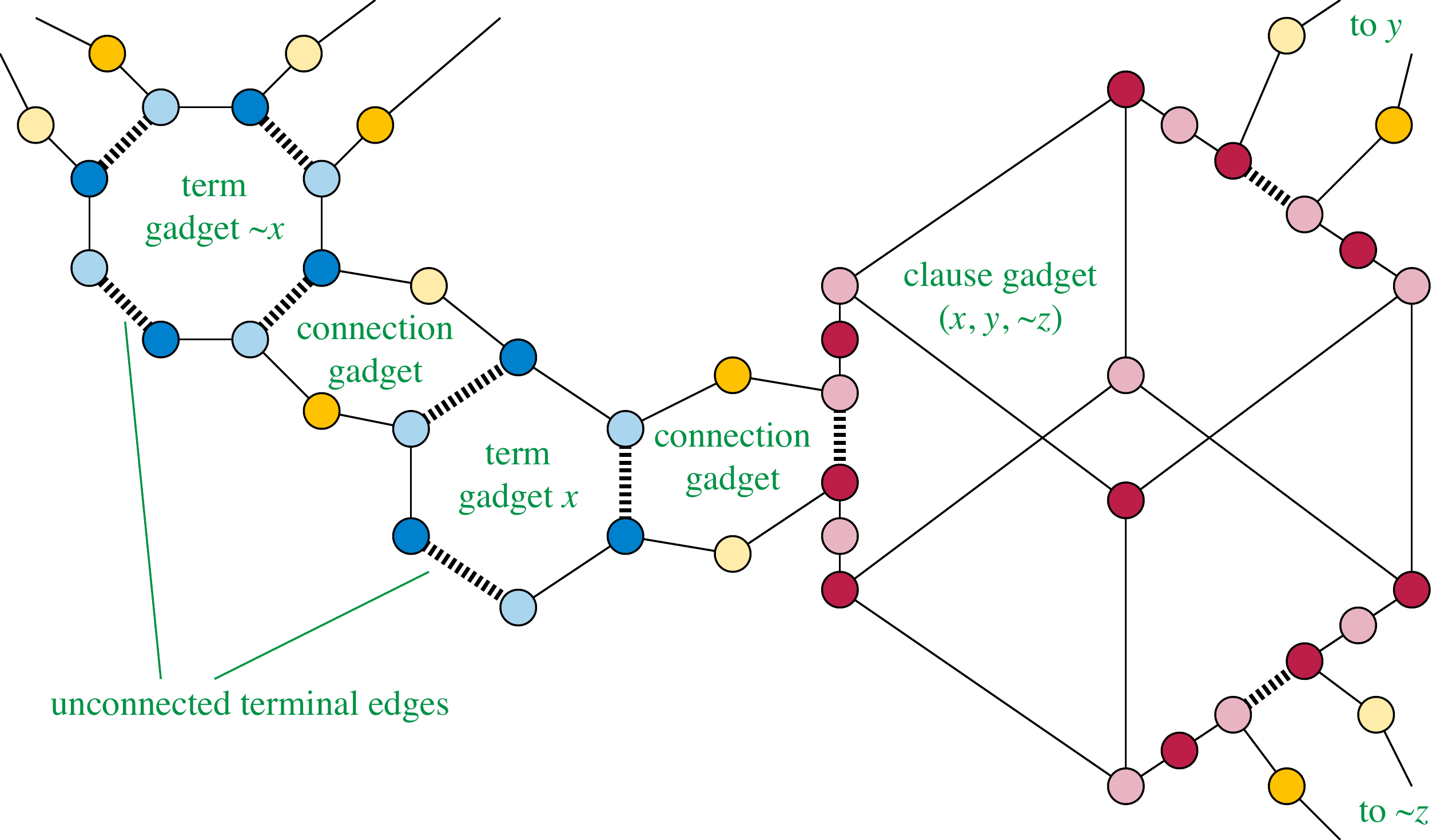}
\caption{Gadgets and their connections for the reduction from NAE3SAT to realizability as a graph of stably matchable pairs. Student vertices are shown as dark colors and residency vertices as light; term gadget vertices are blue, connection gadget vertices are yellow, and clause gadget vertices are red. The thick dashed edges are terminal edges; the truth edges are the solid edges between the blue vertices of term gadgets.}
\label{fig:redux}
\end{figure}

In this section we sketch but do not formally prove our warm-up hardness reduction from (nonplanar) NAE3SAT instances to (nonplanar) subcubic bipartite graphs.  We transform each term that appears at least once in a given NAE3SAT instance into a \emph{term gadget}, a subgraph of what will eventually become a subcubic graph. A term gadget is just a simple cycle of even length, with the vertices alternating between students and residencies, and with the edges alternating between \emph{truth edges} and \emph{terminal edges}. Our overall construction will have the property that, when the resulting graph has a rotation structure, the truth edges will belong either to the top stable matching of the structure, $\top$, or to the bottom stable matching. When the truth edges belong to $\top$, this situation corresponds to a truth assignment that makes the term true, and when the truth edges belong to $\bot$, the corresponding truth assignment makes the term false.
The terminal edges are mostly used to connect to clause gadgets for the clauses that include the given term,
but we will leave one of them unconnected, to force the truth edges to belong to either $\top$ or $\bot$, and we may also need an additional terminal edge to connect to the negation of the same clause.

For each clause in the given NAE3SAT instance, we construct a \emph{clause gadget}, in the form of a subdivision of the graph of a cube. The graph of a cube is 3-regular, with 8 vertices. It has three perfect matchings formed by three parallel edges of the cube, and six perfect matchings formed by choosing two opposite square faces of the cube and matching the vertices within each of these two faces in a way that is not parallel to the matching in the opposite face. However, there are also eight three-edge matchings in the cube that are maximal but not perfect, obtained by removing two opposite vertices of the cube and finding a perfect matching in the remaining six-vertex cycle. Our clause gadget is obtained by finding one of these three-edge maximal matchings, subdividing each of its edges into a five-edge path, and choosing the middle edge of each of these paths as one of the three terminal edges of the gadget. The cube is bipartite and subdividing edges into odd paths does not change its bipartiteness, so we can find a bipartition of it and choose arbitrarily for one side of the bipartition to be students and the other side to be residencies.

The final type of gadget that we will use is a \emph{connector gadget}, which we will use in two ways: to connect a term gadget  for a variable to the term gadget for its negation (when both the variable and its negation appear as terms in the given NAE3SAT instance), and to connect term gadgets to gadgets for the clauses in which they appear as terms.
A connector gadget consists of two paths, each of two edges, connecting the endpoints of one terminal edge to the endpoints of another terminal edge. One of the two paths connects the two student endpoints of the two terminal edges through a central residency vertex, and the other path connects the two residency endpoints through a central student vertex, ensuring that these added paths preserve bipartiteness. One of the two terminal edges should belong to a term gadget and the other should either belong to the gadget for the negation of the term or to a clause gadget.

The overall reduction, then, consists of constructing a term gadget for each term, a clause gadget for each clause, a connector gadget for each pair of terms that are negations of each other, and a connector gadget for each appearance of a term within a clause. The choice of which free edge or which terminal edge within each gadget to use for each connector edge is arbitrary, as long as each one is used at most once. These gadgets and the way they are connected are depicted in \autoref{fig:redux}.

The correctness of our reduction is based on the following analysis of our connection gadgets.

\begin{lemma}
\label{lem:connector-alternate}
Let $X$ be a connection gadget, and let $P$ and $P'$ be three-edge paths in the term or clause gadgets connected by $X$, with the center edge of each three-edge path being one of the terminal edges whose endpoints are connected by $X$. Suppose that the graph produced by the reduction of \autoref{sec:gadgets} has a rotation system, and label the edges of the graph as top, bottom, or middle, according to whether they belong to the matchings $\top$ or $\bot$ of the rotation system or to neither of these two matchings, respectively. Then one of paths $P$ or $P'$ must have edges whose labels alternate between top and middle, and the other must have edges whose labels alternate between middle and bottom.
\end{lemma}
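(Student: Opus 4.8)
The plan is to reduce the claim to a purely local analysis of the top/bottom/middle labels around the connector $X$, and then to eliminate the single configuration that escapes that analysis by invoking the acyclicity of the partial order of rotations.

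First I would name the relevant vertices: let $e$ and $e'$ be the central (terminal) edges of $P$ and $P'$, let the residency center of $X$ join the student endpoints of $e$ and $e'$, and let the student center join the residency endpoints. Since the graph has a rotation system, every degree-two vertex is incident to exactly one top and one bottom edge, and every degree-three vertex to exactly one top, one bottom, and one middle edge. Applying this at the two centers of $X$ (both of degree two) shows that neither connector edge at either center is a middle edge, and that the two connector edges meeting a common center receive opposite labels. The four endpoints of $e$ and $e'$ have degree three, their third incident edge being an outer edge of $P$ or $P'$.

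Next I would carry out the local deduction at the endpoints of $e$. Writing $\alpha,\beta$ for the labels of the two connector edges incident to the endpoints of $e$, each endpoint forces $e$ and the adjacent outer edge of $P$ to take the two labels other than the connector label there. A short case check shows that $P$ alternates between middle and a single non-middle color exactly when $\alpha=\beta$ (its color being the complement $\overline{\alpha}$), whereas if $\alpha\ne\beta$ the edge $e$ is forced to be middle and the two outer edges of $P$ receive opposite non-middle labels, so $P$ fails to alternate. Because each center of $X$ carries opposite labels, the connector labels on the $e'$ side are the complements of those on the $e$ side; hence the two valid colors are automatically opposite, and the conclusion of the lemma follows whenever $\alpha=\beta$. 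The only way to violate it is the \emph{crossed} case $\alpha\ne\beta$, in which both $e$ and $e'$ are middle edges and the connector labels are swapped between the two sides.

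Finally I would eliminate the crossed case. Here I rely on the structure of middle edges: each middle edge $m$ lies in exactly two rotations, as the upper edge of one, $R^{\uparrow}(m)$, and the lower edge of the other, $R^{\downarrow}(m)$, with $R^{\uparrow}(m) < R^{\downarrow}(m)$. I also use that at a degree-three vertex the rotation carrying its middle edge as an upper edge carries its bottom edge as the adjacent lower edge (and symmetrically for top and lower), and that each degree-two center lies on a single rotation threading its bottom edge (lower) into its top edge (upper). Tracing these forced continuations through the two centers of $X$ in the crossed case, the rotation in which $e$ is the lower edge threads through the residency center and emerges as the rotation in which $e'$ is the upper edge, while the rotation in which $e$ is the upper edge threads through the student center and emerges as the rotation in which $e'$ is the lower edge. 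Combined with the two strict inequalities for $e$ and $e'$, these identifications yield the cyclic chain $R^{\uparrow}(e) < R^{\downarrow}(e) = R^{\uparrow}(e') < R^{\downarrow}(e') = R^{\uparrow}(e)$, contradicting the fact that the rotations are partially ordered. The local casework is routine; the main obstacle is this last step, where I must verify precisely how each rotation passes through the degree-two centers of $X$ and correctly match the upper/lower roles of $e$ and $e'$ so that the order relations close into a cycle.
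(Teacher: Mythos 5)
Your proposal is correct, but it takes a genuinely different route from the paper's proof. The paper argues \emph{dynamically}: it fixes a linear extension of the partial order of rotations and walks the lattice of matchings from $\bot$ to $\top$, observing that the first change to the subgraph $X\cup P\cup P'$ cannot be a rotation through a degree-two center of $X$ (such a rotation would freeze that center's match forever, stranding the degree-three vertex at the other end of its matched edge), so the first change must sweep through $P$ or $P'$ and force a bottom/middle alternation there; symmetrically, the last change forces a top/middle alternation, and since at least one path alternates each way, exactly one does. You argue \emph{statically}: the label constraints at degree-two and degree-three vertices reduce everything to the single ``crossed'' configuration in which both terminal edges are middle, and you eliminate that configuration by tracing rotations through the two centers and exhibiting a cycle in the partial order. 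I checked your key identifications under the paper's conventions---a middle edge $m$ satisfies $R^{\uparrow}(m)<R^{\downarrow}(m)$; at a degree-three vertex the middle-upper rotation pairs with the bottom edge and the middle-lower rotation pairs with the top edge; each center lies on a unique rotation with its bottom edge lower and top edge upper---and the resulting chain $R^{\uparrow}(e)<R^{\downarrow}(e)=R^{\uparrow}(e')<R^{\downarrow}(e')=R^{\uparrow}(e)$ is indeed a valid contradiction. The paper's argument is shorter and avoids tracing individual rotations, but your version buys something worthwhile: the order-cycle contradiction in the crossed case is precisely the mechanism the paper later exploits in its switch and fuse gadgets (where inconsistent states force a cyclic ordering of rotations), so your proof makes the lemma continuous with that later machinery rather than resting on a separate first-change/last-change analysis.
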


\begin{proof}
Consider the sequence of matchings produced by starting at $\bot$ and then applying the rotations of the rotation system, one by one, in an ordering given by an arbitrary linear extension of the partial order of rotations, until reaching $\top$. Then to cover all three edges at each of the degree-three vertices of $X$, the part of the matching within subgraph $X\cup P\cup P'$ must change at least twice, passing through at least three distinct matchings, throughout this sequence. The first of these changes cannot pass through a degree-two vertex of $X$, because after a change through such a vertex its matched edge can never change again, and this would prevent the degree-three vertex at the other end of that matched edge from changing to its third match. The only other possibility is that this first change is generated by a rotation that passes through $P$, $P'$, or both, causing this path to alternate between bottom and middle edges.

Symmetrically, the last change to the subgraph $X\cup P\cup P'$, in this sequence of matchings, cannot pass through a degree-two vertex of $X$, because the edge that this degree-two vertex was matched to before the change could not have changed in any previous step, preventing its degree-three vertex at its other end from having changed enough times to cover its three incident edges. Therefore, this last change also passes through $P$ or $P'$, alternating between top and middle edges. Since at least one of $P$ or $P'$ must alternate between bottom and middle edges, and at least one of $P$ or $P'$ must alternate between top and middle edges, exactly one of the two paths must alternate in each of these two ways.
\end{proof}

\begin{figure}[t]
\centering\includegraphics[scale=0.6]{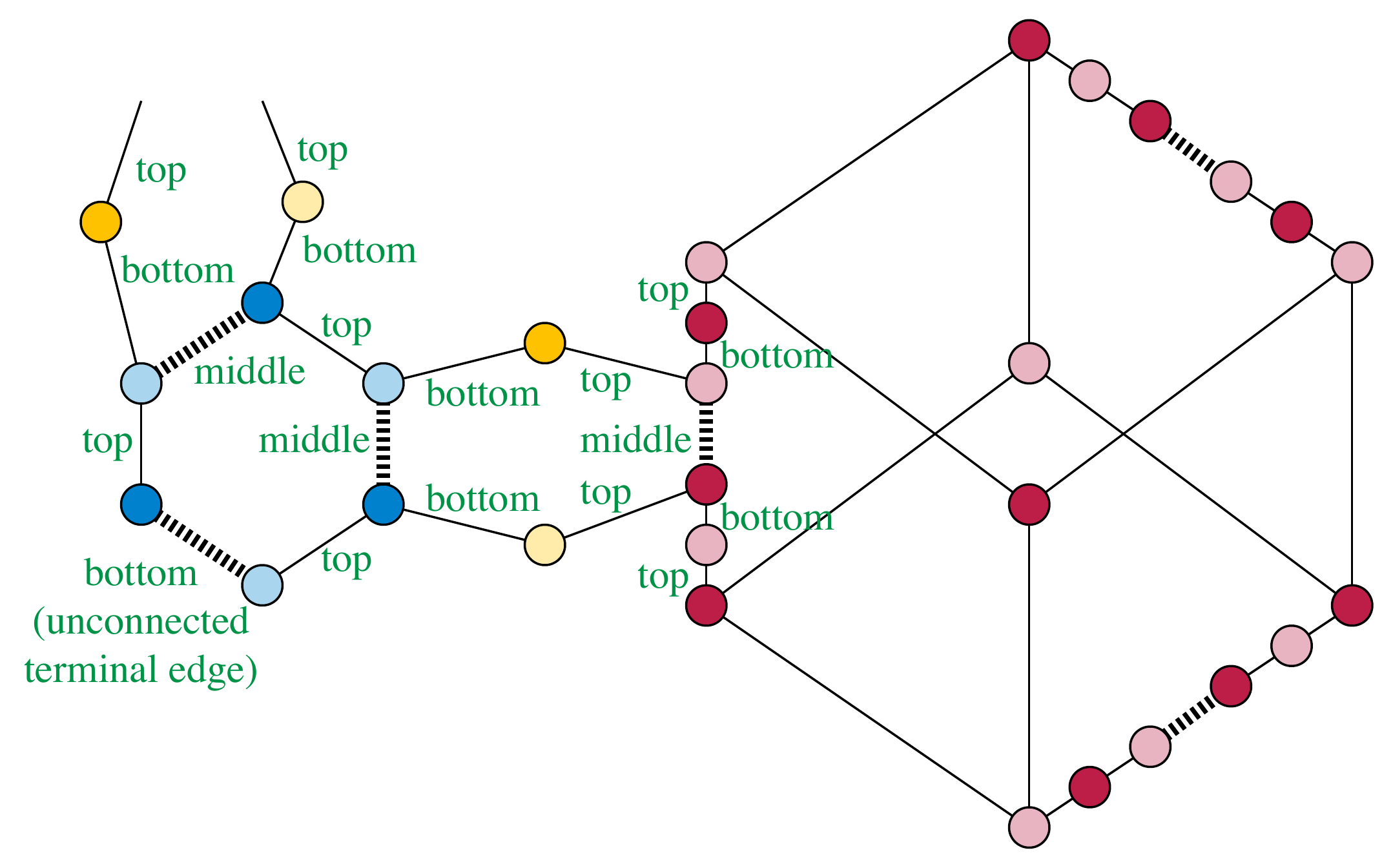}
\caption{Propagating labels of edges from term gadgets to  the adjacent connection gadgets and five-edge paths of clause gadgets}
\label{fig:propagate}
\end{figure}

If $G$ is a graph constructed as in \autoref{sec:gadgets} from an NAE3SAT instance, and the NAE3SAT instance has a satisfying truth assignment, then we can construct a valid rotation system for $G$ by labeling the truth edges of variable gadgets as top or bottom edges according to the truth assignment, assigning all terminal edges to be middle edges, and propagating these labels to the remaining graph edges consistently with \autoref{lem:connector-alternate} and with the requirement that each degree-two vertex have adjacent labels top and bottom and that each degree-three vertex have all three labels adjacent to it. \autoref{fig:propagate} shows the results of propagating these constraints on labelings from a variable gadget to nearby edges. The satisfaction of each clause in the truth assignments allows this propagation to be completed uniquely within each clause gadget. The resulting labeling corresponds uniquely to a collection of cycles in which pairs of adjacent edges alternate between the closest higher label (bottom to middle or middle to top at a degree-three vertex, bottom to top at a degree-two vertex) and the closest lower label (defined symmetrically). These cycles have a partial ordering with the central cycle of a true vertex gadget as the topmost elements, cycles in connector gadgets to true vertex gadgets next, cycles within clause gadgets in the middle,  cycles in connector gadgets to false vertex gadgets second-to-last, and the central cycles of false vertex gadgets last. Within each clause gadget the partial ordering (on the two alternating cycles within the gadget) is the same as in a cube graph with the same labels, using the outermost label of each subdivided edge as the label for the corresponding unsubdivided edge in a cube graph. Therefore, when the NAE3SAT instance is satisfiable, $G$ has a rotation system.

Conversely, when $G$ has a rotation system, \autoref{fig:propagate}  shows that the term gadgets have truth edges that must be labeled consistently according to some truth assignment. The constraints on edge labels at degree-2 and degree-3 vertices imply that the labels on the connector gadgets and the subdivided edges of clause gadgets must be as shown in \autoref{fig:propagate}. And in order for the rest of the clause gadget to have a consistent labeling, each clause must be satisfied, for an unsatisfied clause would have a clause gadget with all three subdivided edges labeled the same way, preventing $\bot$ or $\top$ from being completed to a perfect matching within that gadget. Therefore, when $G$ has a rotation system, the NAE3SAT instance is satisfiable.

Although this is enough to prove NP-completeness of recognizing subcubic graphs of stably matchable pairs,
it does not prove the stronger result that we have claimed, NP-completeness for planar subcubic graphs.
There are several difficulties that we must overcome in order to prove this stronger result.
\begin{itemize}
\item Although our clause gadgets (subdivided cubes) are planar graphs, the terminal edges by which they connect to the rest of the reduction do not all lie on a single face of their planar embedding, causing the overall reduced graph to become nonplanar even when starting with a planar instance of NAE3SAT. Therefore, we need alternative clause gadgets that accomplish a similar function while preserving planarity.
\item Planar NAE3SAT is not a hard problem; it has a polynomial-time algorithm. So we cannot merely replace the clause gadgets of our NAE3SAT reduction; we need a different hard problem. Realizability by stable matchings has an inherent symmetry (swapping $\top$ and $\bot$ and reversing the partial order on rotations preserves the validity of a rotation structure), and this symmetry is reflected in the symmetry of inverting all truth values in NAE3SAT. For our planar reduction, we have chosen 2-in-4-SAT, which is also symmetric under truth value inversion, but has more complicated clauses with four terms instead of three.
\item Our NAE3SAT reduction preserves the bipartiteness of the resulting graph by leaving a choice free for each connection gadget, of which pairs of vertices to connect by paths. One of the two choices can be used, regardless of  previously made choices of which of the connected vertices are students and which are residencies. But if the resulting graph is to be planar, we do not have this freedom, because one of these two choices will introduce an undesired crossing. Instead we must make sure that when all the gadgets we are connecting have their bipartitions chosen consistently. This is already a problem for the term gadgets in our NAE3SAT reduction, because when we use a connector gadget to connect a term and its negation, and embed the resulting planar graph, the two terms will be oppositely oriented (one with students clockwise from residencies on its terminal edges, the other with students counterclockwise from residencies). So we need a gadget that, like the connector gadget, can reverse the truth value associated with a connection, but without inverting its orientation.
\item In our NAE3SAT reduction, partially ordering the rotations was aided by the fact that the term gadgets are inherently acyclic (either maximal or minimal in the partial order of rotations, rather than having both predecessors and successors in the partial order). This might not be true for other gadgets; for instance, it is not true of the clause gadgets. So if other gadgets are to be connected together we need to take care that yes-instances have rotations that can be partially ordered. On the other hand, as we will see, the possibility that a collection of cycles might fail to be a rotation system by having a cycle in its ordering can lead to new opportunities for constraining the behavior of gadgets.
\end{itemize}

\subsection{More gadgets}

\begin{figure}[t]
\centering\includegraphics[scale=0.6]{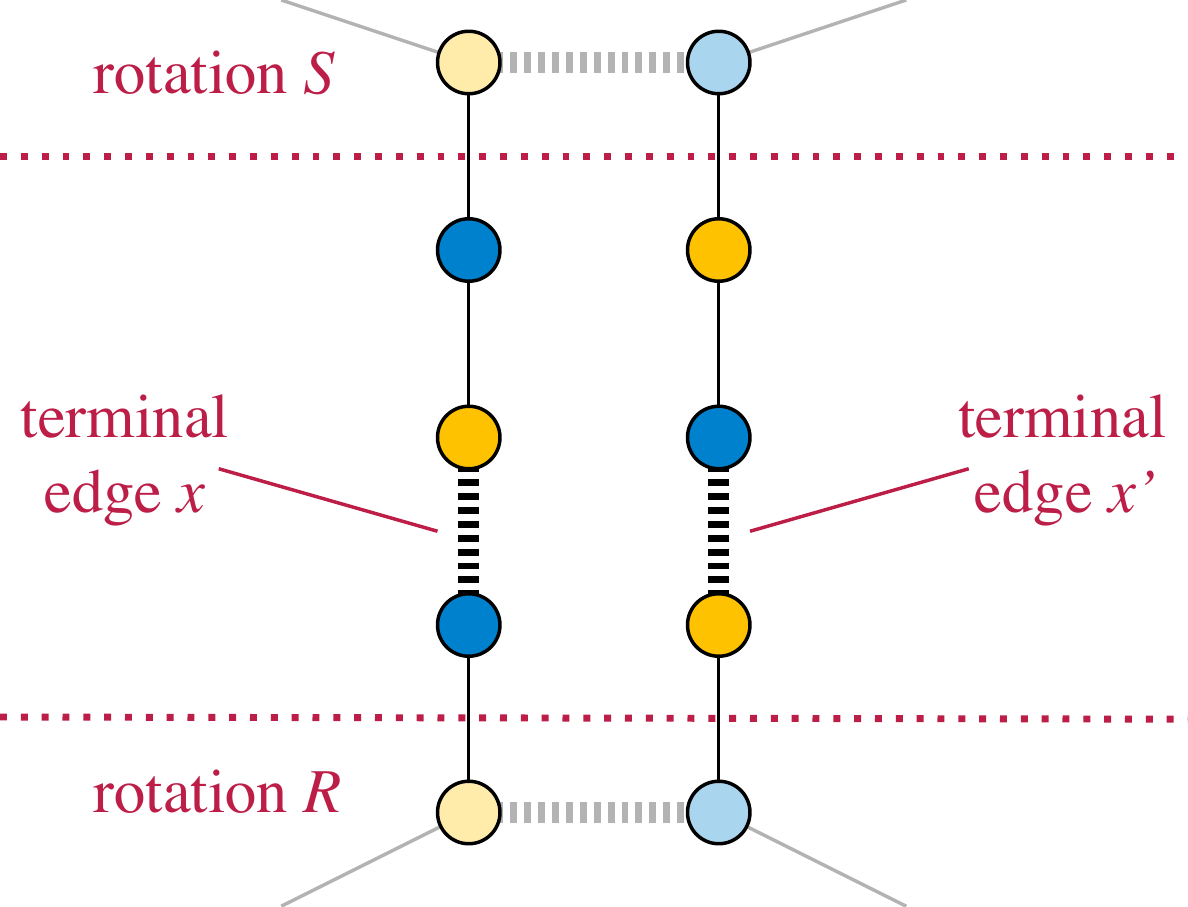}
\caption{The switch gadget}
\label{fig:switch}
\end{figure}

We will re-use the term gadgets and connection gadgets from the NAE3SAT reduction, as analyzed using \autoref{lem:connector-alternate}. As in that reduction, gadgets will be connected to each other at \emph{terminal edges}; to preserve planarity, we will usually require these terminal edges to be on the outer face of an embedding of each gadget. To handle the issue that the two paths of a connection gadget cannot cross, we will define the \emph{polarity} of a terminal edge to be either clockwise or counterclockwise, the direction around the outer face of the gadget followed by orienting the terminal edge from its student endpoint to its residency endpoint. The polarity of all terminal edges of a single gadget can be reversed by taking the mirror image of the gadget, but it cannot be changed for individual terminal edges without changing the polarity of all other terminal edges of the same gadget. All terminal edges of a term gadget have the same polarity as each other, but the two edges of a connector gadget have opposite polarity.

We add to these the following repertoire of gadgets and combinations of gadgets.

\begin{description}
\item[The switch gadget.] This is, essentially, a connection gadget with its paths extended by two more edges, allowing the interior edges of the paths to function as terminal edges to which another connection gadget can be attached, at the points marked as $x$ and $x'$ in \autoref{fig:switch}. In turn, the switch gadget connects to the terminal edges from two other gadgets. Following the same analysis as \autoref{lem:connector-alternate}, these two other gadgets must have rotations ($R$ and $S$ in the figure) that pass through their terminal edges, one of which is earlier than the central cycle of the switch gadget in the partial ordering of rotations, and the other of which is later. The key behavior of the switch gadget is to control the relative orderings of these  two rotations: if the incoming connection gadget at $x$ has top edges at the ends of its paths, then the path within the switch gadget must have edges labeled (in order from $R$ to $S$) bottom-middle-bottom-top, forcing $R$ to be earlier than $S$ in the partial order of rotations. If the connection gadget attached to $x$ has bottom edges at the ends of its paths, then symmetrically $S$ must be earlier than $R$ in the partial order of rotations.

To achieve this behavior, only one of the two terminal edges $x$ and $x'$ needs to be attached to a connection gadget. If both are attached to connection gadgets, then both connection gadgets are forced to have the same state, which can be used to propagate information from one side of the switch gadget to the other. In some ways this is like a crossover gadget in some other planar NP-completeness proofs, but unlike a crossover gadget there are not two independent bits of information crossing each other: the partial ordering of $R$ and $S$ carries the same information as the state of the connection gadgets at $x$ and $x'$.

If terminal edge $x'$ is unused, the path containing it can be shortened to two edges without affecting the behavior of the rest of the gadget.

\begin{figure}[t]
\centering\includegraphics[scale=0.6]{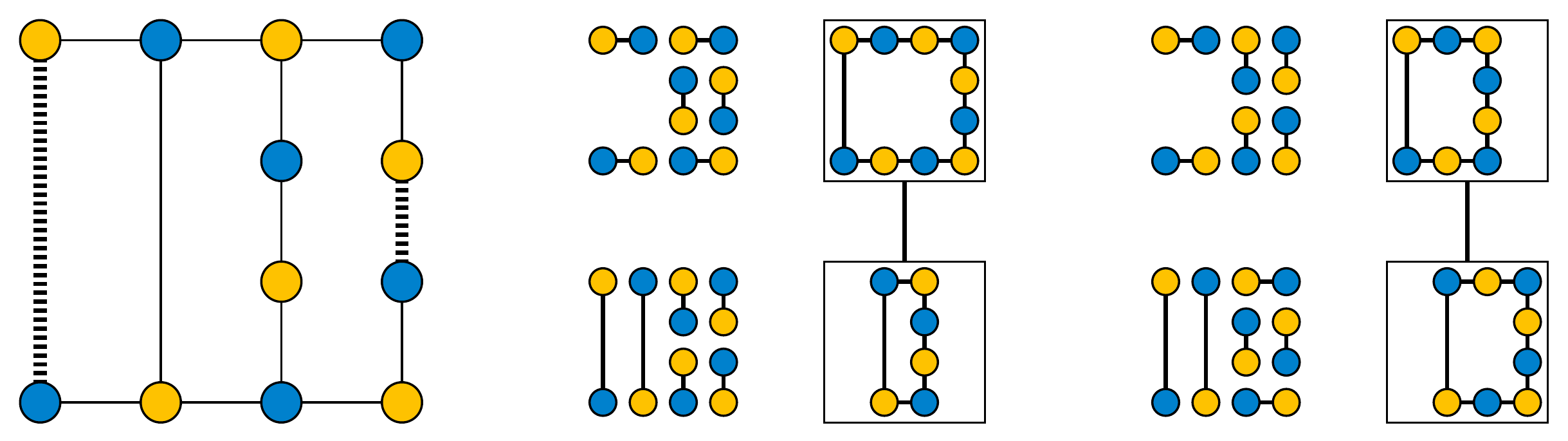}
\caption{The fuse gadget and two of its four rotation systems. The other two are obtained from the two shown by swapping the top and bottom matchings and reversing the order of the rotations.}
\label{fig:fuse}
\end{figure}

\item[The fuse gadget.] This is obtained from a $2\times 4$ grid by subdividing some of its edges into paths of odd lengths, whose degree-two vertices force the subdivided paths to use only top and bottom edges. The unsubdivided $2\times 4$ grid has rotation systems consisting of the three faces of the grid, in some order, but the subdivision prevents these rotation systems from being valid. Instead, the remaining four rotation systems are of two types, as shown in \autoref{fig:fuse}.

We will always attach switch gadgets to the terminal edges of the fuse gadgets. To understand the behavior of a fuse gadget, suppose that one of its two attached switch gadgets connects a rotation $R$ (at its other end) to a rotation $S$ within the fuse gadget, in an ordering controlled by a connection gadget at terminal edge $x$. Suppose symmetrically that the other attached switch gadget connects a rotation $S'$ within the fuse gadget, to a rotation $T$ at its other end, in an ordering controlled by a connection gadget at terminal edge $y$. 

Two of the rotation systems for the fuse gadget (including the one shown in the central part of the figure) use the outer boundary of the grid as one of the rotations, and the center face of the grid as the other rotation, causing one of the two terminal edges to be upper and the other to be lower in the same rotation as each other. The outer rotation will act as both $S$ and $S'$ for the two attached switch gadgets, while the inner rotation is separated from the switch gadgets.
When connection gadgets $x$ and $y$ have the same state, exactly one of these two rotation systems will be valid, producing one of two transitive orderings of rotations: $R\le S=S'\le T$, or $R\ge S\ge T$. Intuitively, when the switch gadgets are both switched in the same direction, the fuse gadget transmits the ordering relation from one switch gadget to the other.

The other two rotation systems for the fuse gadget are the one shown on the right of the figure and its reverse. When the two attached switch gadgets are switched in opposite directions, one of these two rotation systems will be valid within the fuse gadget, and there will be no order relation between $S$ and $T$ caused by these gadgets.

\begin{figure}[t]
\centering\includegraphics[scale=0.4]{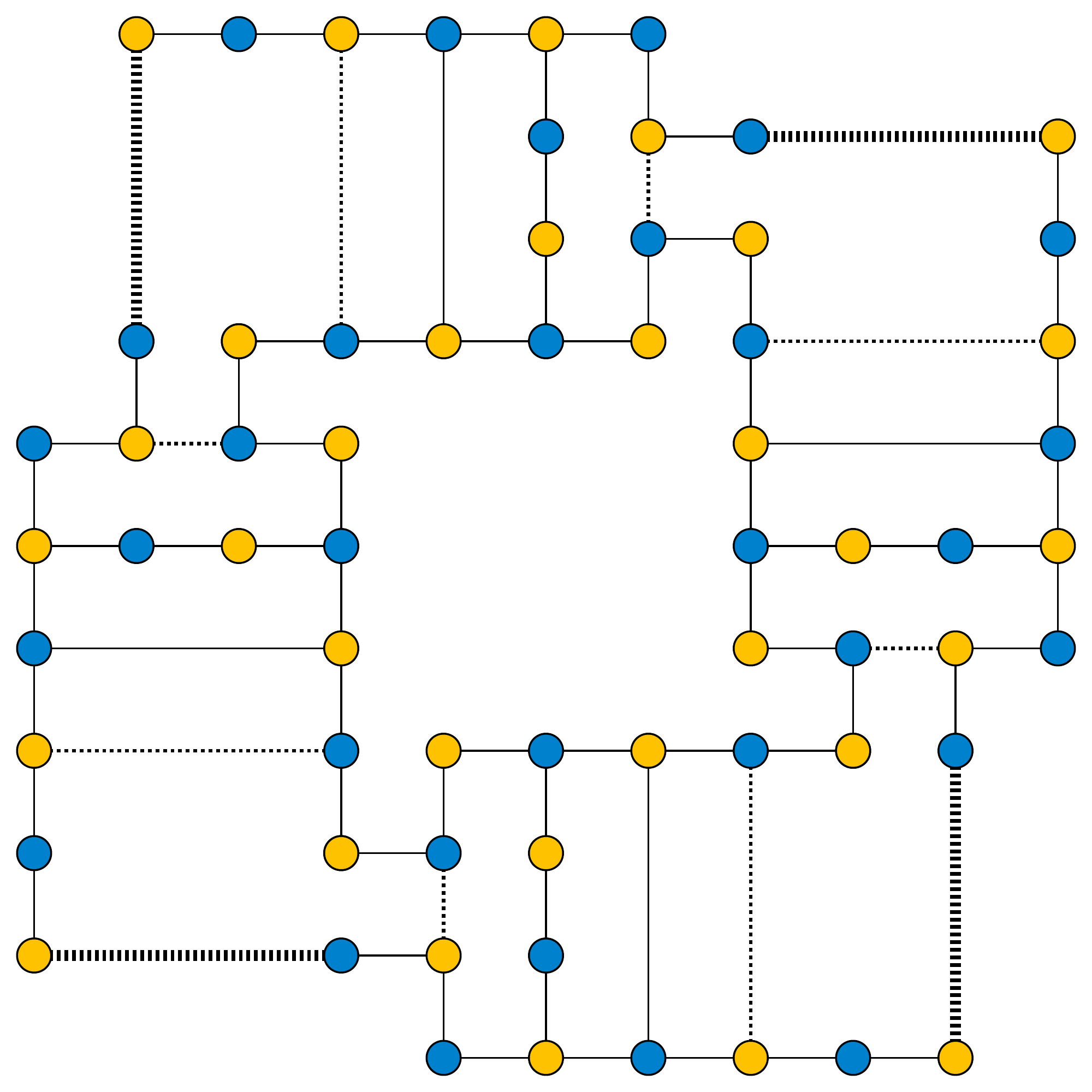}
\caption{Planarity-preserving four-input not-all-equal gadget, constructed from a cycle of four switch gadgets and four fuse gadgets.}
\label{fig:planar-nae}
\end{figure}

\item[Planar not-all-equal gadgets.] We can construct a not-all-equal gadget with any number $k$ of inputs by connecting switch gadgets and fuse gadgets in a cycle, as shown in \autoref{fig:planar-nae}. Because both switch gadgets and fuse gadgets reverse the polarity of the two terminal edges that they attach to, the result is bipartite even when $k$ is odd, and all of its free terminal edges (the terminal edges used to connect connector gadgets to switch gadgets) have the same polarity as each other.

When the connector gadgets attached to these switch gadgets all have the same state (all edges at the attached ends of their paths labeled as top, or all labeled as bottom), the combined behavior of the switch and fuse gadgets will cause the not-all-equal gadget to have a cycle in its ordering of rotations, preventing a valid rotation structure from existing. When not all connector gadgets have the same state, this cycle will be broken at the fuse gadget between each two consecutive connectors that have unequal states, allowing the existence of a valid and partially ordered set of rotations within this part of the overall graph.

\item[Inverter gadget.] In our NAE3SAT reduction, we used connector gadgets between pairs of a term gadget and its negation, which in this application acted as an inverter: if the edges at one end of a connector gadget's paths are labeled as top, the edges at the other end are labeled as bottom. However, a connector gadget also reverses the polarity of its two terminal edges, which may be problematic in some of our constructions. We may obtain an inverter that does not reverse polarity as a special case of our not-all-equal gadget with two inputs (a cycle of two switch gadgets and two fuse gadgets).

In any valid rotation system involving the inverter gadget, the cycles within the two switch gadgets will have no ordering relation to each other, because they are isolated from each other by the fuse gadgets between them and by the requirement that the two switch gadgets be ordered in opposite directions at each fuse gadget. Therefore, it is possible to use this gadget without any concern that there may be additional ordering relations on rotations transmitted through the connection gadgets attached to it.

\item[Insulator gadget.] As discussed above, the inverter gadget attaches to two connector gadgets that must have opposite states, and breaks any ordering relation that might have been created between rotations at the other ends of these two connectors. It is useful to have a gadget that, similarly, attaches to two connector gadgets that must have equal states, and breaks any ordering relation that might have been created between rotations at the other ends of these two connectors. This can be accomplished by using a term gadget with two terminal edges (not counting its unconnected terminal edge).
\end{description}

\subsection{Nested gadget cycles for arbitrary inversion-symmetric clauses}

By combining the simpler gadgets above, we have a general construction for gadgets that can represent any logical function that is symmetric under the inversion of the truth values of all its variables (as the not-all-equal function is).
As in our NAE3SAT construction, define a truth edge of a gadget to be an edge adjacent to a terminal edge, which (in all rotation systems for all our gadgets) must be either part of the top or bottom matching, and define such an edge to correspond to a Boolean value, true if it belongs to the top matching and false if it belongs to the bottom matching.

\begin{lemma}
\label{lem:symmetric-clause}
Let $f$ be a nontrivial Boolean function of any number $k$ of Boolean variables, with the property that inverting all variables does not change the value of $f$. Then there exists an embedded planar subcubic bipartite graph $G$, with $k$ designated terminal edges of equal polarities and degree-2 endpoints on the outer face of its embedding, corresponding to the variables of $f$, with the following property. For every truth assignment to the variables of $f$, and every corresponding labeling of the truth edges of $G$ as belonging to the top or bottom matchings of a rotation system, $f$ is true for the given truth assignment if and only if $G$ has a rotation system extending the given labeling. Further, whenever a truth assignment corresponds to a rotation system, the rotation system can be chosen in such a way that there are no order relations in its partial order between any two rotations containing terminal edges.
\end{lemma}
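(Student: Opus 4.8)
The plan is to reduce the construction of $G$ to forbidding a family of complementary assignment-pairs, and to forbid each such pair with its own ``selective'' not-all-equal cycle. Since $f$ is nontrivial and unchanged under inverting all variables, its set of falsifying assignments is a nonempty, proper, complementation-closed subset of $\{0,1\}^k$, hence a disjoint union of complementary pairs $\{a^{(1)},\overline{a^{(1)}}\},\dots,\{a^{(m)},\overline{a^{(m)}}\}$. I would build one gadget cycle $N_t$ per pair, arrange for all of them to read a shared set of truth values for the variables of $f$, and combine them so that $G$ admits a rotation system exactly when the assignment lies in none of the forbidden pairs, that is, exactly when $f$ is true.

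For the pair $\{a^{(t)},\overline{a^{(t)}}\}$ I would take the $k$-input planar not-all-equal gadget of \autoref{fig:planar-nae}, whose cycle of switch and fuse gadgets produces an order-cycle (blocking any rotation system) precisely when all of its inputs carry the same truth value. To make ``all inputs equal'' coincide with the pair I wish to forbid, I would precompose the $i$th input with an \emph{inverter gadget} (which negates a truth value while preserving polarity) exactly when $a^{(t)}_i$ is false. Writing $y_i = x_i$ if $a^{(t)}_i$ is true and $y_i = \neg x_i$ otherwise, the inputs of $N_t$ are all equal iff $x = a^{(t)}$ or $x = \overline{a^{(t)}}$; hence, by the analysis of the not-all-equal gadget, $N_t$ has a valid rotation system, and moreover (outside the forbidden pair) one whose switch-cycles carry no mutual order relation, iff the assignment avoids that pair.

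It remains to feed each variable into every cycle that uses it, in whichever polarity that cycle requires, and to a designated terminal edge on the outer face. Each variable $x_i$ is represented by a term gadget; since all terminal edges of a term gadget share a common top/bottom label, such a gadget acts as a fan-out, and longer term-gadget cycles (together with inverter gadgets for the opposite polarity) supply as many copies of $x_i$ and $\neg x_i$ as are needed. Crucially, between each delivered copy and the switch that consumes it I would interpose an \emph{insulator gadget}, so that by its order-breaking property no chain of order relations can run from one input term gadget to another; the same device, applied within the fan-out network, keeps the $k$ designated terminal-edge rotations pairwise incomparable. A rotation system for $G$ then restricts to one for each $N_t$, and conversely, because the insulators prevent the cycles from interacting through their shared variables, consistent per-cycle rotation systems with no long-range order relations can be assembled into a single one for $G$; thus $G$ has a rotation system iff every $N_t$ does iff $f$ holds, and the combined system can be taken with the terminal-edge rotations mutually incomparable, as the statement demands.

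I expect the main obstacle to be the planar, subcubic realization of this wiring: the cycles $N_t$ and the fan-out trees for the variables must be laid out---nesting the cycles, as in the section title---without crossings and with every vertex of degree at most three, while all $k$ designated terminal edges end up on the outer face with a single common polarity. The polarity bookkeeping is the delicate point, but it works out because each switch and each fuse gadget reverses the polarity of the terminal edges it joins, so an even number of gadget hops restores polarity and the inputs can be made uniform even for odd $k$, exactly as already exploited in building the not-all-equal gadget. Verifying that the interposed insulators genuinely sever every potential order relation between terminal-edge rotations in the assembled partial order is the other point that requires care, and is what secures the final clause of the statement.
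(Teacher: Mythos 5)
Your per-pair construction matches the paper's building block: for a single forbidden complementary pair, the paper likewise takes the planar not-all-equal gadget and attaches an inverter or insulator gadget to each input according to the pattern of the forbidden assignment, and this handles both the realizability criterion and the incomparability of terminal-edge rotations. The gap is in how you combine the pairs. You lay the cycles $N_1,\dots,N_m$ out side by side and deliver each variable to every cycle through a fan-out network of term gadgets and connectors. But every forbidden pair constrains \emph{all} $k$ variables, so each $N_t$ must receive a connection from each of the $k$ variable gadgets; contracting each gadget to a point, the required incidence pattern is (a supergraph of) $K_{k,m}$, which is nonplanar whenever $k\ge 3$ and $m\ge 3$. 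This is exactly the regime needed for the reduction: the 2-in-4 function has $k=4$ and $m=5$ forbidden pairs. So the ``main obstacle'' you defer is not polarity bookkeeping at all --- it is a topological obstruction that no insulator placement or parity adjustment can remove, and the paper explicitly warns that the switch gadget is \emph{not} a genuine crossover gadget (it cannot carry two independent bits across a crossing), so you cannot patch the crossings that way either.

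The paper's proof is designed around precisely this obstruction, and it is the content of the phrase ``nested gadget cycles'' that you mention only in passing. It inducts on the number of forbidden pairs: build the graph $G'$ for one forbidden pair as above, then recursively build the graph $G''$ for the function handling the remaining pairs (with variables complemented to account for the inverters of the outer layer), and embed $G''$ \emph{inside the central face} of $G'$, attached by connector gadgets to the inner terminal edges $x'$ of the switch gadgets of $G'$. Because a switch gadget forces its two terminal edges $x$ and $x'$ into the same state, each layer re-exposes every variable's value on its inside, so the values thread inward through the concentric layers with no fan-out wiring and no crossings. Your decomposition into complementary pairs and your per-cycle inverter trick are the right ingredients, but without the nesting mechanism the assembly step of your proof cannot be carried out in the plane, and the lemma's planarity requirement fails.
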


\begin{proof}
For $k$ Boolean variables $x_1,\dots x_k$, there are $2^{k-1}$ complementary pairs of truth assignments, some of which cause $f$ to become true and some of which cause it to become false. We prove the result by induction on the number of pairs of truth assignments that cause $f$ to become false. As a base case, when there is only one forbidden truth assignment, we may use the not-all-equal gadget of \autoref{fig:planar-nae}, with each of its terminal edges attached by a connector gadget to either an inverter gadget or an insulator gadget. The choice of whether to attach an inverter or insulator can be made by choosing an insulator for variables that have the same value as $x_1$ in the forbidden truth assignments, and an inverter for variables that have the opposite value to $x_1$.
The not-all-equal gadget itself is unrealizable only for the all-equal truth assignment, and this choice of inverter gadgets instead causes the graph to become unrealizable only for the single forbidden truth assignment of $x$. The use of inverter and insulator gadgets at every terminal edge of the central not-all-equal gadget prevents rotation systems for this graph from having order relations between the rotations containing its terminal edges.

\begin{figure}[t]
\centering\includegraphics[scale=0.5]{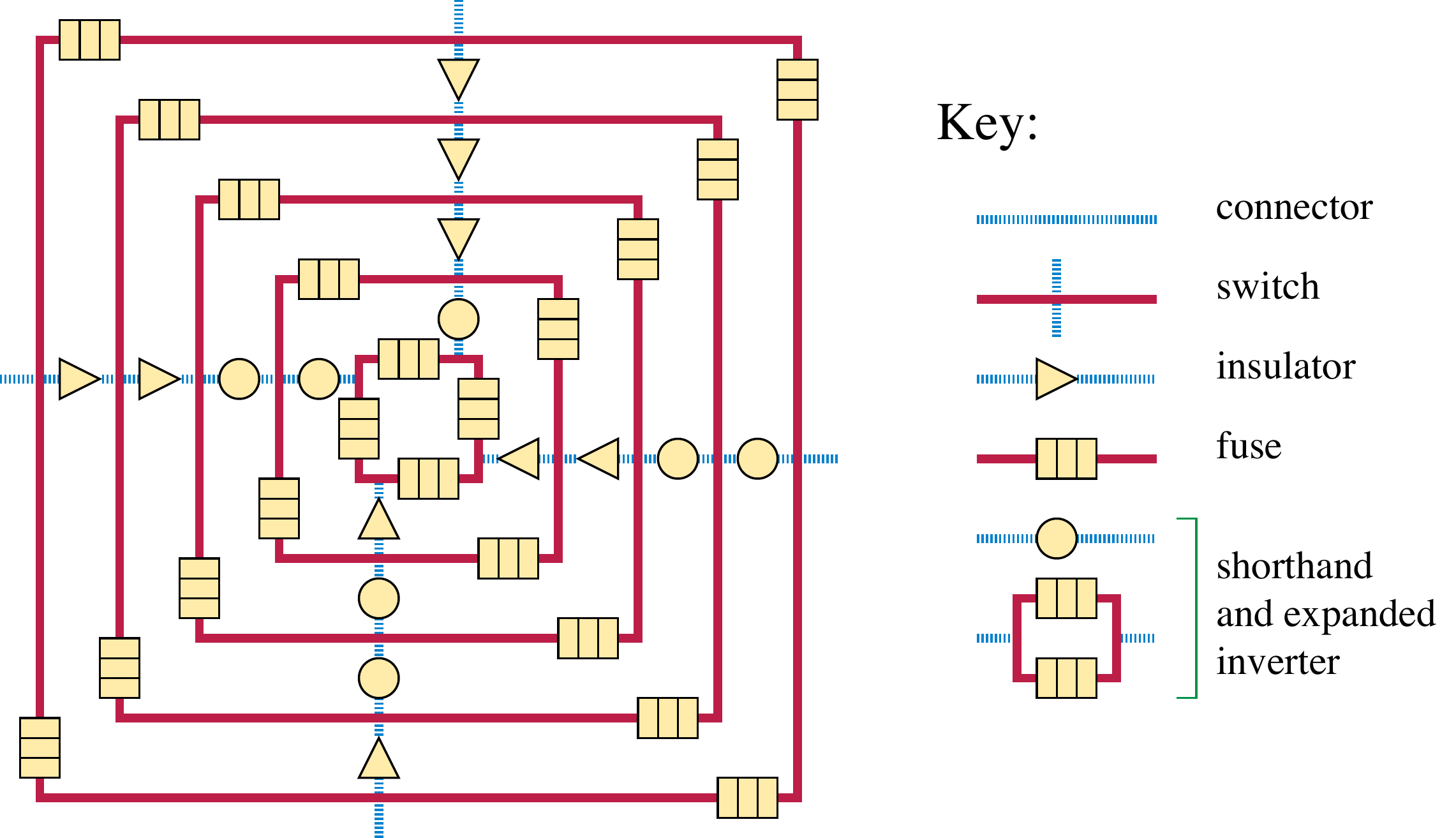}
\caption{Construction of a 2-in-4 clause gadget from simpler gadgets, according to \autoref{lem:symmetric-clause}. The innermost cycle of four switch gadgets and four fuse gadgets in this figure, when expanded from gadgets into vertices and edges, produces a subgraph isomorphic to the entirety of \autoref{fig:planar-nae}.}
\label{fig:2in4}
\end{figure}

When $f$ has more than one forbidden truth assignment, choose one of them, let $f'$ be the function forbidding only that truth assignment, construct a graph $G'$ for it as above, and let $f''$ be the Boolean function formed from $f$ by changing its value to true on the forbidden assignment for $f$ and then complementing its variables in the same pattern that was used for the inverters in the construction of $G'$. By induction, construct a graph $G''$ corresponding to the function $f''$, and embed $G''$ inside the central face of $G'$, attached by connector gadgets to the inner terminal edges of the switch gadgets of $G'$. \autoref{fig:2in4} shows this construction for the 2-in-4 function $f$, which forbids five of the eight complementary pairs of truth assignments to its four Boolean variables (the pair of all-equal assignments, and four pairs of assignments that set one of the four variables opposite to the other three). The figure omits the outermost layer of insulator gadgets, as they are unnecessary when this 2-in-4 gadget is incorporated into our overall reduction.

In the resulting graph $G$, each connection gadget, switch gadget, inverter gadget, and insulator gadget has two terminal edges, whose adjacent truth edges can be labeled in only one way consistent with the labeling at the outer terminal edges of $G$: at each connection gadget, switch gadget, or insulator gadget, the two terminal edges must have the same labeling at their incident truth edges, while at an inverter gadget the labels must be opposite. This labeling determines the structure of the rotations within all of these gadgets, leaving only the fuse gadgets undetermined. For a truth assignment that satisfies $f$, all of the cycles of switch and fuse gadgets in $G$ will have at least two consecutive switch gadgets with opposite states, breaking the potential cyclic ordering of rotations at the fuse gadget between these two switches and producing a valid rotation system for $G$ meeting all the conditions of the lemma. For a truth assignment that does not satisfy $f$, one of the cycles of switch and fuse gadgets, the one corresponding to the complementary pair including that assignment, will have all four of its switch gadgets in the same state. In this case, the only locally-valid edge labeling to the switch and fuse gadgets of this cycle produces a cyclically-ordered set of rotations, so there is no valid global rotation structure.
\end{proof}

\subsection{The overall reduction}

\begin{theorem}
\label{thm:npc}
It is NP-complete, given a bipartite graph $G$, to determine whether there exists a stable matching instance for which $G$ is the graph of stably matchable pairs, even when $G$ is planar and subcubic.
\end{theorem}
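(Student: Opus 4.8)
The plan is to prove both membership in NP and NP-hardness, and to restrict the hard instances to planar subcubic bipartite graphs. For membership, I would use a rotation system as a succinct certificate. The preliminaries establish that every stable matching instance can be described in polynomial time by a rotation system, and that its graph of stably matchable pairs equals the graph of pairs participating in extended rotations. The number of rotations is at most the number of edges (each edge is an upper edge of exactly one extended rotation and a lower edge of exactly one), so a rotation system has polynomial size. Given $G$, a nondeterministic machine guesses a partially ordered collection of alternating cycles together with matchings $\top$ and $\bot$, and checks in polynomial time the defining conditions of a rotation system (each pair lies in zero or two extended rotations, and each cycle alternates between upper and lower edges consistently with the partial order) together with the condition that the pairs participating in its extended rotations are exactly $E(G)$. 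By \autoref{lem:rot-is-stable} a valid rotation system describes the stable matchings of some preference system, so $G$ is a graph of stably matchable pairs precisely when such a certificate exists; hence the problem lies in NP.

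For hardness I would reduce from monotone planar 2-in-4-SAT, which is NP-complete. Given an instance whose variable--clause incidences form a planar bipartite graph, I would assemble a planar subcubic bipartite graph $G$. Each variable becomes a \emph{term gadget}, a single alternating cycle whose truth edges are forced (via one unused terminal edge) to lie entirely in $\top$ or entirely in $\bot$, thereby encoding the variable's value. Each clause, whose 2-in-4 constraint is invariant under simultaneous inversion of all four of its variables, becomes the clause gadget produced by \autoref{lem:symmetric-clause}, which has four terminal edges of equal polarity on its outer face and whose rotation systems realize exactly the satisfying 2-in-4 labelings at those terminals. Each incidence is realized by attaching the corresponding terminal edges through connection gadgets, composed where necessary with polarity-preserving inverter and insulator gadgets so that the labels delivered to a clause's four terminals correctly encode the values of its variables and so that all polarities agree. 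Since the incidence graph is planar, the gadgets can be laid out following its embedding and the connections routed without crossings; because each gadget is planar, subcubic, and bipartite, connections are made only at degree-two terminal endpoints, and every connecting gadget preserves bipartiteness, the resulting $G$ is planar, subcubic, and bipartite.

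Correctness proceeds in two directions. If the instance is satisfiable, I would fix a satisfying assignment, label each term gadget's truth edges accordingly, and propagate these labels through the connecting gadgets (using \autoref{lem:connector-alternate} to control the alternation). Each clause then receives a 2-in-4 assignment, so \autoref{lem:symmetric-clause} supplies each clause gadget with a rotation system extending its terminal labeling in which no order relation holds between any two rotations containing its terminal edges. Combining these local rotation systems with the term and connecting gadgets, placing true and false term gadgets respectively at the top and bottom of the order with their connectors just inside, yields a globally valid rotation system. Conversely, given any rotation system for $G$, labeling its edges top, bottom, or middle forces each term gadget into a well-defined truth value, and the ``only if'' direction of \autoref{lem:symmetric-clause} shows that a clause gadget admits a rotation system only when its four delivered labels form a satisfying 2-in-4 assignment, so every clause is satisfied. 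Together with polynomiality of the construction, this gives NP-completeness even for planar subcubic graphs.

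I expect the principal obstacle to be global acyclicity of the partial order of rotations rather than the behavior of any single gadget: each gadget is individually satisfiable, yet naively gluing satisfiable gadgets along shared middle edges could create a directed cycle among rotations, making the union fail to be a rotation system. This is exactly what the final clause of \autoref{lem:symmetric-clause} (no order relations among rotations through terminal edges) and the inverter and insulator gadgets are engineered to prevent, by decoupling the internal order of each clause gadget from the orders propagated along its connections; verifying that the combined order remains acyclic in the satisfiable case is the crux of the forward direction. The secondary difficulty is maintaining planarity while delivering the correct labels, which requires careful bookkeeping of terminal-edge polarities through the inverter, insulator, and switch gadgets so that every incidence of the planar SAT instance can be drawn crossing-free.
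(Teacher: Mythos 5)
Your proposal is correct and follows essentially the same route as the paper: membership in NP via a polynomial-size rotation-system certificate (justified by \autoref{lem:rot-is-stable}), and hardness by a reduction from monotone planar 2-in-4-SAT using term gadgets, connection gadgets, and the clause gadgets of \autoref{lem:symmetric-clause}, with the forward direction's acyclicity handled exactly as the paper does, by invoking the clause of \autoref{lem:symmetric-clause} guaranteeing no order relations between rotations through terminal edges. The only cosmetic difference is that you interpose inverter/insulator gadgets on variable--clause connections ``where necessary,'' whereas the paper's monotone instances let it attach connection gadgets directly (the inverters and insulators living inside the clause-gadget construction); this does not change the substance of the argument.
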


\begin{proof}
To show that recognizing graphs of stably matchable pairs belongs to NP, we may either use a preference system for the stable matching instance as a witness, with a verification algorithm that constructs the rotation system from the preference system and checks that its graph equals $G$, or we may use a rotation system directly as a witness, with a verification algorithm that checks that it meets the definition of a rotation system and has $G$ as its graph.

To show that the problem is NP-hard, we find a polynomial-time many-one reduction from the known NP-hard monotone planar 2-in-4-SAT problem to the problem of the theorem. This reduction finds a planar embedding of the given monotone planar 2-in-4-SAT instance, replaces each variable of the instance by a term gadget, replaces each 2-in-4 clause of the instance by a 2-in-4 clause gadget according to \autoref{lem:symmetric-clause} and \autoref{fig:2in4}, and replaces each connection between a variable and a clause in the planar embedding of the instance by a connection gadget, attached to the terminal edges of the term and clause gadgets in the cyclic ordering given by the planar embedding.

When the 2-in-4-SAT instance has a satisfying truth assignment, this assignment may be reflected to a labeling of the truth edges of the term gadgets as being top or bottom edges, and this labeling propagated to the remaining gadgets as described in \autoref{lem:symmetric-clause}, producing a collection of rotations that are locally partially ordered and that have no nonlocal transitive relations in their ordering. The result is a valid rotation system for the reduced graph.
Conversely, when the reduced graph has a rotation system, the truth edges of the term gadgets must be consistently labeled according to a valid truth assignment, and the unique consistent propagation of this labeling to the other gadgets, as described in \autoref{lem:symmetric-clause}, can only produce a value rotation system when this truth assignment satisfies all the clauses, so the starting 2-in-4-SAT instance is satisfiable.

We have described a polynomial-time many-one reduction from monotone planar 2-in-4-SAT to recognizing planar subcubic graphs of stably matchable pairs, and we have shown that  recognizing arbitrary graphs of stably matchable pairs belongs to NP, so both the recognition problem for arbitrary graphs and its special case for subcubic planar graphs are NP-complete.
\end{proof}

Because recognizing graphs of stably matchable pairs is trivial on cubic graphs (\autoref{obs:regular}) but hard on subcubic graphs, it provides another answer to a 2012 question of Vinicius dos Santos, who asked for natural problems that are polynomial on cubic graphs but NP-complete for subcubic graphs~\cite{VdS-TCSSX-12}.

\section{Exact algorithms}
\label{sec:algorithms}

\begin{figure}[t]
\centering\includegraphics[scale=0.35]{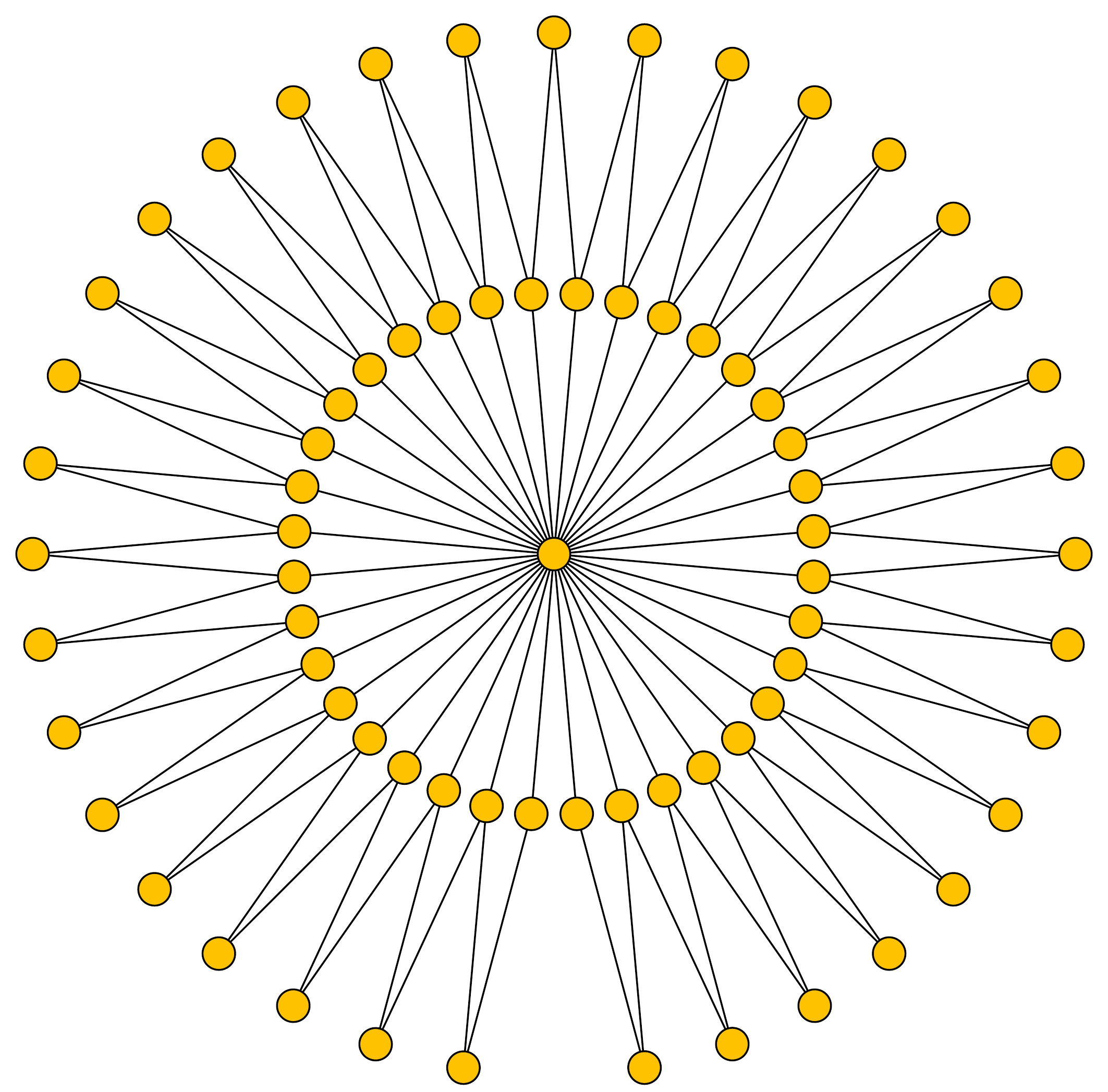}
\caption{Construction for a family of biconnected bipartite outerplanar graphs (graphs of stably matchable pairs by \autoref{thm:outerplanar}) in which a brute force search over all preference systems would take factorial time.}
\label{fig:factorial}
\end{figure}

Although recognizing the graphs of stably matchable pairs is NP-complete, it may still be possible to find algorithms for this problem that are faster than a naive brute force search through all preference systems. We provide two such algorithms. Our first algorithm, for arbitrary graphs, runs in time $2^{\Theta(m)}$ where $m$ is the number of edges, improving the $2^{\Theta(m\log m)}$ time bound that one would obtain from a brute force search (see \autoref{fig:factorial}). Our second algorithm shows that recognizing the graphs of stably matchable pairs is fixed-parameter tractable for graphs of bounded carving width, and takes linear time in such graphs (with a constant factor depending exponentially but not singly-exponentially on the carving width).

We will assume, as is standard for the design and analysis of algorithms, the unit cost model of computation in which arithmetic and Boolean operations on binary values within a single machine word, and operations that access a single machine word of computer memory, take constant time. We will also necessarily assume that each machine word stores enough bits to address the amount of memory needed for our algorithms. For algorithms whose storage requirements are exponential in the input size, this means that the entire input may be assumed to fit into a machine word or a small number of words. It is not reasonable to intrepret this as suggesting the existence of computers whose word size grows with the problem input size; instead, our assumptions should be interpreted as meaning that (as is generally true in practice) our algorithms can only be applied to inputs small enough that they do not overflow the available memory. However, the choice between these two interpretations makes no difference to the mathematics of the problem.

\subsection{Singly exponential in the number of edges}

Our singly exponential time and space bounds will depend on the following analysis of matching-related structures in graphs.

\begin{lemma}
\label{lem:matchplus}
Let $T$ be a type of structure associated with a graph, and let $c_T$ be a constant, such that a graph with $m$ edges has $O(c_T^m)$ structures of type $T$. Let $t(m)$ denote the maximum number of pairs of a perfect matching and a structure of type $T$ in the subgraph of unmatched edges, maximized over graphs with $m$ edges. Then $t(m)=O(c_M^m)$, where $c_M$ is at most $1+c_T$ and can be upper-bounded as
\[
c_M \le \max\left\{ \left(i\, c_T^{2i-2}\right)^{1/(2i-1)} \mid 1\le i \right\}.
\]
\end{lemma}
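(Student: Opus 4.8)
The plan is to bound the weighted quantity
$\Phi(G) := \sum_{M} c_T^{\,m-|M|}$, where the sum is over all perfect matchings $M$ of $G$ and $m=|E(G)|$. A perfect matching $M$ leaves $m-|M|$ unmatched edges, and by hypothesis the subgraph of those edges has $O\!\left(c_T^{\,m-|M|}\right)$ structures of type $T$; hence the number of (matching, structure) pairs is at most a constant times $\Phi(G)$. So it suffices to prove $\Phi(G)\le c_M^{\,m}$ for $c_M:=\max\{(i\,c_T^{2i-2})^{1/(2i-1)}\mid 1\le i\}$, which then also yields $t(m)=O(c_M^{\,m})$.

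I would prove $\Phi(G)\le c_M^{\,m}$ by a branching recursion (measure-and-conquer style) with induction on $m$. If some vertex has degree $0$ there is no perfect matching and $\Phi(G)=0$; otherwise choose a vertex $v$ of minimum degree $d\ge 1$, with neighbors $u_1,\dots,u_d$ of degrees $d_k:=\deg u_k$. Every perfect matching pairs $v$ with exactly one $u_k$, and deleting the matched pair $\{v,u_k\}$ removes $e_k:=d+d_k-1$ edges, one matched and $e_k-1$ newly unmatched. Restriction to $G'=G-\{v,u_k\}$ is a bijection between perfect matchings of $G$ through $(v,u_k)$ and perfect matchings of $G'$, and the exponent splits additively as $m-|M|=(m'-|M'|)+(e_k-1)$, giving the exact recursion $\Phi(G)=\sum_{k=1}^{d} c_T^{\,e_k-1}\,\Phi(G-\{v,u_k\})$.

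Applying the inductive hypothesis $\Phi(G')\le c_M^{\,m-e_k}$ reduces the goal to the inequality $\sum_{k=1}^{d} c_T^{\,e_k-1}/c_M^{\,e_k}\le 1$. Because $v$ has minimum degree, $d_k\ge d$ and so $e_k\ge 2d-1$; and since $c_M> c_T$ (the $i$-th term exceeds $c_T$ once $i>c_T$, as $\ln\!\bigl((x+1)/(2c_T)\bigr)/x>0$ for $x=2i-1$ large), the summand $c_T^{\,e-1}/c_M^{\,e}$ is decreasing in $e$. Thus the sum is largest when every $e_k=2d-1$, in which case it equals $d\,c_T^{2d-2}/c_M^{2d-1}$, which is $\le 1$ precisely by the defining inequality of $c_M$ with $i=d$. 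This closes the induction and establishes the formula for $c_M$. The point needing the most care, and the main obstacle, is exactly this monotonicity step: the worst case $e_k=2d-1$ must be justified, which relies both on the inequality $c_M>c_T$ and on choosing $v$ of minimum degree so that each neighbor degree $d_k$ is at least $d$.

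Finally, to obtain $c_M\le 1+c_T$ I would compare each term to $1+c_T$. Raising to the $(2i-1)$st power, the claim $i\,c_T^{2i-2}\le(1+c_T)^{2i-1}$ follows from a single binomial term, since $(1+c_T)^{2i-1}\ge \binom{2i-1}{2i-2}c_T^{2i-2}=(2i-1)\,c_T^{2i-2}\ge i\,c_T^{2i-2}$ for every $i\ge 1$.
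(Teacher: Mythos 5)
Your proof is correct and takes essentially the same route as the paper's: branch on a minimum-degree vertex, delete the matched pair, and induct on $m$, with the worst case $e_k = 2d-1$ producing exactly the two exponents in the formula for $c_M$. Your write-up is in fact tighter in one spot---the paper dismisses the case of larger $e_k$ with only a parenthetical remark, whereas you justify it via the monotonicity of $c_T^{e-1}/c_M^{e}$ after verifying $c_M > c_T$---and you obtain $c_M \le 1+c_T$ by comparing the formula against a single binomial term, where the paper instead proves that bound by a separate counting argument (identifying the matching with the complement of the unmatched edge set and applying the binomial theorem).
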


\begin{proof}
By the binomial theorem, the number of structures consisting of a set of edges, together with a structure of type $T$ within that set, is
\[
\sum_{i=0}^m \binom{m}{i} O(c_T^i) = O\bigl( (1+c_T)^m \bigr).
\]
The bound $c_M\le 1+c_T$ follows by considering the set of edges in such a structure to be the set complementary to a perfect matching.

To get the more precise bound of the lemma, we can list structures formed as a perfect matching plus a structure of type $T$ by the following algorithm: repeatedly find a vertex of minimum degree in the current graph, and loop over its incident edges. For each such edge $e$, recursively call the same algorithm in the subgraph formed by removing the two endpoints of $e$ and all their incident edges. If the recursion ever finds a vertex of degree zero, it returns without listing anything (as the choices made in this branch of the recursion do not lead to a perfect matching). Otherwise, the bottom-level calls to this algorithm (in which the remaining subgraph is empty) have each found a perfect matching, and list all structures of type $T$ (in an unspecified way) in the subgraph of removed but unmatched edges.

At a recursive call of this algorithm where the minimum degree is $i$, the number of unmatched removed edges that will be considered later in listing structures of type $T$ is at least $2i-2$, and the total number of removed edges is at least $2i-1$, explaining the two exponents in the formula of the lemma. (The cases where more than this number of unmatched edges are removed produce similar terms with greater exponents, but they can be omitted from the formula as they cannot produce the maximum value of the overall formula.) By choosing $c_M$ as upper-bounded by the formula of the lemma, it follows by induction on $m$ that the total number of structures found by this algorithm is as stated.
\end{proof}

We observe that in cases where the structures of type $T$ can be efficiently enumerated (for instance when they are just subsets, as below) the proof of the lemma can be made into an algorithm whose running time is obtained by replacing the bound on the number of structures of type $T$ by the running time for enumerating these structures.

\begin{lemma}
\label{lem:nmatch}
A graph with $m$ edges has at most $c_0^m$ perfect matchings, where
\[
c_0 = 6^{1/9} \approx 1.22028.
\]
This bound is tight for a graph consisting of disjoint copies of $K_{3,3}$.
\end{lemma}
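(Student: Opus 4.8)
The plan is to derive the bound from a sharp per-vertex estimate on the number of perfect matchings, the Kahn--Lov\'asz inequality (for the bipartite graphs relevant to this paper it already follows from Bregman's theorem applied to the biadjacency matrix and its transpose, taking the geometric mean): for any graph $G$ with degree sequence $(d_v)_{v\in V}$ and no isolated vertices, the number of perfect matchings satisfies
\[
\mathrm{pm}(G)\le\prod_{v\in V}\bigl(d_v!\bigr)^{1/(2d_v)}.
\]
An isolated vertex forces $\mathrm{pm}(G)=0$, so I may assume there are none and that the estimate applies.

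First I would rewrite the target in the same product form. Because $\sum_{v}d_v=2m$, we have $6^{m/9}=\prod_v 6^{d_v/18}$, so it suffices to prove the single-variable inequality $(d!)^{1/(2d)}\le 6^{d/18}$ for every integer $d\ge 1$ and then multiply over all vertices. Taking logarithms, this is exactly
\[
\phi(d):=\frac{9\ln(d!)}{d^2}\le\ln 6,
\]
where $\phi(3)=\tfrac{9\ln 6}{9}=\ln 6$ gives equality precisely at $d=3$.

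The main obstacle is verifying $\phi(d)\le\ln 6$ for \emph{all} $d$, since the inequality is exactly tight at $d=3$ and extremely close at $d=4$ (there $(24)^9=6^9\cdot 2^{18}$ must beat $6^{16}=6^9\cdot 6^7$, i.e.\ $2^{18}=262144<279936=6^7$), leaving no slack to be wasteful. I would settle this by direct computation for the small values $d\le 9$, and for large $d$ by an explicit Stirling bound: $\ln(d!)\le d\ln d-d+O(\ln d)$ gives $\phi(d)\le \tfrac{9(\ln d-1)}{d}+o(1)$, which falls below $\ln 6$ once $d$ exceeds a small threshold and is monotonically decreasing there (the derivative of $(\ln d-1)/d$ is negative for $d>e^2$). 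Combining the finite check with this decreasing tail shows $d=3$ is the global maximizer, establishing the inequality and hence $\mathrm{pm}(G)\le 6^{m/9}$.

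Tightness is then immediate: $K_{3,3}$ has $m=9$ edges and $6$ perfect matchings, so $6=6^{9/9}$, and a disjoint union of $k$ copies has $9k$ edges and $6^k=6^{9k/9}$ perfect matchings, meeting the bound exactly. A self-contained alternative, closer to the recursive enumeration of \autoref{lem:matchplus}, is to induct on $m$ by branching at a minimum-degree vertex: minimum degree $\ge 4$ gives branching rate at most $4^{1/7}<6^{1/9}$, and the degree-$2$ case is disposed of by a short subcase analysis (including a path recurrence $f(m)\le f(m-3)+f(m-5)$), but the degree-$3$ case has naive rate $3^{1/5}>6^{1/9}$ and forces a two-level look-ahead exploiting that deleting two adjacent degree-$3$ vertices creates degree-$2$ vertices; the worst case of that look-ahead is exactly the $K_{3,3}$ pattern of $6$ branches over $9$ edges, which is where the whole difficulty concentrates.
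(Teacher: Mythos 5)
Your proof is correct and follows essentially the same route as the paper: the paper also derives the bound from the Alon--Friedman form of the Kahn--Lov\'asz inequality $\mathrm{pm}(G)\le\prod_v (d_v!)^{1/(2d_v)}$ and then observes that the per-incidence factor $(d!)^{1/(2d^2)}$ is maximized at $d=3$, which is exactly your inequality $9\ln(d!)/d^2\le\ln 6$ written per vertex. The only difference is that you explicitly verify the maximization (small cases plus a Stirling tail), a step the paper merely asserts.
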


\begin{proof}
This follows from a result of Alon and Friedman \cite{AloFri-EJC-08} according to which the number of perfect matchings in a graph with degree sequence $d_i$ is at most $\prod (d_i)^{1/2d_i}$. In a graph with $m$ edges, there are $2m$ vertex-edge incidences, which can be grouped in various ways to vertices of varying degrees. According to the formula of Alon and Friedman, an incidence that participates in a vertex of degree $d$ contributes a factor of $(d!)^{1/2d^2}$ to the total bound on the number of matchings. This factor is maximized at degree three, for which it is $6^{1/18}$. If each vertex-edge incidence participates in a vertex of degree three, maximizing the contribution to the product from every incidence, then the total product is as stated.
\end{proof}

It is not possible for all matchings of $K_{3,3}$ to be stable so this provides an upper bound but not a tight bound on the number of stable matchings as a function of~$m$.

\begin{lemma}
\label{lem:exponentials}
In a graph with $m$ edges, the number of structures consisting of a perfect matching and an arbitrary subset of the remaining edges is $O(c_1^m)$, and the number of structures consisting of two perfect matchings (not necessarily disjoint) and an arbitrary subset of the remaining edges is~$O(c_2^m)$, where
\[
c_1 = 1280^{1/9} \approx 2.21435
\]
and
\[
c_2 = (4c_0(1+c_0)^6)^{1/7} \approx 2.48475.
\]
\end{lemma}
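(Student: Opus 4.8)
The plan is to obtain both bounds as instances of \autoref{lem:matchplus}, choosing the auxiliary structure type $T$ appropriately and then evaluating the maximum in its formula.

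For the first bound, I would take $T$ to be ``an arbitrary subset of the edges''. A graph with $m$ edges has $2^m$ subsets, so $c_T=2$, and a structure consisting of a perfect matching together with a structure of type $T$ in the subgraph of unmatched edges is exactly a perfect matching together with an arbitrary subset of the remaining edges. \autoref{lem:matchplus} then gives the bound $O(c_1^m)$ with
\[
c_1\le\max\left\{\left(i\,2^{2i-2}\right)^{1/(2i-1)}\mid 1\le i\right\}.
\]
I would evaluate the inner expression for successive values of $i$; it increases up to $i=5$, where it equals $(5\cdot 2^{8})^{1/9}=1280^{1/9}$, and decreases thereafter (for instance $i=4$ gives $256^{1/7}\approx 2.208$ and $i=6$ gives $6144^{1/11}\approx 2.210$, both below $2.2143$), so the maximum is $c_1=1280^{1/9}$.

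For the second bound I would again apply \autoref{lem:matchplus}, now taking $T$ to be the structure counted in the first part: a perfect matching together with an arbitrary subset of the remaining edges. Rather than feed in the sharp constant $c_1$, I would use the cruder count that such structures number at most $(1+c_0)^m$ in an $m$-edge graph; this is precisely the binomial estimate $\sum_i\binom{m}{i}O(c_0^i)=O((1+c_0)^m)$ from the opening of the proof of \autoref{lem:matchplus}, applied with ``perfect matching'' as the inner type (whose count is $c_0^m$ by \autoref{lem:nmatch}) and with the free subset playing the role of the complementary edge set. Thus the first matching $M_1$ is produced by the branching of \autoref{lem:matchplus}, while the pair consisting of the second matching $M_2$ and the free subset $S$ forms the type-$T$ decoration, contributing $c_T=1+c_0$.

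The main obstacle is that \autoref{lem:matchplus} places the type-$T$ decoration inside the subgraph of edges left unmatched by $M_1$, which would force $M_2$ to be disjoint from $M_1$, whereas the two matchings need not be disjoint. I would handle this by augmenting the branching step: when the branching selects the matched edge $e=uv$ and deletes $u$, $v$, and their incident edges, I would additionally record how $M_2$ covers $u$ and $v$ --- either through $e$ itself, in which case $e\in M_1\cap M_2$, or through two of the deleted unmatched edges. Charging this choice a single extra perfect-matching factor $c_0$ per matched edge converts the recurrence underlying \autoref{lem:matchplus} into $c_M^{2i-1}\ge i\,c_0\,(1+c_0)^{2i-2}$, so that
\[
c_2\le\max\left\{\left(i\,c_0\,(1+c_0)^{2i-2}\right)^{1/(2i-1)}\mid 1\le i\right\}.
\]
Evaluating as before, the maximum is attained at $i=4$, giving $c_2=(4c_0(1+c_0)^6)^{1/7}$ (the neighboring value $i=5$ is only marginally smaller). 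The delicate point to verify carefully is that this single extra $c_0$ per matched edge genuinely accounts for every way the two matchings can share edges, with no structure counted twice; establishing this bookkeeping, rather than the arithmetic of maximizing the exponent, is where the substance of the second bound lies.
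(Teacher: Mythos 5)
Your first bound is handled correctly and in exactly the paper's way: apply \autoref{lem:matchplus} with $T$ an arbitrary edge subset, so $c_T=2$, and check that the maximum of the resulting expression is attained at $i=5$, giving $c_1=1280^{1/9}$.

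The second bound has a genuine gap, and it is precisely the point you deferred at the end. Your plan is to track the second matching $M_2$ \emph{locally} during the branching---``record how $M_2$ covers $u$ and $v$'' and charge this ``a single extra perfect-matching factor $c_0$ per matched edge.'' But $c_0=6^{1/9}\approx 1.22$ is an amortized constant coming from the global Alon--Friedman bound underlying \autoref{lem:nmatch}; it is not an upper bound on the number of local coverage choices, and the per-step inequality your recurrence needs is false as a local count. Concretely, consider a branching step in a $4$-regular graph where the matched edge is $e=uv$ and none of the $7$ deleted edges was deleted earlier. If $e\in M_2$, each of the $6$ unmatched deleted edges is in $S$ or not, giving $2^6=64$ configurations; if $e\notin M_2$, there are $3\times 3=9$ choices of the $M_2$-edges $f_u,f_v$ covering $u$ and $v$ (these are forced out of $S$) times $2^4=16$ choices of $S$ on the remaining edges, giving $144$ more. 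The total of $208$ configurations per choice of $e$ exceeds the $c_0(1+c_0)^6\approx 146$ that your recurrence allots per choice of $e$ at $i=4$. Rescuing this would require cross-step amortization (a vertex whose $M_2$-coverage was fixed at an earlier step has no choice left later), which the simple induction behind \autoref{lem:matchplus} cannot express.

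The paper's proof avoids tracking $M_2$ during the branching altogether. The branching enumerates only the pairs $(M_1,S)$; then, for each such pair, the number of compatible second matchings is bounded \emph{globally} by applying \autoref{lem:nmatch} to the subgraph of edges not in $S$, giving $c_0^{|E\setminus S|}$. That single global bound is then redistributed algebraically across the branching steps: a factor $c_0$ on each matched edge (which is never in $S$), and for each unmatched deleted edge a factor $1+c_0$, namely the sum of $1$ for ``in $S$'' and $c_0$ for ``not in $S$ and hence available to $M_2$.'' Summed this way, each step contributes $i\,c_0(1+c_0)^{2i-2}$ as pure algebra, with no claim that any local family of $M_2$-choices has size at most $c_0$. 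Your arithmetic (maximum at $i=4$, value $(4c_0(1+c_0)^6)^{1/7}$, with $i=5$ only marginally smaller) is fine once the recurrence is justified in this global-then-redistribute manner rather than by local charging.
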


\begin{proof}
The bound on $c_1$ follows from \autoref{lem:matchplus}, counting matchings plus a structure $T$ consisting of an arbitrary set of edges, with $c_T=2$. Calculation shows that the maximum value for the upper bound in the lemma is achieved at $i=5$, giving the value shown.

The bound on $c_2$ follows by using the same minimum-degree algorithm from the proof of \autoref{lem:matchplus} to choose a matching and a disjoint set of edges, and then applying \autoref{lem:nmatch} to bound the number of matchings in the set of edges that were not chosen to be included in the disjoint set of edges. Each matched edge or unmatched but unchosen edge contributes a factor of $c_0$ to the total number of structures found by this combination of bounds. When the minimum vertex has degree $i$ and all its neighbors, this produces a contribution per removed edge of $\bigl(ic_0(1+c_0)^{2i-2}\bigr)^{1/(2i-1)}$, almost the same as if we were applying \autoref{lem:matchplus} with $c_T=1+c_0$. Here, the first factor of $c_0$ comes from the application of \autoref{lem:nmatch} to the edge that was chosen to be in the matching, and the factor of $(1+c_0)^{2i-2}$ comes from applying the binomial theorem to the choice of edges to be part of the disjoint set and to the factor of $c_0$ that will be included when an edge is not chosen.  When the minimum vertex has degree $i$ but some of its neighbors have higher degrees, the leading $ic_0$ factor in this formula is averaged over a greater number of edges each contributing a factor of $1+c_0$, leading to a more complicated calculation but a lower contribution per removed edge than for the case of equal degrees. Calculation shows that the maximum value for this upper bound is achieved for a vertex with neighbors of equal degree, at $i=4$, giving the value stated in the lemma.
\end{proof}

\begin{theorem}
\label{thm:single-exponential}
We can test whether a given graph $G$ with $m$ edges and $n$ vertices is a graph of stably matchable pairs in space $O\bigl(\min\{c_1^m,(n/2)!2^{m-n/2}\}\bigl)$ and time $O\bigl(\min\{mc_2^m,(n/2)!^2 2^{m-n/2}\}\bigl)$, where $c_1$ and $c_2$ are the numbers from \autoref{lem:exponentials}.
\end{theorem}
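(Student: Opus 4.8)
The plan is to reduce recognition to the search for a rotation system and then to enumerate candidate rotation systems economically, charging the enumeration against the counting bounds of \autoref{lem:exponentials}. By \autoref{lem:rot-is-stable} together with the preliminaries, $G$ is a graph of stably matchable pairs if and only if it carries a rotation system whose underlying graph is exactly $G$; such a system is specified by two perfect matchings $\top$ and $\bot$ of the non-isolated vertices of $G$, together with a partially ordered family of alternating rotations in which every edge of $G$ is an upper edge of exactly one extended rotation and a lower edge of exactly one other. So the task becomes: decide whether some such triple $(\top,\bot,\text{rotation family})$ exists, and the whole problem reduces to (a) a polynomial-time test of a fully specified candidate and (b) an efficient enumeration of candidates.

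First I would show that, once the residual data beyond $\top$ and $\bot$ is fixed, checking extendability is only polynomial work. The key observation is that $\top$ and $\bot$ fix, at every vertex $v$, the bottom and top of the total order of the edges incident to $v$ (the $\bot$-edge and the $\top$-edge), so that the rotations are recovered by tracing alternating closed walks through the resulting corners, i.e. consecutive pairs in each vertex order. Given the remaining order information, one reconstructs the candidate rotations by this tracing in $O(m)$ time, checks that they are simple alternating cycles covering each edge exactly twice, extracts the precedence constraints forced between rotations meeting at a shared vertex or a shared middle edge, and verifies by a topological sort that these constraints are acyclic, so that they define a genuine partial order. A successful check on some candidate, with \autoref{lem:rot-is-stable}, certifies realizability; conversely every realization arises from some candidate, so the search is sound and complete.

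The two bounds come from two ways of organizing the enumeration. For the singly-exponential bound I would enumerate $\top$ and $\bot$ as two perfect matchings together with the residual order information encoded as a subset of the edges lying in neither matching -- precisely the combinatorial objects counted in \autoref{lem:exponentials} -- and generate them by the minimum-degree branching of \autoref{lem:matchplus}, yielding $O(c_2^m)$ candidates verified in $O(m)$ time each, for total time $O(mc_2^m)$, while storing only a single matching with a residual subset at any moment bounds the space by $O(c_1^m)$. For dense graphs this is wasteful, and a more direct enumeration -- listing each of $\top$ and $\bot$ among the at most $(n/2)!$ perfect matchings of $G$ and the residual choices among the $2^{m-n/2}$ subsets of the edges outside a chosen matching -- gives the alternative bound of $O((n/2)!^2\,2^{m-n/2})$ time and $O((n/2)!\,2^{m-n/2})$ space. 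Running the better of the two algorithms yields the claimed minima.

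The main obstacle I expect is the verification step, and specifically pinning down which residual data must be guessed so that the guesses inject into the structures counted by \autoref{lem:exponentials} (so that the bases $c_1$ and $c_2$ are correct) while still being a polynomial-time test with no false positives or negatives. The delicate point is acyclicity of the induced rotation order: a guessed family of alternating cycles can satisfy every local alternation and coverage condition yet admit no consistent partial order, exactly the cyclic-ordering obstruction exploited in \autoref{obs:k33-e}, so the topological-sort check must be shown both necessary and sufficient. Establishing that this check is correct, and that the encoding of the residual order information neither misses a valid system nor inflates the count beyond what \autoref{lem:exponentials} allows, is where the real work lies; once the enumerated family is identified with those structures, the stated time and space bounds follow immediately from the cited lemmas.
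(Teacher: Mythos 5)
Your high-level framing (reduce recognition to the existence of a rotation system, then enumerate structures counted by \autoref{lem:exponentials}) superficially resembles the paper, but the core of your algorithm rests on a claim that is never established and that I do not believe can be established as stated: that a candidate consisting of two perfect matchings $\top,\bot$ together with a \emph{subset} of the remaining edges carries enough information to reconstruct and verify a rotation system in $O(m)$ time. A rotation system must specify, at every covered vertex, a total order on its incident covered edges (the order in which rotations sweep past that vertex from its $\bot$-edge up to its $\top$-edge); your ``tracing of corners'' verifier explicitly presupposes this per-vertex order (``given the remaining order information\dots''), yet that order is exactly what a subset of edges cannot encode in general: a degree-$d$ vertex admits up to $(d-2)!$ corner structures, while a subset of its incident middle edges supplies only $O(d)$ bits, and summed over the graph this is precisely the $2^{\Theta(m\log m)}$-versus-$2^{O(m)}$ gap illustrated by \autoref{fig:factorial}. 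So either your candidates do not determine what your verifier needs (making the verification circular), or your verifier must decide in polynomial time whether \emph{some} rotation system with the given $\top$ and $\bot$ exists, which is a major unproven claim that nothing in your argument supports. You correctly flag the encoding and the acyclicity check as ``where the real work lies,'' but those are the entire content of the theorem, not a detail to be deferred.

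The paper avoids this difficulty with an idea your proposal is missing: linearization. A rotation system exists if and only if there is a \emph{sequence} of perfect matchings, each differing from the next by a single rotation, together covering every edge, with each edge rotated in at most once and rotated out at most once (one direction takes a linear extension of the partial order of rotations; in the other direction the sequence order itself serves as the partial order, so no acyclicity or topological-sort check is ever needed). The algorithm is then a breadth-first or depth-first search for a path in a graph $\Gamma$ whose vertices are pairs $(M,S)$, with $M$ a perfect matching and $S$ the set of edges already covered earlier in the sequence, and with an edge from $(M,S)$ to $(M',S\cup(M\setminus M'))$ whenever $M$ and $M'$ differ by a single alternating cycle and $M'\cap S=\emptyset$. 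Note that the subset $S$ here has a completely different meaning from yours---it records consumed edges, not rotation structure---and \autoref{lem:exponentials} is used to bound the number of vertices ($O(c_1^m)$) and transitions ($O(c_2^m)$) of $\Gamma$, not a count of independently checkable certificates. This also explains the space bound your proposal cannot account for: the search must store visited-state arrays over all of $\Gamma$, which is why the space is exponential; if your scheme of independently verifiable $O(m)$-time certificates actually worked, polynomial space would suffice, a strong sign that the stated bounds do not match your proof strategy.
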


\begin{proof}
We will search for a sequence of perfect matchings, differing from one to the next in the sequence by a single rotation, that together cover all edges of $G$ and such that each edge of $G$ is rotated into a matching at most once and rotated out of a matching at most once. If such a sequence exists, we can produce a rotation structure that has the first matching in the sequence as its bottom matching, the last matching in the sequence as its top matching, and the rotations of the sequence as the rotations of the rotation structure, partially ordered consistently with the sequence order.

We search for this sequence as a path in a graph $\Gamma$ whose vertices are pairs $(M,S)$ where $M$ is a perfect matching and $S$ is any subset of the unmatched edges. (This is essentially a dynamic programming algorithm but we find it more convenient to describe it as a graph search.) In our search, we will use $M$ to represent the matching at the current endpoint of a partial sequence of matchings of $G$, and $S$ to represent the edges that have already been covered by matchings earlier in the partial sequence. We make an edge in $\Gamma$ from vertex $(M,S)$ to vertex $(M',S')$ when matchings $M$ and $M'$ differ by a single rotation, when $M'\cap S=\emptyset$, and when $S'=S\cup(M\setminus M')$. With this definition, paths in $\Gamma$ automatically correspond to sequences of matchings that never re-use any edges: an edge, once used, gets added to $S$ and its presence there and the requirement that $S$ be disjoint from $M'$ prevent later matchings in the sequence from re-using it. Therefore, the sequences we seek are exactly the paths from any vertex of the form $(M,\emptyset)$ to any other vertex of the form $(M',E(G)\setminus M')$. We can find such a path by breadth first or depth first search of $\Gamma$, starting from the vertices $(M,\emptyset)$ and terminating whenever a vertex of the form $(M',E(G)\setminus M')$ is reached.

By \autoref{lem:exponentials}, $\Gamma$ has $O(c_1^m)$ vertices and $O(c_2^m)$ edges.
It remains to describe how to construct $\Gamma$ and search it in a space-efficient way, bearing in mind that we do not want to use the amount of memory it would require to store all the edges of $\Gamma$. We have enough memory that (by the assumption that machine words are large enough to store a single memory address) we can store a single matching $M$ or a single set of edges $S$ in a single word, and perform set operations like unions and intersections of these sets in constant time per operation. We will index the vertices of $\Gamma$ by consecutive integers in the range from $0$ to $|V(\Gamma)|-1$. Based on these indexes, we store the following information:
\begin{itemize}
\item An array $A_1$, addressed by vertex index, of the pairs $(M_i,S_i)$ of structures associated with vertex $i$.
\item An array $A_2$, addressed by vertex index, of the predecessor of vertex $i$ in the first path to reach that vertex found by the search, or a special flag value to indicate that it has not yet been reached.
\item An array $A_3$ of indexes of vertices on the current frontier of the search, arranged either as a stack for depth first search or a queue for breadth first search.
\item An array $A_4$ of Boolean values indexed by sets of edges, true when that set of edges is a simple cycle and false otherwise.
\item An array $A_5$ of lists indexed by sets $S$ of edges, listing the matchings $M$ disjoint from $S$ in sorted order, together with the vertex index for $(M,S)$ for each matching.
\end{itemize}
Array $A_4$ takes space $2^m$ to store, and the remaining arrays (including the total length of the lists in $A_5$ take space $O(c_1^m)$. Initializing $A_1$ can be accomplished in linear time using the structure enumeration algorithm of the proof of \autoref{lem:exponentials}. Array $A_2$ is easily initialized to contain only flag values, and $A_3$ is initialized to list all vertices of the form $(M,\emptyset)$. Initializing $A_4$ can be done by a naive algorithm that tests each set for being a cycle, in time $O(m2^m)$, smaller than our stated time and space bounds. $A_5$ can be initialized by using $A_1$ to count how many matchings are associated with each disjoint set in order to allocate storage for the lists within $A_5$,  using a second pass over $A_1$ to place matchings into lists, and then sorting each list.

Each step of the search of $\Gamma$ is obtained by removing a vertex $(M,S)$ of $\Gamma$ from the search frontier $A_3$, terminating the algorithm if $M\cup S=E(G)$, otherwise looping through the neighbors of $(M,S)$ in $\Gamma$, and for each previously-unreached neighbor setting its predecessor in $A_2$ and adding it to the frontier $A_3$. It remains to describe how to find the outgoing neighbors of a vertex $(M,S)$ of $\Gamma$, since these are not explicitly represented by the data structures described above (and storing them explicitly would use too much space). We do this simply by using $A_5$ to find all of the matchings $M'$ disjoint from $S$, performing set operations on $M$ and $M'$ to find their symmetric difference, and using $A_4$ to check that this symmetric difference is a simple cycle. When it is, we have found a neighbor $(M',S')$ of $(M,S)$ in $\Gamma$ , with $S'=S\cup(M\setminus M')$. We can find the vertex index of this neighbor by a binary search in the list for $S'$ of $A_5$. The bottleneck of the algorithm is performing these binary searches; each takes time $O(m)$ and the number of searches performed is~$O(c_2^m)$.

The alternative space and time bounds of $O(n! 2^{m-n})$ and $O(n!^2 2^{m-n})$, which are better for dense graphs,
follow from the same algorithm (with a preprocessing stage that eliminates vertices of degree less than two) using a more naive analysis in which we count pairs of a matching and a disjoint set, or two matchings and a disjoint set, by bounding the number of matchings by $(n/2)!$ and the number of disjoint sets by two to the power of the number of unmatched edges.
\end{proof}

By combining this algorithm with a kernelization algorithm that removes isolated vertices or edges,
immediately halts if there are other degree-one vertices, and compresses all paths of degree-two vertices to paths of the same parity with one or two degree-two vertices, we would also obtain an algorithm that is fixed-parameter tractable in the number of vertices of degree greater than two. However, this is subsumed by the results of the next section, in which we show that the problem is fixed-parameter tractable in the carving width of the input graph.

\subsection{Carving width}

In this section we parameterize the recognition of graphs of stably matchable pairs by \emph{carving width}.
The precise definition of carving width involves hierarchical clusterings of the vertices of a graph so that each cluster boundary is crossed by a small number of edges, but we do not need the details for this section; instead, it suffices to know that for graphs of treewidth $w$ and maximum degree $d$, the carving width is bounded below by $\Omega(\max(w,d))$ and bounded above $O(wd)$~\cite{Epp-JGAA-18}. Therefore, we can equivalently think of the graphs of small carving width as being graphs that simultaneously have small treewidth and small maximum degree.

As is standard for fixed-parameter tractable algorithms involving treewidth, our treewidth-parameterized algorithm for recognizing graphs of stably matchable pairs uses dynamic programming on a \emph{tree decomposition}, a tree whose nodes are associated with sets of vertices of the given graph (called \emph{bags}), with the constraints that every edge of the graph has both endpoints in at least one bag and that the bags containing any given vertex of the graph are associated with the nodes of a connected subtree. The \emph{width} of the decomposition is one less than the number of vertices in its largest bag, and the treewidth of the graph is the minimum width of a tree decomposition. It will be convenient to use a \emph{nice tree decomposition}, in which the tree of bags is rooted and each bag has one of three types~\cite{DorTel-DAM-09}:
\begin{itemize}
\item An \emph{introduce node} is either a leaf node whose bag contains a single vertex, or a node with exactly one child whose bag is formed by adding one vertex to its child's bag.
\item A \emph{forget node} is a node with exactly one child whose bag is formed by removing one vertex from its child's bag.
\item A \emph{join node} has exactly two children, with the bags of it and its children all being equal.
\end{itemize}
Each vertex can be introduced in many introduce nodes, but forgotten only from one, which must be the parent of the common ancestor of the introduce nodes. It will be convenient to introduce one more type of node, an \emph{edge node}, associated bijectively with a specific edge in the given graph whose endpoints belong to the bag of the edge node. In this way, each node of the tree decomposition can be associated with a unique subgraph of the given graph, the subgraph whose edges and vertices are associated with the given node and its descendants. By introducing additional forget nodes we may assume without loss of generality that the root of the tree has an empty bag; its associate is the whole graph. We describe a tree decomposition that meets these minor modifications to the usual definition of a nice tree decomposition as a \emph{good tree decomposition}.

Minimum-width tree decompositions can be found in time linear in the number of vertices of an input graph but exponential in the cube of the width~\cite{Bod-SICOMP-96}; it is also possible to find an approximate tree decomposition whose width is within a constant factor of minimum, in time linear in the number of vertices of the input graph and single-exponential in the width~\cite{BodDraDre-SICOMP-16}. An arbitrary tree decomposition can be transformed into a nice tree decomposition of the same width, blowing up the number of nodes by a factor polynomial in the width, in time linear in the number of vertices of the graph and polynomial in the width~\cite{DorTel-DAM-09}. Adding edge nodes between the highest node whose bag contains both endpoints of each edge and the forget nodes above them introduces a number of nodes which is again linear in vertices and polynomial in width, within the same time bounds.

Our dynamic programming algorithm will consider two different kinds of partial information about a rotation system for the given graph, restricted to the subgraph associated with a node of a nice tree decomposition, which we call a \emph{rotation subsystem} and a \emph{rotation state}. A rotation subsystem completely describes the part of a rotation system within the subgraph associated with a node, while a rotation state describes only the parts of the rotation system involving the vertices of the current bag. A rotation subsystem and a rotation state are \emph{compatible} when the rotation state is obtained from the rotation subsystem by forgetting the information about vertices that are not in the bag. For a given node $x$ of the tree decomposition, let $G_x$ denote the subgraph associated with that node, $B_x$ denote the set of vertices of the bag associated with that node, $F_x=V(G_x)\setminus B_x$ denote the set of forgotten vertices, and $G_x[B_x]$ denote the part of $G_x$ having both endpoints in $B_x$. More specifically, a rotation subsystem for a node $x$ consists of the following information:
\begin{itemize}
\item Top and bottom matchings $\top\cap G_x$ and $\bot\cap G_x$ in which all vertices of $F_x$ are matched, but vertices of $B_x$ may or may not be matched.
\item A collection of rotations, an abstract partially ordered set whose elements will represent the subset of rotations of the rotation system that include at least one edge of $G_x$.
\item A collection of \emph{pieces} of rotations, the intersections of rotations with $G_x$. Each piece is either a full cycle, or a connected path whose endpoints both belong to $B_x$. Each two pieces of the same rotation must be disjoint. A rotation can either have a full cycle as its piece or one or more paths as pieces, but not both.
\end{itemize}
Each edge in the subgraph associated with a node must be part of two elements (either pieces or $\top$ or $\bot$), and be upper in one and lower in the other (as determined by the partial order of the rotations). Each piece must alternate between upper and lower edges. We do not require that this structure can be extended to a full rotation system, in general; however, we have:

\begin{observation}
A rotation subsystem at the root node of a good tree decomposition coincides with a rotation system for the given graph.
\end{observation}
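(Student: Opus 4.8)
The plan is to verify that, at the root node, the general definition of a rotation subsystem collapses term by term onto the definition of a rotation system, so that no genuine argument beyond unwinding the definitions is required. First I would record the consequences of the root bag being empty. At the root node $x$ we have $B_x=\emptyset$, $G_x=G$, and $F_x=V(G)$. Because every vertex of $G$ now lies in $F_x$, the matchings $\top\cap G_x=\top$ and $\bot\cap G_x=\bot$ must match every vertex, so both are perfect matchings of $G$. Because a path piece is required to have both endpoints in $B_x=\emptyset$, no piece can be a path, so every piece is a full cycle; and since a rotation may not mix a full-cycle piece with path pieces, each rotation has exactly one piece, a full cycle, which (as $G_x=G$) is the whole rotation. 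Thus the abstract partial order of rotations is realized by an actual partially ordered collection of cycles in $G$.

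Next I would match the remaining requirements against the rotation-system axioms one at a time. The condition that every edge of $G$ belong to exactly two elements, upper in one and lower in the other, is precisely the condition that each pair belong to two members of the cycle collection together with $\{\top,\bot\}$; since this leaves no edge uncovered, $G$ is exactly the graph of pairs participating in these extended rotations, and the ``zero or two'' clause of the rotation-system definition holds with ``two'' throughout. The condition that each piece alternate between upper and lower edges becomes, now that each piece is a full cycle equal to its rotation, the condition that each cycle alternate between upper and lower edges. Here I would confirm that the two definitions assign ``upper'' and ``lower'' in the same way---an edge a cycle $C$ shares with $\top$ or with a higher cycle is upper in $C$, and one it shares with $\bot$ or a lower cycle is lower in $C$---so the alternation conditions are identical. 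Finally, because $G$ is bipartite and each cycle alternates between edges of $\top$ and of $\bot$, each cycle automatically alternates between students and residencies, giving the last rotation-system axiom for free.

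For the reverse direction I would observe that any rotation system for $G$ is already a rotation subsystem at the root: each of its rotations is a full cycle, so the requirement that a path piece have both endpoints in the bag does not apply, and its matchings $\top$ and $\bot$ are perfect---by the chain argument in the proof of \autoref{lem:rot-is-stable}, in a graph with no isolated vertices every vertex is matched in both---so all of $F_x=V(G)$ is matched. The two notions therefore coincide. I expect the only step requiring care to be the second paragraph: matching the upper/lower conventions of the two definitions, and checking that ``every edge is covered exactly twice'' is the same statement as ``$G$ is precisely the graph of the rotation system.'' Everything else is forced immediately by the emptiness of the root bag and needs no real computation.
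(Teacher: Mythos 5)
Your proof is correct and takes essentially the same approach as the paper's: the paper's own two-sentence argument likewise observes that a rotation subsystem relaxes a rotation system in exactly two ways---permitting path pieces (whose endpoints must lie in the bag) and permitting unmatched bag vertices---and that both relaxations vanish because the root bag of a good tree decomposition is empty. Your version is a more detailed unwinding of the same observation, and your explicit treatment of the converse direction (invoking the chain argument from \autoref{lem:rot-is-stable} to see that every vertex is matched in $\top$ and $\bot$) is, if anything, slightly more careful than the paper's.
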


\begin{proof}
Rotation subsystems differ from rotation systems in allowing pieces of rotations that are paths rather than cycles, and in allowing the bag vertices to be unmatched in $\top$ and $\bot$. But because the root node of a good tree decomposition is assumed to be empty bag, these differences are not possible at that node.
\end{proof}

We define a \emph{rotation state} at a node of the tree decomposition to consist of the following information, corresponding to the information in a rotation subsystem but restricted to bag vertices. Then a rotation state for $x$ consists of:
\begin{itemize}
\item Top and bottom matchings $\top$ and $\bot$ in $G_x[B_x]$, together with two bits of information for each remaining bag node specifying whether it has a match in $\top\cap G_x$ or $\bot\cap G_x$.
\item A collection of rotations, an abstract partially ordered set whose elements will represent the subset of rotations of the rotation system that either include at least one edge of $G_x$ or at least one piece that is a path with an endpoint in $B_x$. Because of the assumption that the input graph has low degree, the number of rotations in this collection will also be low.
\item A collection of pieces of rotations. For each piece we store the rotation that it is part of, the edges of the piece that belong to $G_x[B_x]$, and the endpoints of the piece in $B_x$, but not the edges in the rest of $G_x$. At each endpoint in $B_x$, we store whether the edge of the piece that is incident to that endpoint is an upper or lower edge in its rotation.
\end{itemize}
We require that each edge of $G_x[B_x]$ belong to two elements (either pieces or $\top$ or $\bot$), and be upper in one and lower in the other, and that when two edges of $G_x[B_x]$ are consecutive within a piece that they alternate between upper and lower. We do not require that this structure can be extended to a rotation subsystem for the same node of the tree decomposition, but we call it \emph{valid} when it can be so extended.

\begin{lemma}
\label{lem:nstates}
In a tree decomposition of width $w$ for a graph of degree $d$, the number of possible rotation states is at most exponential in $O((wd)^2)$.
\end{lemma}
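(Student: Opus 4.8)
The plan is to bound the number of rotation states by bounding, separately, the number of choices for each of the three components of a rotation state (the restricted matchings, the partially ordered collection of rotations, and the collection of pieces), and then multiplying these bounds. I expect that the only component contributing a quadratic exponent is the partial order on the rotations; the other two contribute only $2^{O(wd)}$ and $2^{O((wd)^2)}$ respectively, so the product will be $2^{O((wd)^2)}$ as claimed.

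First I would fix the basic combinatorial dimensions. Since the decomposition has width $w$, the bag $B_x$ has at most $w+1$ vertices, and since $G$ has maximum degree $d$, the number of edges of $G_x[B_x]$ is at most $(w+1)d/2 = O(wd)$. The first component of a state, namely the matchings $\top$ and $\bot$ restricted to $G_x[B_x]$ together with the two matched/unmatched bits recorded for each bag vertex, is then determined by two subsets of these $O(wd)$ edges plus $O(w)$ further bits, giving only $2^{O(wd)}$ possibilities.

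Next I would bound the number $R$ of rotations that a state records, which is the crucial step and the one where the degree hypothesis does its work. Every rotation recorded in a state is witnessed at the current bag in one of two ways: it contains an edge of $G_x[B_x]$, or one of its path-pieces has an endpoint in $B_x$. There are $O(wd)$ edges of the first kind, and each lies in at most two extended rotations, so these witness $O(wd)$ rotations. For the second kind, I would count edge--vertex incidences at bag vertices: there are at most $(w+1)d$ of them, and each incidence serves as the endpoint incidence of at most two pieces, so there are $O(wd)$ path-piece endpoints in $B_x$ and hence $O(wd)$ rotations of this kind. Therefore $R = O(wd)$. Given this, the partial order on these $R$ abstract elements is recorded as its set of strict relations, a subset of the at most $R^2$ ordered pairs, giving at most $2^{R^2} = 2^{O((wd)^2)}$ possibilities --- the dominant term.

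Finally I would account for the collection of pieces. There are $O(wd)$ path-pieces (two endpoints each, among the $O(wd)$ path-piece endpoints), and each is described by the index of its rotation (one of $R$ choices), the subset of $G_x[B_x]$ edges it uses, its two endpoints in $B_x$, and an upper/lower label at each endpoint, which is $O(wd)$ bits per piece and hence $2^{O((wd)^2)}$ possibilities overall. Multiplying the three bounds yields $2^{O(wd)}\cdot 2^{O((wd)^2)}\cdot 2^{O((wd)^2)} = 2^{O((wd)^2)}$. The main obstacle is the rotation-count step: one must argue that although the subgraph $G_x$ may contain arbitrarily many rotations of the full system, only the $O(wd)$ rotations witnessed at the current bag need to be distinguished by the state, so that the partial order is on $O(wd)$ rather than unboundedly many elements --- this is precisely where the low-degree assumption is needed.
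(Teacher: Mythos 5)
Your proposal is correct and follows essentially the same route as the paper's proof: bound the number of pieces and rotations by $O(wd)$ via edge--vertex incidences at bag vertices (this is exactly where the paper also invokes the degree bound), observe that the abstract partial order on these $O(wd)$ rotations is the dominant $2^{O((wd)^2)}$ term, and note that the matchings and per-piece bookkeeping contribute only lower-order factors. The only (immaterial) difference is granularity of the piece encoding --- you spend $O(wd)$ bits per piece on its edge set, while the paper charges $O(wd\log(wd))$ bits total by assigning each edge of $G_x[B_x]$ to its at most two pieces --- and both are absorbed into the same bound.
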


\begin{proof}
Because each piece of rotation represented within a rotation state uses at least one of the incident edges at one of the vertices of $B_x$, there can be at most $O(wd)$ pieces, and $O(wd)$ rotations. The partial order on rotations can be represented using $O((wd)^2)$ bits of information, the top and bottom matchings in $G_x[B_x]$ require $O(w^2)$ bits of information, and the information about which rotation pieces belong to which rotations, which bag vertices are endpoints of which pieces, and whether the incident edges are upper or lower, all take $O(wd\log(wd))$ bits of information.
\end{proof}

Our overall algorithm will construct a tree decomposition and then work bottom up through its nodes, calculating at each node the set of valid rotation states. To do so, it uses the following lemmas, which describe the sets of valid rotation states for each possible type of node in our decomposition, and allow each set to be computed in an amount of time that is exponential in a polynomial of the carving width, but independent of the overall graph size.

\begin{lemma}
At an introduce node $x$ of the tree decomposition, with child $y$, a state $S$ is valid if and only if the vertex $v$ that is introduced at $x$ does not take part in any pieces of rotations, $v$ is marked as not matched in $\top\cap G_x$ and $\bot\cap G_x$, and the state $S'$ formed by removing $v$ from $S$ vertex is valid for $y$.
\end{lemma}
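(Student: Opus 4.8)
The plan is to exploit the defining feature of an introduce node in a good tree decomposition: it adds a vertex to the bag but introduces no edge, since every edge is handled at its own edge node. Concretely, every edge incident to the introduced vertex $v$ is associated with an edge node strictly above $x$, so $v$ is an \emph{isolated} vertex of the subgraph $G_x$ associated with $x$. Thus $G_x$ is obtained from the subgraph $G_y$ of the child simply by adjoining $v$ with no incident edges, and in particular $V(G_x)=V(G_y)\cup\{v\}$ with $v\notin V(G_y)$ (the connectivity axiom of tree decompositions forbids $v$ from appearing in any bag below $y$, as $v\notin B_y$), whence $F_x=F_y$. This one observation drives both directions of the biconditional.

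First I would prove the forward direction. Suppose $S$ is valid, so it extends to a rotation subsystem $\Sigma$ for $x$. Because $v$ is isolated in $G_x$, it has no incident edge that could lie in $\top\cap G_x$ or $\bot\cap G_x$, and none that could belong to any piece of a rotation; hence $\Sigma$ leaves $v$ unmatched in both matchings and involves $v$ in no piece. For $\Sigma$ to be compatible with $S$, the recorded data for $v$ in $S$ must agree with $\Sigma$, so $S$ records $v$ as unmatched in $\top\cap G_x$ and $\bot\cap G_x$ and as taking part in no piece. Deleting the isolated vertex $v$ from $\Sigma$ changes none of the matchings, the partial order of rotations, or the pieces (none referenced $v$), giving a rotation subsystem $\Sigma'$ for $y$ that is compatible with the reduced state $S'$; therefore $S'$ is valid.

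For the converse, assume the three stated conditions: $S$ records $v$ as in no piece, $S$ marks $v$ unmatched in $\top\cap G_x$ and $\bot\cap G_x$, and $S'$ is valid for $y$, extending to a rotation subsystem $\Sigma'$. I would form $\Sigma$ by adjoining $v$ to $\Sigma'$ as an isolated vertex, keeping the matchings, the partially ordered set of rotations, and all pieces exactly as in $\Sigma'$. I then verify that $\Sigma$ meets every defining property of a rotation subsystem for $x$: the matchings still match all of $F_x=F_y$ (adding $v$ to the bag moves it into $B_x$, not into the forgotten set); since $G_x$ has the same edges as $G_y$, every edge still lies in exactly two elements with the correct upper/lower alternation and every piece is still a full cycle or a path with both endpoints in $B_y\subseteq B_x$. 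Finally I check compatibility of $\Sigma$ with $S$: on $G_y$ the two structures agree through $\Sigma'$ and $S'$, and the only extra data in $S$ over $S'$ records $v$ as unmatched and in no piece, which is exactly the state of $v$ in $\Sigma$. Hence $S$ is valid.

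There is no genuine obstacle here; the whole argument reduces to the fact that an isolated vertex carries no matching or rotation structure and so may be freely added to or removed from a rotation subsystem without disturbing any defining condition. The only point requiring care is the bookkeeping needed to confirm that an introduce node introduces no edge and that $v$ is therefore isolated in $G_x$, together with the routine consequence $F_x=F_y$; once these are pinned down the verification of both directions is immediate.
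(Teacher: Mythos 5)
Your proposal is correct and follows essentially the same route as the paper's proof: observe that $v$ is isolated in $G_x$, so a rotation subsystem for $x$ cannot involve $v$ in any matching or piece, and deleting or adjoining the isolated vertex $v$ converts rotation subsystems for $x$ and $y$ into one another. Your write-up merely makes explicit the bookkeeping (why $v$ has no incident edges in $G_x$, why $F_x=F_y$, and the compatibility check with $S$ and $S'$) that the paper leaves implicit.
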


\begin{proof}
If $S$ is valid, it has a rotation subsystem for $x$, which cannot use vertex $v$ in any rotation or in the top or bottom matching because $v$ has no edges in $G_x$. Therefore, removing $x$ produces a rotation subsystem for $y$, showing that $S'$ is valid. Conversely, if $S'$ is valid, it has a rotation subsystem for $y$ Adding $v$ to this subsystem without changing its rotations or pieces produces a rotation subsystem for $x$, in which $v$ is unmatched, showing that $S$ is valid.
\end{proof}

\begin{lemma}
At a forget node $x$ of the tree decomposition, with child $y$, a state $S$ is valid if and only if the vertex $v$ that is forgotten at $x$ does not take part in any pieces of rotations, and there exists a valid state $S'$ formed from $S$ by adding $v$ to $S$, marking it as matched in both $\top\cap G_y$ and $\bot\cap G_y$, and possibly adding an edge incident to $v$ in $G_y[B_y]$ to $\top$ or $\bot$.
\end{lemma}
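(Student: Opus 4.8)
The plan is to exploit the fact that a forget node introduces no new edges, so that $G_x=G_y$ and the only change in passing from $y$ to $x$ is that $v$ moves out of the bag, with $B_x=B_y\setminus\{v\}$ and $F_x=F_y\cup\{v\}$. The one structural fact I would establish first is that every edge incident to $v$ already belongs to $G_y$: since $x$ is the unique forget node for $v$, it sits immediately above the connected subtree of bags containing $v$, so every edge node incident to $v$ lies in the subtree rooted at $y$ and its edge is therefore present in $G_y$. This has two consequences that drive the whole argument. In any rotation subsystem for $y$ the vertex $v$ can never be the endpoint of a path piece, because a piece endpoint marks a place where a rotation leaves the associated subgraph, which is impossible at $v$ once all of its incident edges are present; and any rotation through $v$ uses two of $v$'s incident edges, so $v$ is interior to, rather than an endpoint of, each piece it meets. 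This is the sense in which I would read the condition that $v$ ``does not take part in any pieces of rotations'': it is the requirement that $v$ not be a loose end of any piece.

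For the backward direction, suppose $S'$ is a valid state for $y$ satisfying the stated conditions and forgetting to $S$, and let $\Sigma'$ be a rotation subsystem for $y$ compatible with $S'$. I would reinterpret $\Sigma'$ verbatim as a candidate subsystem for $x$ — its matchings, rotations, and pieces are unchanged, since $G_x=G_y$ — and verify the two defining requirements that differ between $y$ and $x$. First, every vertex of $F_x$ must be matched in both $\top\cap G_x$ and $\bot\cap G_x$: the vertices of $F_y$ are matched because $\Sigma'$ is a subsystem for $y$, and $v$ is matched in both by hypothesis, so all of $F_x=F_y\cup\{v\}$ is matched. Second, every path piece must have both endpoints in $B_x$: the endpoints all lay in $B_y$, and by the hypothesis that $v$ is not a piece endpoint none of them is $v$, so they lie in $B_x=B_y\setminus\{v\}$. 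The alternation condition and the requirement that each edge belong to two elements are inherited unchanged. Thus $\Sigma'$ is a valid rotation subsystem for $x$, and restricting it to $B_x$ recovers exactly $S$, so $S$ is valid.

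For the forward direction, suppose $S$ is valid for $x$ with witnessing subsystem $\Sigma$, and reinterpret $\Sigma$ as a structure for $y$. Because $F_y\subseteq F_x$ all of $F_y$ is matched, and because $B_x\subseteq B_y$ every piece endpoint still lies in the bag, so $\Sigma$ is a valid rotation subsystem for $y$; let $S'$ be its restriction to $B_y$. Then $S'$ is valid for $y$, and since $v\in F_x$ it is matched in both $\top$ and $\bot$ in $\Sigma$ and is interior to all of its pieces, so $S'$ meets the two conditions on $v$. Forgetting $v$ from $S'$ is, by definition, the restriction of $\Sigma$ from $B_y$ to $B_x$, which is $S$. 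The clause about ``possibly adding an edge incident to $v$ in $G_y[B_y]$'' is the bookkeeping of the matchings: if the partner of $v$ in $\top$ (or in $\bot$) is itself a vertex of $B_x$, the corresponding edge lies in $G_y[B_y]$ but not in $G_x[B_x]$, so it is recorded explicitly in $S'$ and replaced by a match-flag on its endpoint in $S$, whereas if the partner is a forgotten vertex the edge lies outside $G_y[B_y]$ and nothing is added.

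I expect the main obstacle to be the careful bookkeeping of how forgetting $v$ acts on the state representation rather than directly on the subsystem: I must confirm that dropping $v$ together with its incident edges from $G_x[B_x]$ turns each matching edge at $v$ into the appropriate match-flag, and leaves each piece through $v$ as a single piece with unchanged endpoints and rotation membership, its interior segment through $v$ simply passing into the hidden portion in $G_x$ outside $G_x[B_x]$. This is what makes ``forgetting $v$ from $S'$'' a well-defined operation that matches the restriction of the subsystem. The preliminary fact that all of $v$'s edges already lie in $G_y$ is the lever for everything else: it forces the piece-endpoint condition to hold automatically in the forward direction and supplies it as a usable hypothesis in the backward direction.
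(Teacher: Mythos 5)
Your proof is correct and takes essentially the same route as the paper's: since $G_x=G_y$ at a forget node, the very same rotation subsystem witnesses validity at both $x$ and $y$, and the only requirements that change are that $v$, once in $F_x$, must be matched in both $\top$ and $\bot$ and cannot be a piece endpoint. Your preliminary observation that every edge incident to $v$ already lies in $G_y$, and your closing bookkeeping about how forgetting acts on the state representation, simply make explicit details that the paper's shorter argument leaves implicit.
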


\begin{proof}
If $S$ is valid, it has a rotation subsystem for $x$, which cannot have vertex $v$ as the endpoint of a piece of a rotation because $v$ does not belong to $B_x$. Because $x$ belongs to $F_x$, it must be matched in both $\top$ and $\bot$ in this rotation subsystem. The same rotation subsystem exists for $y$, and corresponds to a state $S'$ that meets the conditions of the lemma. Conversely, if a state $S'$ as described in the lemma exists, it has a rotation subsystem, which matches $v$ in both $\top$ and $\bot$. This matching is the only additional requirement for a rotation subsystem at $x$, so there is a rotation subsystem at $x$ showing that $S$ is valid.
\end{proof}

\begin{lemma}
At an edge node $x$ which introduces edge $uv$ as the only edge of graph $G_x$, a state $S$ is valid if and only if one of the following three conditions is met:
\begin{itemize}
\item Edge $xy$ is included in $\top$, vertices $u$ and $v$ are marked as matched in $\top$ and unmatched in $\bot$, and there are no rotations or pieces of rotations.
\item Edges $xy$ is included in $\bot$, vertices $u$ and $v$ are marked as matched in $\top$ and unmatched in $\bot$, and there are no rotations or pieces of rotations.
\item Edge $xy$ is not included in $\top$ or $\bot$, vertices $u$ and $v$ are marked as unmatched in both $\top$ and $\bot$, and there is a single rotation having a single piece, consisting of this edge, marked as either upper or lower.
\end{itemize}
\end{lemma}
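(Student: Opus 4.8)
The plan is to prove the biconditional in both directions by a direct case analysis on the role played by the single edge $uv$ in a rotation subsystem, using the fact that at such an edge node the associated subgraph $G_x$ is precisely the one-edge graph on $\{u,v\}$ with both endpoints in $B_x$, so that $G_x[B_x]=G_x$ and nothing is forgotten. (If any vertex were forgotten it would be an isolated vertex of $G_x$, which the definition of a rotation subsystem requires to be matched in both $\top$ and $\bot$; being isolated this is impossible, so such a node admits no valid state and lies outside the stated hypotheses.) Because $G_x$ contains exactly one edge, every rotation subsystem for $x$ is determined by where that edge is placed: it may be a lower edge of $\top$, an upper edge of $\bot$, or the sole edge of a single-edge piece of one rotation. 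No piece can be a full cycle, since a bipartite cycle has at least four edges; and the edge cannot belong simultaneously to a matching and to a rotation piece, because a piece meeting $u$ (or $v$) would occupy the only edge incident to that vertex, leaving it unavailable to the matching.

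For the forward direction I would argue that if $S$ is valid, then fixing a rotation subsystem for $x$ and applying the classification above places $uv$ in exactly one of the three roles; reading off the bag-restricted data --- which of $u,v$ are recorded as matched in $\top\cap G_x$ and in $\bot\cap G_x$, whether $uv$ lies in $\top$ or $\bot$, and, in the rotation case, the single piece together with the upper/lower mark at its endpoints --- reproduces exactly one of the three listed conditions. For the reverse direction I would verify that each of the three configurations already \emph{is} a rotation subsystem for $x$: a single matching edge carries its canonical role (lower as an element of $\top$, upper as an element of $\bot$), a one-edge piece has no two consecutive edges and so satisfies the alternation requirement vacuously, and in every case there are no forgotten vertices to be matched. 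Thus each configuration is valid, and together with the forward direction this establishes the equivalence.

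The step I expect to be the only real obstacle is confirming that the three cases are \emph{exhaustive and mutually exclusive}, and in particular ruling out the degenerate configuration in which $uv$ lies in both $\top$ and $\bot$ (a uniquely matched pair). Such a pair forces $u$ and $v$ to have degree one, so it is excluded here under the standing assumption that the recognition instance has been reduced to remove isolated and degree-one vertices, leaving every vertex of degree at least two; I would state this reduction explicitly and note that it is what makes the ``matched in $\top$, unmatched in $\bot$'' versus ``matched in $\bot$, unmatched in $\top$'' dichotomy complete. Once exhaustiveness is settled, the remaining verifications follow immediately from the definitions of rotation subsystem and rotation state, and I would only need to check that the bookkeeping of the rotation state --- the two bits per bag vertex recording $\top$- and $\bot$-matched status, and the upper/lower annotations on piece endpoints --- agrees with the data extracted from the subsystem in each of the three cases.
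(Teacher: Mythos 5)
Your overall skeleton is the same as the paper's (whose entire proof is one sentence: since $G_x=G_x[B_x]$ at such a node, rotation subsystems coincide with rotation states, and the listed cases "are the only possible rotation subsystems for a single-edge graph"). Where your write-up differs is that you actually try to prove the exhaustiveness claim, and both of the steps you add to do so are unsupported. First, your reason for excluding the configuration in which $uv$ lies simultaneously in a matching and in a rotation piece---that a piece through $u$ would "occupy" the only edge at $u$ and leave it unavailable to the matching---appeals to an availability constraint that appears nowhere in the definition of a rotation subsystem. Matchings and pieces are separate collections that are allowed to share edges; indeed, in any full rotation system every edge of $\top$ that is not a uniquely matched pair is also the upper edge of some rotation, and the paper's definition of a rotation subsystem explicitly asks that each edge "be part of two elements \dots upper in one and lower in the other." So this step of your case analysis is not a consequence of the definitions you are working from (it is, at best, in tension with them).

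Second, and more seriously, you exclude the configuration $\top\cap G_x=\bot\cap G_x=\{uv\}$ (a uniquely matched pair) by invoking a "standing assumption that the recognition instance has been reduced to remove isolated and degree-one vertices." No such assumption is made in the carving-width section of the paper: the kernelization remark you are thinking of occurs at the end of the single-exponential section, applies to a different algorithm, and is explicitly described as subsumed by (not imported into) the carving-width result. Without that assumption the doubly-matched configuration is a legitimate rotation subsystem of the one-edge graph---it is exactly what arises when $uv$ is an isolated-edge component, which is realizable (two elements stably matched only to each other)---so your three cases are not exhaustive, and the step you yourself flag as "the only real obstacle" is where the argument breaks. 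To repair the proof you would either have to add the fourth case to the statement or establish, from hypotheses actually present in the paper, why it cannot occur; your proposal does neither.
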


\begin{proof}
In each case $G_x=G_x[B_x]$, so rotation subsystems coincide with rotation states, and these are the only possible rotation subsystems for a single-edge graph.
\end{proof}

The next lemma lets us reduce the analysis of edge nodes more generally to be a special case of join nodes.

\begin{lemma}
At an edge node $x$ with child $y$ which introduces edge $uv$ to a non-empty subgraph $G_y$, a state $S$ is valid if and only if it would be valid for a join node $x'$ with the same bag as $x$ whose two children are $y$ and an edge introducing $uv$ as the only edge in its subgraph.
\end{lemma}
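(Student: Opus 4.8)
The plan is to show that the two nodes $x$ (an edge node with child $y$) and $x'$ (a join node) have identical associated subgraphs, and that the notions of rotation subsystem and rotation state coincide across the two constructions, so that validity of a state $S$ is the same in both settings. First I would observe that the subgraph $G_x$ associated with the edge node is precisely the union of $G_y$ with the single edge $uv$, sharing only the bag vertices (in particular $u$ and $v$). Meanwhile, the join node $x'$ has two children: $y$, with subgraph $G_y$, and an edge node $z$ whose subgraph $G_z$ consists solely of the edge $uv$. Since the bag of $x'$ equals the bag of $x$, which contains both $u$ and $v$, the subgraph $G_{x'}$ associated with the join node is also the union $G_y \cup G_z = G_y \cup \{uv\}$, glued along the common bag vertices. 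Thus $G_x = G_{x'}$ as subgraphs of the input graph.

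Next I would argue that a rotation subsystem for $x$ is the same combinatorial object as a rotation subsystem for $x'$. A rotation subsystem is determined by the top and bottom matchings within the associated subgraph, the abstract partially ordered collection of rotations, and the pieces (paths or cycles) of those rotations meeting the subgraph. Because $G_x = G_{x'}$, the set of forgotten vertices $F_x = F_{x'}$ and the bag $B_x = B_{x'}$ agree as well, so the constraints defining a rotation subsystem---that forgotten vertices be matched in both $\top$ and $\bot$, that each edge lie in exactly two elements with opposite upper/lower roles, and that pieces alternate---are literally identical requirements in the two cases. Consequently the valid rotation states, which are exactly the restrictions of rotation subsystems to bag vertices, coincide for $x$ and $x'$.

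Finally, I would spell out the equivalence of validity: a state $S$ is valid for $x$ if and only if it extends to a rotation subsystem for $x$, which by the above is the same as extending to a rotation subsystem for $x'$, i.e.\ being valid for $x'$. The value of this lemma is that it lets us dispense with a separate, more intricate analysis of edge nodes whose child already carries a nontrivial subgraph: we may instead route all such cases through the transition rule for join nodes (to be established next), treating the incoming edge as one of the two children. I expect the main subtlety---though it is conceptual rather than computational---to be confirming that the gluing of $G_y$ with the single-edge subgraph along the shared bag vertices $u$ and $v$ does not create any new incidences or overlaps beyond those already present in $G_x$; this follows because the only edge of the second child is $uv$ itself, and its endpoints are exactly the bag vertices shared with $y$, so no interaction occurs at forgotten vertices. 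With that observation in hand, the identification of the two subgraphs and hence of the two validity notions is immediate.
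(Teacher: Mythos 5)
Your proposal is correct and follows the same route as the paper: the paper's proof is exactly the observation that validity of a rotation state depends only on $G_x$ and $B_x$, and that $G_x = G_{x'}$ and $B_x = B_{x'}$ for the two nodes in question. Your write-up simply spells out in more detail why these identifications hold and why validity is determined by that data alone.
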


\begin{proof}
Validity of a rotation state at a node $x$ depends only on $G_x$ and $B_x$, and for the nodes described in the lemma we have $G_x=G_{x'}$ and $B_x=B_{x'}$.
\end{proof}

\begin{lemma}
At a join node $x$ with children $y$ and $z$, a rotation state $S$ is valid if and only if there exist valid states $S'$ for $y$ and $S''$ for $z$ such that:
\begin{itemize}
\item The matchings $\bot$ and $\top$ in $S$ are disjoint unions of the corresponding matchings in $S'$ and $S''$, and the sets of matched vertices in $\bot$ and $\top$ in $S$ are disjoint unions of the corresponding sets of matched vertices in $S'$ and $S''$.
\item The rotations in $S$ are a union (not necessarily disjoint) of the rotations in $S'$ and $S''$, with the partial order on these rotations in $S'$ and $S''$ equalling the restriction to those subsets of the partial order on rotations in $S$.
\item Each piece of a rotation in $S$ is formed as a concatenation of one or more pieces of rotations in $S'$ and $S''$, all pieces of rotations  in $S'$ and $S''$ are used to form pieces of rotations in this way, and at each point of concatenation the two concatenated pieces alternate between an upper edge and a lower edge.
\end{itemize}
\end{lemma}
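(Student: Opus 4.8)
The plan is to prove both directions by translating between rotation states and the rotation subsystems that witness their validity, exploiting the fact that at a join node the associated subgraph decomposes as $G_x = G_y\cup G_z$, where the two subgraphs share exactly the bag vertices $B_x=B_y=B_z$ and no edges. The latter holds because each edge of $G_x$ has a unique edge node lying in exactly one of the two child subtrees, and the forgotten sets $F_y$ and $F_z$ are disjoint from each other and from $B_x$ by the subtree-connectivity property of tree decompositions, so that $V(G_y)\cap V(G_z)=B_x$. The key structural observation that I will use repeatedly is that, as one traverses any piece of a rotation of $G_x$, it can pass between $G_y$ and $G_z$ only at a vertex of $B_x$.

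For the forward direction, suppose $S$ is valid and fix a rotation subsystem $\Sigma$ for $x$ compatible with $S$. I would obtain $\Sigma'$ (respectively $\Sigma''$) by intersecting each ingredient of $\Sigma$ with $G_y$ (respectively $G_z$): restrict $\top$ and $\bot$ to their edges in $G_y$, keep each rotation, and cut each piece at every $B_x$-vertex where it crosses from $G_y$ into $G_z$. By the structural observation these cuts occur only at $B_x$, so each fragment is again a legal piece, either a whole cycle contained in $G_y$ or a path whose endpoints lie in $B_x$. One checks that $\Sigma'$ is a rotation subsystem for $y$: every edge of $G_y$ still lies in exactly two of its elements with opposite polarity (inherited from $\Sigma$), every retained piece still alternates, and every vertex of $F_y$ is still matched in both $\top$ and $\bot$. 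Writing $S'$ and $S''$ for the states of $\Sigma'$ and $\Sigma''$, the three bullet conditions then hold by construction: the matchings and their matched-vertex sets split as disjoint unions because $G_y$ and $G_z$ share no edges and each bag vertex is matched on at most one side; the rotation posets of $S'$ and $S''$ are the restrictions of that of $S$ since the partial order is carried along unchanged; and each piece of $S$ is recovered as the concatenation of the fragments produced from it, the junctions being exactly the $B_x$-crossings, where alternation is inherited from $\Sigma$.

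For the reverse direction, assume valid states $S'$ and $S''$ together with the three gluing conditions, and fix witnessing rotation subsystems $\Sigma'$ for $y$ and $\Sigma''$ for $z$. I would assemble a rotation subsystem $\Sigma$ for $x$ by taking the union of the two top matchings (and likewise the two bottom matchings), identifying the shared rotations according to the poset recorded in $S$, and gluing the pieces of $\Sigma'$ and $\Sigma''$ end-to-end at their common $B_x$ endpoints in the pattern prescribed by the pieces of $S$. The union $\top=(\top\cap G_y)\cup(\top\cap G_z)$ is again a matching precisely because the matched-vertex sets are required to combine as a disjoint union, so no bag vertex is doubly matched; since edges of $G_y$ and $G_z$ never coincide, each edge of $G_x$ still lies in exactly two elements of $\Sigma$ with the polarity it had in its child. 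Compatibility of $\Sigma$ with $S$ follows by restricting $\Sigma$ back to $G_x[B_x]$ and to its $B_x$ endpoints, which returns exactly the data recorded in $S$, closing the equivalence.

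The main obstacle is the gluing-and-ordering step in this reverse direction. Concatenating pieces at shared endpoints must yield simple paths or cycles and must respect alternation: the alternation clause of the third gluing condition guarantees that at each junction the two incident edges have opposite polarity, so the assembled piece alternates globally, but I must also confirm that no unintended identification produces a non-simple piece or conflates two distinct rotations. Here the abstract rotation identities carried by the states, together with the requirement that the child posets be exactly the restrictions of the poset of $S$, force the gluing to be consistent and keep the combined order relation antisymmetric and transitive, so that $\Sigma$ is a genuine rotation subsystem and $S$ is valid.
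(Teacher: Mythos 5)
Your proof takes essentially the same approach as the paper's: the forward direction restricts a witnessing rotation subsystem for $x$ to $G_y$ and $G_z$, and the reverse direction glues witnessing subsystems for $y$ and $z$ back together, which is exactly what the paper does (the paper describes the reverse direction only as combining the subsystems ``in the obvious way''). Your version is correct and in fact supplies more detail than the paper, notably the justification that $G_y$ and $G_z$ share exactly the bag vertices $B_x$ and no edges, and the explicit treatment of alternation and consistency at the concatenation points.
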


\begin{proof}
If $S$ is valid, it is associated with a rotation subsystem, and restricting that subsystem to $G_y$ and $G_z$ produces two subsystems with disjoint matchings $\bot$ and $\top$, each having a subset of the rotations with a restriction of the partial order on rotations, and each having pieces of rotations that are subsets of the pieces of rotations in the subsystem for $x$. It follows that the rotation states $S'$ and $S''$ associated with those restricted rotation subsystems are valid and meet the conditions of the lemma. Conversely, if valid rotation states $S'$ and $S''$ meeting the conditions of the lemma exist, then they each are associated with rotation subsystems for $y$ and $z$ that can be combined together in the obvious way to produce a rotation subsystem for $x$, showing that $S$ is valid.
\end{proof}

\begin{theorem}
\label{thm:cw-alg}
For a graph $G$ with $n$ vertices and carving width $w$, we can test whether $G$ is a graph of stably matchable pairs in time linear in $n$ and exponential in $O(w^4)$.
\end{theorem}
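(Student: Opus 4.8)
The plan is to run a bottom-up dynamic program over a good tree decomposition, computing at each node the set of valid rotation states and relying on the transition lemmas already established for introduce, forget, edge, and join nodes. First I would translate the carving-width bound into the quantities those lemmas depend on: since the carving width $w$ bounds both treewidth and maximum degree from above up to constant factors~\cite{Epp-JGAA-18}, the graph has treewidth $O(w)$ and maximum degree $O(w)$. I would then compute a tree decomposition of width $O(w)$ in time linear in $n$ and single-exponential in $w$~\cite{BodDraDre-SICOMP-16}, and convert it into a good tree decomposition—adding forget, join, and edge nodes—which blows the number of nodes up by only a factor polynomial in $w$ while keeping the time linear in $n$~\cite{DorTel-DAM-09}. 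All of this preprocessing is dominated by the main dynamic program.

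The dynamic program processes the nodes in postorder. At each introduce, forget, or edge node I would iterate over the valid rotation states of the single child and apply the corresponding transition lemma to produce the valid states of the current node, each step taking time polynomial in the state description. At a join node I would iterate over all pairs $(S',S'')$ of valid child states and, for each pair satisfying the compatibility conditions of the join lemma, record every resulting merged state. By the observation that a rotation subsystem at the empty-bag root coincides with a rotation system, together with \autoref{lem:rot-is-stable}, the graph $G$ is a graph of stably matchable pairs if and only if the root node has at least one valid rotation state; so the answer is read off at the root.

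For the time bound, the governing quantity is the number of distinct rotation states at a node. By \autoref{lem:nstates} this is at most exponential in $O((w'd)^2)$, where $w'$ is the decomposition width and $d$ the degree; since both $w'$ and $d$ are $O(w)$ for the carving width $w$, this is $2^{O(w^4)}$ states per node. Introduce, forget, and edge nodes therefore cost $2^{O(w^4)}$ each, while a join node costs at most the square of the state count, which is still $2^{O(w^4)}$, times a per-pair overhead of $2^{O(w^2\log w)}$ that is absorbed into the same bound. Multiplying the per-node cost by the linear number of nodes yields total time linear in $n$ and exponential in $O(w^4)$, as claimed.

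The main obstacle is the join step: a rotation in the merged state is a (not necessarily disjoint) union of rotations of the two children, so I must enumerate all identifications of the children's abstract rotations, verify that their two partial orders glue into a single acyclic partial order, and check that the pieces concatenate at shared bag vertices with the correct alternation between upper and lower edges, yielding only full cycles or paths with both endpoints in the bag. The point that keeps this within budget is that only $O(wd)=O(w^2)$ rotations and pieces can touch the bag, so the number of identification and concatenation patterns to try per pair of child states is at most $2^{O(w^2\log w)}$, which is subsumed by the $2^{O(w^4)}$ state bound; correctness of each combination is exactly the content of the join lemma.
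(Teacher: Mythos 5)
Your proposal is correct and follows essentially the same route as the paper's proof: build a good tree decomposition of width $O(w)$, run the bottom-up dynamic program over valid rotation states using the introduce/forget/edge/join transition lemmas, bound the number of states per node via \autoref{lem:nstates} as $2^{O(w^4)}$, and read the answer off the empty-bag root. Your extra accounting for the join-step merge enumeration (identifications of rotations, gluing of partial orders, concatenation of pieces) is a welcome refinement of a point the paper passes over as ``polynomial time per pair of states,'' but it does not change the approach or the final bound.
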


\begin{proof}
We can find a good tree-decomposition of width $O(w)$ in the stated time bound. A graph with width $O(w)$ has $O(wn)$ edges, so there are $O(wn)$ edge nodes in the tree-decomposition. We may eliminate any join nodes in which one of the two sides of the join has no edges, without change to the validity of the decomposition, leaving a decomposition that also has $O(wn)$ join nodes and (as for any nice tree decomposition) $n$ forget nodes. Removing the forget nodes and join nodes partitions the tree-decomposition into subtrees within which there can be at most $O(w)$ introduce nodes, so there can be at most $O(w^2n)$ introduce nodes total.

Because $G$ has carving width $w$, its treewidth and maximum degree are both $O(w)$, so by \autoref{lem:nstates} the number of states at each node of this decomposition is exponential in $O(w^4)$. By the lemmas above, we can compute the set of valid states at each node by an algorithm that examines each valid state at the child node (when there is one child) or each pair of valid states at the two children (when there are two children), in an amount of time per state or pair of states that is polynomial in $w$. Therefore the total time per node is bounded by this polynomial multiplied by the square of the number of states, which remains exponential in $O(w^4)$.
\end{proof}

\section{Conclusions and open problems}

We have investigated the graphs of stably matchable pairs of stable matching instances, with results including the characterization of special classes of these graphs, characterization of the lattices of stable matchings corresponding to these classes of graphs, the NP-completeness of recognizing graphs of stably matchable pairs in general, and exact algorithms for recognizing these graphs that run in singly exponential or fixed-parameter tractable time.

A natural open problem that remains concerns whether our algorithm for carving width can be extended to treewidth for graphs of unbounded degree. As a first step towards  this, what is the complexity of recognizing series-parallel graphs of stably matchable pairs? Another problem concerns the analysis of numbers of combinations of matchings and subsets of edges in \autoref{lem:exponentials}, which we used to analyze our exponential-time recognition algorithm. This analysis seems unlike to be tight; can it be improved?

\bibliographystyle{plainurl}
\bibliography{smg}

\begin{thebibliography}{10}

\bibitem{AloFri-EJC-08}
Noga Alon and Shmuel Friedland.
\newblock {The maximum number of perfect matchings in graphs with a given
  degree sequence}.
\newblock {\em Electronic Journal of Combinatorics}, 15(1):N13, 2008.
\newblock \href {http://dx.doi.org/10.37236/888} {\path{doi:10.37236/888}}.

\bibitem{Bir-DMJ-37}
Garrett Birkhoff.
\newblock {Rings of sets}.
\newblock {\em Duke Mathematical Journal}, 3(3):443{--}454, 1937.
\newblock \href {http://dx.doi.org/10.1215/S0012-7094-37-00334-X}
  {\path{doi:10.1215/S0012-7094-37-00334-X}}.

\bibitem{Bod-SICOMP-96}
Hans~L. Bodlaender.
\newblock {A linear-time algorithm for finding tree-decompositions of small
  treewidth}.
\newblock {\em SIAM Journal on Computing}, 25(6):1305{--}1317, 1996.
\newblock \href {http://dx.doi.org/10.1137/S0097539793251219}
  {\path{doi:10.1137/S0097539793251219}}.

\bibitem{BodDraDre-SICOMP-16}
Hans~L. Bodlaender, P{\aa}l~Gr{\o}n{\aa}s Drange, Markus~S. Dregi, Fedor~V.
  Fomin, Daniel Lokshtanov, and Micha{\l} Pilipczuk.
\newblock {A $c^kn$ 5-approximation algorithm for treewidth}.
\newblock {\em SIAM Journal on Computing}, 45(2):317{--}378, 2016.
\newblock \href {http://dx.doi.org/10.1137/130947374}
  {\path{doi:10.1137/130947374}}.

\bibitem{DorTel-DAM-09}
Frederic Dorn and Jan~Arne Telle.
\newblock {Semi-nice tree-decompositions: the best of branchwidth, treewidth
  and pathwidth with one algorithm}.
\newblock {\em Discrete Applied Mathematics}, 157(12):2737{--}2746, 2009.
\newblock \href {http://dx.doi.org/10.1016/j.dam.2008.08.023}
  {\path{doi:10.1016/j.dam.2008.08.023}}.

\bibitem{VdS-TCSSX-12}
Vinicius dos Santos.
\newblock {Is there a problem that is easy for cubic graphs but hard for graphs
  with maximum degree 3?}
\newblock Theoretical Computer Science StackExchange, November 24 2012.
\newblock URL: \url{https://cstheory.stackexchange.com/q/14457/95}.

\bibitem{EhrEveTar-JCT-76}
G.~Ehrlich, S.~Even, and R.~E. Tarjan.
\newblock {Intersection graphs of curves in the plane}.
\newblock {\em Journal of Combinatorial Theory}, 21(1):8{--}20, 1976.
\newblock \href {http://dx.doi.org/10.1016/0095-8956(76)90022-8}
  {\path{doi:10.1016/0095-8956(76)90022-8}}.

\bibitem{Epp-JGAA-18}
David Eppstein.
\newblock {The effect of planarization on width}.
\newblock {\em Journal of Graph Algorithms and Applications}, 22(3):461{--}481,
  2018.
\newblock \href {http://dx.doi.org/10.7155/jgaa.00468}
  {\path{doi:10.7155/jgaa.00468}}.

\bibitem{Epp-TALG-18}
David Eppstein.
\newblock {The parametric closure problem}.
\newblock {\em ACM Trans. Algorithms}, 14(1):2:1{--}2:22, 2018.
\newblock \href {http://arxiv.org/abs/1504.04073} {\path{arXiv:1504.04073}},
  \href {http://dx.doi.org/10.1145/3147212} {\path{doi:10.1145/3147212}}.

\bibitem{GalSha-AMM-62}
David Gale and Lloyd~S. Shapley.
\newblock {College admissions and the stability of marriage}.
\newblock {\em American Mathematical Monthly}, 69(1):9{--}14, 1962.
\newblock \href {http://dx.doi.org/10.2307/2312726}
  {\path{doi:10.2307/2312726}}.

\bibitem{Gus-SICOMP-87}
Dan Gusfield.
\newblock {Three fast algorithms for four problems in stable marriage}.
\newblock {\em SIAM Journal on Computing}, 16(1):111{--}128, 1987.
\newblock \href {http://dx.doi.org/10.1137/0216010}
  {\path{doi:10.1137/0216010}}.

\bibitem{GusIrv-89}
Dan Gusfield and Robert Irving.
\newblock {\em {The Stable Marriage Problem: Structure and Algorithms}}.
\newblock Foundations of Computing Series. MIT Press, 1989.

\bibitem{GusIrvLea-JCTA-87}
Dan Gusfield, Robert Irving, Paul Leather, and Michael Saks.
\newblock {Every finite distributive lattice is a set of stable matchings for a
  small stable marriage instance}.
\newblock {\em Journal of Combinatorial Theory, Series A}, 44(2):304{--}309,
  1987.
\newblock \href {http://dx.doi.org/10.1016/0097-3165(87)90037-9}
  {\path{doi:10.1016/0097-3165(87)90037-9}}.

\bibitem{ImmMah-STOC-05}
Nicole Immorlica and Mohammad Mahdian.
\newblock {Marriage, honesty, and stability}.
\newblock In {\em Proceedings of the 16th ACM{--}SIAM Symposium on Discrete
  Algorithms (SODA '05)}, pages 53{--}62, 2005.

\bibitem{IrvLea-SICOMP-86}
Robert Irving and Paul Leather.
\newblock {The complexity of counting stable marriages}.
\newblock {\em SIAM Journal on Computing}, 15(3):655{--}667, 1986.
\newblock \href {http://dx.doi.org/10.1137/0215048}
  {\path{doi:10.1137/0215048}}.

\bibitem{IrvLeaGus-JACM-87}
Robert Irving, Paul Leather, and Dan Gusfield.
\newblock {An efficient algorithm for the {``}optimal{''} stable marriage}.
\newblock {\em J. ACM}, 34(3):532{--}543, 1987.
\newblock \href {http://dx.doi.org/10.1145/28869.28871}
  {\path{doi:10.1145/28869.28871}}.

\bibitem{KarKraWoo-DMTCS-07}
Jan K{\'a}ra, Jan Kratochv{\'\i}l, and David~R. Wood.
\newblock {On the complexity of the balanced vertex ordering problem}.
\newblock {\em Discrete Mathematics {\&} Theoretical Computer Science},
  9(1):193{--}202, 2007.
\newblock URL: \url{https://hal.inria.fr/hal-00966505}.

\bibitem{KarGhaWeb-STOC-18}
Anna~R. Karlin, Shayan~Oveis Gharan, and Robbie Weber.
\newblock {A simply exponential upper bound on the maximum number of stable
  matchings}.
\newblock In Ilias Diakonikolas, David Kempe, and Monika Henzinger, editors,
  {\em Proceedings of the 50th Symposium on Theory of Computing (STOC 2018)},
  pages 920{--}925. Association for Computing Machinery, 2018.
\newblock \href {http://arxiv.org/abs/1711.01032} {\path{arXiv:1711.01032}},
  \href {http://dx.doi.org/10.1145/3188745.3188848}
  {\path{doi:10.1145/3188745.3188848}}.

\bibitem{Knu-97}
Donald~E. Knuth.
\newblock {\em {Stable marriage and its relation to other combinatorial
  problems}}, volume~10 of {\em CRM Proceedings {\&} Lecture Notes}.
\newblock American Mathematical Society, 1997.
\newblock \href {http://dx.doi.org/10.1090/crmp/010}
  {\path{doi:10.1090/crmp/010}}.

\bibitem{KnuMotPit-RSA-90}
Donald~E. Knuth, Rajeev Motwani, and Boris Pittel.
\newblock {Stable husbands}.
\newblock {\em Random Structures {\&} Algorithms}, 1(1):1{--}14, 1990.
\newblock \href {http://dx.doi.org/10.1002/rsa.3240010102}
  {\path{doi:10.1002/rsa.3240010102}}.

\bibitem{Kon-MA-16}
D{\'e}nes K{\H{o}}nig.
\newblock {{\"U}ber Graphen und ihre Anwendung auf Determinantentheorie und
  Mengenlehre}.
\newblock {\em Mathematische Annalen}, 77:453{--}465, 1916.
\newblock \href {http://dx.doi.org/10.1007/BF01456961}
  {\path{doi:10.1007/BF01456961}}.

\bibitem{Man-13}
David~F. Manlove.
\newblock {\em {Algorithmics of Matching under Preferences}}, volume~2 of {\em
  Series on Theoretical Computer Science}.
\newblock World Scientific, 2013.
\newblock \href {http://dx.doi.org/10.1142/8591} {\path{doi:10.1142/8591}}.

\bibitem{McVWil-BIT-70}
David~G. McVitie and Leslie~B. Wilson.
\newblock {Stable marriage assignment for unequal sets}.
\newblock {\em BIT Numerical Mathematics}, 10(3):295{--}309, 1970.
\newblock \href {http://dx.doi.org/10.1007/BF01934199}
  {\path{doi:10.1007/BF01934199}}.

\bibitem{Pic-MS-76}
Jean-Claude Picard.
\newblock {Maximal closure of a graph and applications to combinatorial
  problems}.
\newblock {\em Manag. Sci.}, 22(11):1268{--}1272, 1976.
\newblock \href {http://dx.doi.org/10.1287/mnsc.22.11.1268}
  {\path{doi:10.1287/mnsc.22.11.1268}}.

\bibitem{PitSheVek-SIDMA-07}
Boris Pittel, Lawrence Shepp, and Eugene Veklerov.
\newblock {On the number of fixed pairs in a random instance of the stable
  marriage problem}.
\newblock {\em SIAM Journal on Discrete Mathematics}, 21(4):947{--}958, 2007.
\newblock \href {http://dx.doi.org/10.1137/070696155}
  {\path{doi:10.1137/070696155}}.

\bibitem{Reg-AAAI-94}
Jean{-}Charles R{\'e}gin.
\newblock {A filtering algorithm for constraints of difference in CSPs}.
\newblock In Barbara Hayes-Roth and Richard~E. Korf, editors, {\em Proceedings
  of the 12th National Conference on Artificial Intelligence, Seattle, WA, USA,
  July 31 - August 4, 1994, Volume 1}, pages 362{--}367. AAAI Press and MIT
  Press, 1994.
\newblock URL: \url{https://www.aaai.org/Library/AAAI/1994/aaai94-055.php}.

\bibitem{Rot-Econ-86}
Alvin~E. Roth.
\newblock {On the allocation of residents to rural hospitals: A general
  property of two-sided matching markets}.
\newblock {\em Econometrica}, 54(2):425{--}427, 1986.
\newblock \href {http://dx.doi.org/10.2307/1913160}
  {\path{doi:10.2307/1913160}}.

\bibitem{Sch-STOC-78}
Thomas~J. Schaefer.
\newblock {The complexity of satisfiability problems}.
\newblock In {\em Proceedings of the 10th ACM Symposium on Theory of Computing
  (STOC '78)}, pages 216{--}226, 1978.
\newblock \href {http://dx.doi.org/10.1145/800133.804350}
  {\path{doi:10.1145/800133.804350}}.

\bibitem{Tas-TCS-12}
Tamir Tassa.
\newblock {Finding all maximally-matchable edges in a bipartite graph}.
\newblock {\em Theoretical Computer Science}, 423:50{--}58, 2012.
\newblock \href {http://dx.doi.org/10.1016/j.tcs.2011.12.071}
  {\path{doi:10.1016/j.tcs.2011.12.071}}.

\end{thebibliography}
\end{document}